\newcounter{subeq}
\newtheorem{theorem}{Theorem}[section] 
\newtheorem{corollary}{Corollary}[theorem]
\newtheorem{lemma}[theorem]{Lemma}
\theoremstyle{definition}
\newtheorem{definition}{Definition}[section]
\newtheorem{remark}{Remark}
\newcommand\diff{\mathop{}\!\mathrm{d}}
\begin{document}
\title{An unbiased minimum variance non-parametric analytic and likelihood estimator for discrete and continuous score spaces}
\author{\href{mailto:landon.hurley@yale.edu}{Landon Hurley, PhD}}
\maketitle

\tableofcontents
\begin{abstract}
This manuscript develops a general purpose inner-product norm for the Kendall \(\tau\) and Spearman's \(\rho\), which operates as an unbiased MLE even in the presence of ties. We derive and prove the strict sub-Gaussianity of the Kemeny norm-space, thereby disproving conclusions developed by both \textcite{kendall1948} and \textcite{diaconis1977} as to the nature of the appropriate, finite sample, probability distribution and test statistics. A non-parametric MLE framework for all bivariate pairs is developed, thereby resolving an hypothesis of \textcite{olkin1994} concerning an exponential multivariate distribution for order statistics, by showing that for finite samples, the distribution is non-exponential. Non-parametric linear estimators are also constructed for the polychoric correlations and by extension, a linearly decomposable non-parametric multidimensional linear system of equations for non-parametric Factor Analysis is shown.
\end{abstract}

The Wishart distributions and procedure is the canonical sampling distribution of the maximum likelihood estimator of the covariance matrix of a \(p\)-variate normal distribution. This parametric restriction allows for the construction of finite sample efficient estimators which satisfy the maximum likelihood properties of the Lehmann-Scheff\'{e} theorem. In particular, that of the unbiased minimum variance estimators. Under the generalised linear model, conditional upon certain parametric assertions, the work of \textcite{nelder1972} allows for asymptotic linear functions to hold for non-linear but monotonic scores functions distributed along assuredly true probabilistic characterisations. 

To examine multivariate distributions of non-Gaussian variables over linear functions, we introduce an alternative Frobenius norm-space distance, the \textcite{kemeny1959} metric. The Kemeny metric is linear over every extended real (\(\overline{\mathbb{R}}^{n}\)) valued vector space, whose expectation is shown to be a Riesz-Fr\'{e}chet representation of the conventional median, which is both square-summable and continuous. Further, the probability distribution of said function is, for every finite positive integer \(n\), shown to be strictly sub-Gaussian, monotonically convergent,  and to satisfy both the Gauss-Markov and Lehmann-Scheff\'{e} theorems under the additional assumption of a common finite population which is uniformly sampled. This is extended to show that each bivariate correlation coefficient has a Beta-Binomial distribution for all finite \(n\).

Using this framework, we develop the conventional bivariate non-parametric estimators: \textcite{spearman1904} \(\rho\) and \textcite{kendall1938} \(\tau\) correlations and correct for the non-stochastically measurable (systematic) errors in the presence of ties. This correction is achieved by ignoring the symmetric group of order \(n\), \(\mathrm{S}_{n}\), and instead focusing upon the non-constant permutation space with repetitions, or ties, \(\mathcal{M}\) of order \(n^{n}-n\). Despite the familiarity to almost every statistical student at or beyond the high school level, the exact relationship of these alternative correlations and the definition of a probability distributions for the rejection of the Neyman-Pearson Null Hypotheses, are less clearly understood, especially for finite samples.  Traditionally, these techniques have relied upon combinatorial approximations of the population distribution to avoid the necessity of a bivariate normal distribution, and instead focusing upon the relative ranking within a sample. However, the restriction to the finite permutation space upon \(\mathrm{S}_{n}\) space results in additional computational complexities, due to the restrictive removal of the possibility of ties. In both multivariate analyses and their surjective linear functional relationships upon a sample, the presence of ties result upon finite measures excludes the possibility of a unique probability distribution. 

We develop in place a statistical framework based upon the Kemeny distance, which can be easily shown to possess a number of highly desirable properties for both the Spearman's \(\rho\) and Kendall's \(\tau\) parameter estimates. These include belonging to a Frobenius norm-space with cross-product for the estimation of both the Euclidean and Kemeny metric spaces upon the extended reals, satisfaction of the necessary properties of a Bregman divergence, and producing an unbiased continuous linear estimator of the median, subject only to the assumption of a uniformly sampled common population of variate score values. A necessary consequence of the strict sub-Gaussianity is given in that asymptotic normality does not guarantee stability: certain test distributions used for rank distributions almost surely do not follow a \(t\) distribution nor a unit normal distribution for finite samples. In both scenarios, the inappropriate use of test statistics which follow these expectations result in both invalid statistical acceptances and rejections of the null hypotheses inconsistent with the expectations of the Cram\'{e}r-von Mises minimum variance estimators. We demonstrate how to correct these errors, and specify the explicit necessary distributions with the Beta-Binomial distribution. 

These techniques are further developed to address the almost surely positive definite structure and duality of the Kemeny and Euclidean metric spaces, and we conclude by showing that the positive definite structure upon the dual Kemeny space can be used to solve linear system of equations upon the Euclidean metric. Applications of this work have been tested to demonstrate a valid Expectation-Maximisation algorithm, clustering, and multiple linear regression, with strong guaranteed performance even for misspecified linear Euclidean models in multiple regression.  

\section{Defining the Kemeny distribution and estimation functions}
Let us assume the existence of a sample matrix \(X \in \overline{\mathbb{R}}^{n \times p}\), indexed for \(i = 1,\ldots,n\) and \(j = 1,\ldots,p\) with restriction that \(n \gg p\). For such a vector space, we propose the following distance function, constructed upon the permutation space basis \(\kappa: \overline{\mathbb{R}}^{n} \to n \times n\), which is indexed \(k,l = 1,\ldots,n\) for each data matrix column \(j\): 

\begin{subequations}
\begin{equation}
\label{eq:kem_dist}
\rho_{\kappa}(X,Y) = \frac{n^{2}-n}{2} + \sum_{k,l=1}^{n} \kappa(X)_{kl}\odot\kappa(Y)^{\intercal}_{kl},
\end{equation}
\begin{equation}
\label{eq:kem_score}
\kappa_{kl}(X) = { 
\begin{dcases}
\: \sqrt{.5} & \text{if } x_{k} > x_{l}\\
\: -\sqrt{.5} & \text{if } x_{k} < x_{l}\\
\: 0 & \text{if } x_{k} = x_{l},\\
\end{dcases}
}\, k,l = 1,\ldots,n.
\end{equation}
\end{subequations}
The \(\kappa\) function maps an extended real vector onto a skew-symmetric matrix of order \(n \times n\), \(\kappa: \overline{\mathbb{R}}^{n} \to (n \times n)\). Each entry in said \(\kappa\) matrix is denoted by an entry in the \(k^{th}\) row and \(l^{th}\) column by element \(\kappa_{kl}\), upon which is performed the Hadamard product (\(\odot\)) between two such matrices.  The distance calculated by the linear combination of the pairwise distance orderings, i.e., the permutation representation, of any two elements upon vector \(X_{i}\), allows for permutations with ties (a space of density \(n^{n}-n\) rather than \(n!\)). The matrix transpose is denoted by the superscript \((\cdot)^{\intercal}\). 

As constructed, the expectation of the distance function is therefore at fixed point \(\frac{n^{2}-n}{2}\), which results from the inner-product of 0 linearly combined with the population expected distance. Using these specific positive real values for the \(\kappa\) matrix scoring, we obtain a linear distance for which no rescaling proportional to \(n\) is necessary, thereby allowing us to define the Hilbert space construction for arbitrary point of origin, without loss of generality. This satisfies the isometric embedding characterisation of any Hilbert space. 

Also, we note the immediate connection that with a maximum distance of \(\pm \frac{(n^{2}-n)}{2}\) obtainable upon all \(n^{n}\) permutations with ties. Dividing the centred distance Kemeny distance by the maximum distance results in a standard correlation like measure, wherein a distance of 0 denotes a correlation of 1, and the maximum distance produces a measure of -1. This definition is explicitly conformant with the vector space structure of a skew-symmetric matrix, where the inner-product is defined as follows: \[\langle{Ax,y}\rangle = -\langle{\kappa(X),A\kappa^{\intercal}(Y)\rangle}, \forall X,Y \in \overline{\mathbb{R}}^{n},\] and from which directly follows the standard inner-product vector norm correlation for the Kemeny sample space for the population:
\begin{equation}
\label{eq:kem_cor}
\langle X,Y\rangle_{\kappa} = \tau_{\kappa}(X,Y) = -\frac{2}{n^{2}-n}\sum_{k=1}^{n}\sum_{l=1}^{n} \kappa(X)_{kl}\odot\kappa^{\intercal}(Y)_{kl}, \{X,Y\} \in \overline{\mathbb{R}}^{n \times 1}.
\end{equation}
Under this construction, there are \(\binom{n}{2}\) distinct choices for equation~\ref{eq:kem_score}, equivalent to the \(\frac{n^{2}-n}{2}\) degrees of freedom for any skew-symmetric matrix; this therefore allows us to trivially normalise the signed distance measure to that of a correlation coefficient, entirely consistent with the standard definition of an inner-product norm upon a skew-symmetric vector space (equation~\ref{eq:kem_cor}). Thus, it is shown that the bivariate extended real vector space of length \(n\) is computationally achievable, with ties, upon an inner-product operator (satisficing the definition of a Hilbert space).  

\begin{lemma}
\label{lem:unbiased}
The Kemeny correlation (equation~\ref{eq:kem_cor}) is an unbiased estimator.
\end{lemma}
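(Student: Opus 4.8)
The strategy is to strip the estimator down to a bilinear form over unordered index pairs, recognise it as a U-statistic of degree two with a bounded symmetric kernel, and then derive unbiasedness from linearity of expectation together with the pairwise exchangeability induced by uniform sampling from a common population. First I would use the skew-symmetry $\kappa(Y)^{\intercal}_{kl} = \kappa(Y)_{lk} = -\kappa(Y)_{kl}$ to rewrite \ref{eq:kem_cor} as $\tau_{\kappa}(X,Y) = \binom{n}{2}^{-1}\sum_{k<l} s_{kl}(X)\, s_{kl}(Y)$, where $s_{kl}(\cdot) = \sqrt{2}\,\kappa_{kl}(\cdot)\in\{-1,0,1\}$ is the pairwise concordance score of \ref{eq:kem_score} (so $s_{kl}(X)s_{kl}(Y) = \mathrm{sgn}(x_k-x_l)\,\mathrm{sgn}(y_k-y_l)$): the diagonal terms $k=l$ vanish, each off-diagonal unordered pair is counted twice with the two sign reversals cancelling, and the constants collapse because $\binom{n}{2}=(n^{2}-n)/2$. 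Hence the statistic is exactly the order-two U-statistic with the symmetric bounded kernel $h\big((x,y),(x',y')\big) = \mathrm{sgn}(x-x')\,\mathrm{sgn}(y-y')$.

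Next I would define the population Kemeny correlation $\tau_{\kappa}^{\ast}$ by the identical formula applied to the finite common population --- equivalently, as $\mathbb{E}\!\left[s_{kl}(X)\,s_{kl}(Y)\right]$ for a single pair drawn uniformly, with ties scored $0$ (i.e.\ the functional over the permutation-with-ties space $\mathcal{M}$, not over $\mathrm{S}_{n}$, and with the $\binom{n}{2}$, $\tau_a$-type normalisation rather than a ties-deflated denominator). Under the stated hypothesis that the $n$ rows of $X$ are drawn uniformly from that population, every within-sample pair $\big((X_{k},Y_{k}),(X_{l},Y_{l})\big)$ is marginally a uniformly chosen population pair, so each of the $\binom{n}{2}$ summands has expectation $\tau_{\kappa}^{\ast}$, and linearity of expectation gives $\mathbb{E}[\tau_{\kappa}(X,Y)] = \tau_{\kappa}^{\ast}$. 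No integrability caveat is needed, since $|s_{kl}(X)\,s_{kl}(Y)|\le 1$. As an immediate specialisation, when the two coordinates are independent one gets $\tau_{\kappa}^{\ast}=0$: pairwise exchangeability of $(k,l)$ forces $\mathbb{E}[s_{kl}(X)] = \mathbb{E}[s_{lk}(X)] = -\mathbb{E}[s_{kl}(X)] = 0$, and independence lets the expectation of the product factor through the marginals.

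The main obstacle is not analytic but definitional: one must pin down exactly which population functional is being estimated so that the target is literally the same object on the sample and population sides --- in particular committing to the space $\mathcal{M}$ and to the degree-two U-statistic normalisation --- and then verify that a ``uniformly sampled common population'' genuinely delivers marginal exchangeability of the within-sample pairs (automatic for sampling with replacement; for sampling without replacement it requires only that the pair-marginal be the uniform pair distribution). Once that is fixed, the skew-symmetry reduction plus the U-statistic identity make unbiasedness a one-line consequence of linearity of expectation. I would also flag, but not prove here, that the sharper assertions advertised in the introduction --- strict sub-Gaussianity of the Kemeny law and the Lehmann--Scheff\'{e} minimum-variance property --- require genuinely more work and are established separately.
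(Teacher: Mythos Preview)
Your argument is correct, but it proceeds along a genuinely different line from the paper's own proof. The paper establishes unbiasedness only in the null sense: it fixes a finite $n$, averages the inner product $\sum_{k,l}\kappa(X_m)\odot\kappa^{\intercal}(Y_m)$ over the full permutation-with-ties population $\mathcal{M}$, and argues by an even-function / telescoping-symmetry observation (every permutation at signed distance $+a$ is paired with one at $-a$) that this average is exactly $0$. In other words, the paper proves $\mathbb{E}_{\mathcal{M}}[\tau_{\kappa}]=0$ directly from the skew-symmetry of $\kappa$, with the base case $n=2$ and an inductive appeal to closure of skew-symmetric matrices under addition.

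Your route is both more standard and strictly stronger: by collapsing the double sum to the degree-two U-statistic $\binom{n}{2}^{-1}\sum_{k<l}\mathrm{sgn}(x_k-x_l)\,\mathrm{sgn}(y_k-y_l)$ you obtain unbiasedness for the \emph{population} Kemeny correlation $\tau_{\kappa}^{\ast}$, not merely for zero, and you recover the paper's conclusion as the specialisation to independent coordinates via $\mathbb{E}[s_{kl}(X)]=0$. The trade-off is exactly the one you flag: your argument requires committing explicitly to a population functional and to the exchangeability delivered by uniform sampling, whereas the paper's symmetry argument is self-contained within $\mathcal{M}$ but says nothing about estimating a non-zero target. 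Your caution about the definitional ambiguity (which functional, which sampling scheme, $\tau_a$-type denominator) is well placed and is in fact sharper than what the paper makes explicit.
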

\begin{proof}
For the space of the Kemeny correlation, observe that upon the population \(\mathcal{M}\) is countable for all finite \(n\), that the inner-product of the even function space is the distance 0, and equivalently the correlation is 0. For \(n=2\), the space of \(\mathcal{M} = 2^{2} - 2 = 2\) permutations possess symmetric distances of \(\{\pm{a}\}\), which sum to 0, thereby presenting an unbiased estimator. By induction, allow a finite telescoping sequence of \(n_{i} \in \mathcal{M}\), from which follows both the finite expectation of 0 (by the closure under addition for the skew-symmetric matrix of equation~\ref{eq:kem_score} in the Kemeny metric space) and also the symmetry of the telescoping positive and negative distances, which arise by the even function nature of the \(\kappa\) function. Then for any finite \(n\), the sum and inner product of two random variables \(X,Y\) of length \(n\) in the population \(\mathcal{M}\) indexed by \(m\) is \[\lim_{m\to \mathcal{M}}E_{m}(\tau_{X,Y}) = \int_{m=1}^{\mathcal{M}}\sum_{k,l = 1}^{n} \kappa(X_{m})\odot\kappa^{\intercal}(Y_{m}) \diff{m}= 0,\] and thereby completes the proof that the Kemeny correlation is an unbiased estimator.
\end{proof}


We now show that the Kemeny correlation estimator also satisfies the Gauss-Markov theorem, thereby providing a best linear unbiased estimator, which satisfies the Lehmann-Scheff\'{e} theorem, defined as follows: 

\begin{definition}
\label{def:gauss_markov}
The Gauss Markov theorem defines a set of requirements which when satisfied guarantee that the ordinary least squares estimate for regression coefficients provide the best linear unbiased estimate (BLUE) possible point estimates upon a sample. The five Gauss Markov conditions are:
\begin{multicols}{2}
\begin{enumerate}
    \item{Linearity: the parameters we are estimating using the OLS method must be themselves linear.}
    \item{Random: our data must have been randomly sampled from the population.}
    \item{Non-Collinearity: the regressors being calculated are not perfectly correlated.}
    \item{Exogeneity: the regressors aren't correlated with the error term.}
    \item{Homoscedasticity: no matter what the values of our regressors might be, the error of the variance is constant.}
\end{enumerate}
\end{multicols}
\end{definition}

\begin{theorem}
\label{thm:gauss-markov}
The Kemeny correlation is a Gauss-Markov estimator for any bivariate vector pair of length \(n\) which are independently sampled from a common population.
\end{theorem}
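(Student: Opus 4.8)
The plan is to verify, one at a time, each of the five Gauss--Markov conditions of Definition~\ref{def:gauss_markov} for the bivariate regression $\kappa(Y) = \beta\,\kappa(X) + \varepsilon$ carried out in the Kemeny inner-product space, and then to invoke the classical Gauss--Markov theorem to obtain the BLUE conclusion. The structural observation that makes this work, already established in the construction preceding equation~\ref{eq:kem_cor}, is that $\kappa$ isometrically embeds $\overline{\mathbb{R}}^{n}$ into a Hilbert space of skew-symmetric matrices and that $\tau_{\kappa}$ is a genuine inner-product correlation on that space; this is what reduces the non-parametric, tie-laden problem to a linear one to which the theorem applies.

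\emph{Linearity and non-collinearity.} First I would note that, post-embedding, $\tau_{\kappa}(X,Y)$ is a bilinear form in $\kappa(X)$ and $\kappa(Y)$ by equation~\ref{eq:kem_cor}, so the regression coefficient it induces is a linear functional of the observed (embedded) data, giving condition~1. For condition~3, since the sample is drawn from $\mathcal{M}$ --- the permutation space with ties of order $n^{n}-n$, which by construction excludes the $n$ constant vectors --- the embedded regressor $\kappa(X)$ is never the zero matrix, hence its Kemeny variance is strictly positive and, in the single-regressor bivariate case, collinearity cannot occur.

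\emph{Randomness and exogeneity.} Condition~2 is exactly the hypothesis of the theorem. For condition~4 I would appeal to the defining normal equations of the least-squares projection in a Hilbert space: the residual $\varepsilon = \kappa(Y) - \beta\,\kappa(X)$ is, by the projection theorem (equivalently, by the Riesz--Fr\'{e}chet representation invoked earlier for the Kemeny expectation), orthogonal to $\kappa(X)$ with respect to $\langle\cdot,\cdot\rangle_{\kappa}$, which is precisely the statement that the regressor is uncorrelated with the error in the Kemeny metric; that the error has mean zero is Lemma~\ref{lem:unbiased} applied to the centred pair.

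\emph{Homoscedasticity.} The remaining --- and, I expect, the only genuinely delicate --- step is condition~5. Here I would use that $\mathrm{S}_{n}$ acts on $\mathcal{M}$ by simultaneously relabelling the indices $k,l$ of the $\kappa$ matrices, and that this action is by isometries of the Kemeny metric (it permutes rows and columns in tandem and so preserves both $\rho_{\kappa}$ and $\langle\cdot,\cdot\rangle_{\kappa}$). Because $X$ and $Y$ are drawn from a \emph{common} population that is uniformly sampled, conditioning on any particular value of the regressor and averaging the action over $\mathrm{S}_{n}$ shows the conditional law of $\varepsilon$ to be invariant up to isometry as the regressor ranges over its support; isometries preserve second moments, so $\operatorname{Var}(\varepsilon \mid \kappa(X))$ is constant. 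The subtlety is making this orbit/averaging argument precise in the presence of ties --- the relevant orbits in $\mathcal{M}$ are the level sets of the tie pattern rather than full $\mathrm{S}_{n}$-orbits --- and this is exactly where the replacement of $\mathrm{S}_{n}$ by $\mathcal{M}$, together with the strict sub-Gaussianity of the Kemeny distribution (which supplies the finite, uniformly bounded variance needed for the second moments to exist), does the work. With all five conditions established, the classical Gauss--Markov theorem of Definition~\ref{def:gauss_markov} yields that the Kemeny correlation is the best linear unbiased estimator, completing the proof.
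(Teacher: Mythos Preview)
Your proposal follows the same skeleton as the paper's proof: both arguments tick off the five Gauss--Markov conditions of Definition~\ref{def:gauss_markov} in turn, and for linearity, randomness, and exogeneity you invoke essentially the same ingredients (the Hilbert/inner-product structure of equation~\ref{eq:kem_cor}, the sampling hypothesis, and the Riesz--Fr\'{e}chet/projection theorem, respectively, with Lemma~\ref{lem:unbiased} for the mean-zero error).

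The two places where you diverge are non-collinearity and homoscedasticity. For non-collinearity the paper appeals to Gramian positive definiteness via the Mercer condition and the total boundedness of Lemma~\ref{lem:kem_bounded}, whereas you argue directly that the exclusion of the $n$ constant vectors from $\mathcal{M}$ forces $\kappa(X)\neq 0$ and hence positive Kemeny variance; your argument is more concrete for the bivariate case, the paper's is phrased to cover the $p$-variate Gramian at once. For homoscedasticity the paper's proof is essentially axiomatic --- it asserts that a single common population admits a single permutation set and hence a single Kemeny variance $\sigma_{\kappa}^{2}$, so constancy is immediate from the hypothesis --- while you supply a considerably more elaborate mechanism via the isometric $\mathrm{S}_{n}$-action on $\mathcal{M}$ and an orbit-averaging argument, buttressed by strict sub-Gaussianity for the existence of second moments. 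Your route is more constructive and makes the role of ties explicit (even if, as you note, the orbit structure in the presence of ties is not fully pinned down); the paper's route is shorter but leans entirely on the ``common population'' clause doing the work. Neither argument is wrong, and they reach the same conclusion; the paper simply treats condition~5 as a near-tautology of the sampling assumption rather than something requiring a symmetry argument.
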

\begin{proof}
The linear function space follows by definition from the existence of a Banach norm space, which subsumes the properties of the Hilbert space as shown in equation~\ref{eq:kem_cor} (see also Theorem~\ref{lem:hilbert}). Unbiasedness follows from Lemma~\ref{lem:unbiased}, and Gramian positive definiteness follows from either the satisfaction of the Mercer condition, or as the finite sum of the squared totally bounded positive variances of the Kemeny metric space (Lemma~\ref{lem:kem_bounded}; \cite{schoenberg1938}); both conditions are equivalent. Exogeneity follows from the Riesz representation theorem for any Hilbert space, and random sampling follows by axiomatic assumption of the theorem. The homoscedasticity assertion follows by definition of the Kemeny variance \(\sigma^{2}_{\kappa}\), as there may only exist one set of permutations upon a common population, for a common population function, which is again true by axiomatic assumption. This completes the proof that the Kemeny correlation satisfies all necessary requirements of a Gauss-Markov estimator.
\end{proof}

The non-negative definiteness of each variable is given by the square of a skew-symmetric matrix (whose transpose provides a negation which is then squared), with 0 obtained only upon constant valued vectors, which would correspond to a degenerate distribution, and may therefore be otherwise excluded. The sum of any such sequence of \(n^{2}-n\) such numbers, as the skew-symmetric matrix imposes a diagonal of \(n\) 0's, multiplied by \(\sqrt{0.5}^{2}\) must be in the interval \((0,\frac{n^{2}-n}{2}]\).  

Assume without loss of generality that the extended real vector space of \(p\) variates are expressible upon a positive definite variance-covariance matrix \(\Xi_{p\times p}\). Given the definition of a positive finite variance for any non-constant random variable, along with the compact support of the distance measure \([-\frac{n^{2}-n}{2},\frac{n^{2}-n}{2}]\), we have proven that the distribution is sub-Gaussian \parencite[Ch.~1]{buldygin2000}. 
The $n$ maximum values upon the Kemeny variance measure 
\begin{equation}
\label{eq:kem_variance}
\sigma_{\kappa}^{2}(X) = \sum_{k=1}^{n}\sum_{l=1}^{n} \kappa_{kl}(X)\kappa_{kl}(X) \equiv \sum_{k=1}^{n}\sum_{l=1}^{n}\kappa_{kl}^{2}(X)
\end{equation}
thus defined are realised upon the domain sub-space of a real sequence of numbers upon which is observed no ties as $a_{ij} \ne 0, \text{ for } i \ne j$. There would naturally therefore be $n!$ such occurrences, which are indeed observed. This is concordant with the claim that the maximum variance will be observed when all elements $n$ upon $x$ are observed to occur uniquely, such that no duplication occurs. Therefore, an interesting corollary is noted, wherein the Kemeny distance and correlation functions are shown to be related to the Kendall $\tau$ distance and correlation. 

 
 \begin{corollary}
 \label{cor:density}
 Assume that for $n>0$, the density of the permutation spaces for the respective measures are $\mathcal{M} = n^{n}$ for the Kemeny metric space, and $\mathcal{M}^{\prime} = {n}!$ for the Kendall metric space, where $\mathcal{M}^{\prime} \subset \mathcal{M}$. It would then follow that the number $\mathcal{M}$ must be well represented for the Stirling approximation of factorials \[{\sqrt {2\pi n}}\ \left(\frac {n}{e}\right)^{n}e^{\frac {1}{12n+1}} < {n}! < {\sqrt {2\pi n}}\ \left({\frac {n}{e}}\right)^{n}e^{\frac{1}{12n}},\] or else \(\mathcal{M}^{\prime} \subset \mathcal{M}\).\end{corollary}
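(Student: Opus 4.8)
The plan is to prove the strict containment $\mathcal{M}^{\prime}\subsetneq\mathcal{M}$ outright and then to read the displayed Robbins--Stirling inequalities as the sharp quantitative record of how sparse the tie-free space $\mathrm{S}_{n}$ sits inside the full length-$n$ sequence space. First I would dispose of the set-theoretic half: every element of $\mathrm{S}_{n}$ is a bijection $\{1,\ldots,n\}\to\{1,\ldots,n\}$, hence in particular a map $\{1,\ldots,n\}\to\{1,\ldots,n\}$, and for $n\geq 2$ it is non-constant; so $\mathcal{M}^{\prime}$ sits inside the space of (non-constant) length-$n$ strings over $n$ symbols, which is exactly $\mathcal{M}$ as defined above. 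No analysis is needed for $\mathcal{M}^{\prime}\subseteq\mathcal{M}$; it is pure counting of functions, and it matches the cardinalities $n!$ and $n^{n}$ asserted in the hypothesis.

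For strictness I would compare cardinalities using the upper bound quoted in the statement. Dividing $n! < \sqrt{2\pi n}\,(n/e)^{n}e^{1/(12n)}$ through by $n^{n}$ gives $n!/n^{n} < \exp\!\big(\tfrac12\ln(2\pi n)-n+\tfrac1{12n}\big)$, whose exponent has $n$-derivative $\tfrac1{2n}-1-\tfrac1{12n^{2}}$, negative for every $n\geq 1$, so the exponent is strictly decreasing; since it is already negative at $n=2$, the right-hand side lies strictly below $1$ for all $n\geq 2$, indeed with a margin large enough that $n!<n^{n}-n$ as well. Hence $\mathcal{M}^{\prime}\subsetneq\mathcal{M}$: there genuinely exist length-$n$ score vectors exhibiting ties, the very regime that $\mathrm{S}_{n}$ cannot represent and that the Kemeny scoring of equation~\ref{eq:kem_score} is built to absorb.

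Finally I would close the logical loop in the statement. The displayed two-sided bound is the Robbins refinement of Stirling's formula, valid unconditionally for every positive integer $n$, so the first disjunct is always available; the computation above shows the second disjunct $\mathcal{M}^{\prime}\subsetneq\mathcal{M}$ is likewise true for every $n\geq 2$; and letting $n$ grow, the ratio $n^{n}/n!\sim e^{n}/\sqrt{2\pi n}\to\infty$ records that $\mathrm{S}_{n}$ is exponentially negligible inside $\mathcal{M}$, which is the corollary's quantitative content. I do not expect a genuine obstacle here: the only point demanding any care is confirming that the Robbins inequalities hold across the entire range $n\geq 1$ rather than merely asymptotically, so that $n!<n^{n}$ is uniform and the small cases $n=2,3$ are covered by the same estimate rather than checked ad hoc.
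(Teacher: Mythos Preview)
Your argument is correct and rests on the same engine as the paper's---a Stirling-type comparison of $n!$ against $n^{n}$ after passing to logarithms---but the organisation differs. The paper argues by contradiction: it supposes $n^{n}\le n!$, feeds this into the Stirling bounds, and reduces to the false inequality $2n<\log(2\pi n)$. You instead prove $n!<n^{n}$ directly from the \emph{upper} Robbins bound by showing $\tfrac12\ln(2\pi n)-n+\tfrac{1}{12n}<0$ for $n\ge2$, and you preface this with the purely set-theoretic inclusion $\mathrm{S}_{n}\hookrightarrow\{1,\ldots,n\}^{n}$ that the paper leaves implicit. Your route is slightly cleaner (no reductio, correct choice of which Stirling inequality to invoke) and buys the asymptotic sparsity statement $n^{n}/n!\sim e^{n}/\sqrt{2\pi n}$ for free; the paper's route is a touch shorter once the contradiction is set up. Either way the content is the same elementary cardinality estimate.
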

 \begin{proof}
 Let $M \subseteq M^{\prime}$, assuming that the $\tau$-distance space is of greater or equal density to the Kemeny $\rho_{\kappa}$-space. If this were so, it would therefore follow that
 \begin{equation*}
 \begin{split}
 n^{n} & < {\sqrt {2\pi n}}\ \left({\frac {n}{e}}\right)^{n}e^{\frac {1}{12n+1}}\\
n\log(n) & < \frac{1}{2} \log(2\pi n) + n (\log(n) - \log(e)) + \big(\log(1) - \log(12n +1)\big)\log(e) < n! \\
& \hspace{1cm} <  n\log(n)  < \frac{1}{2} \log(2\pi n) + n (\log(n) - \log(e)) + \big(\log(1) - \log(12n)\big)\log(e)\\
n\log(n) & < \frac{1}{2} \log(2\pi n) + n (\log(n) - 1) + \big(0 - \log(12n +1)\big) < n! \\ 
& \hspace{1cm}<  \frac{1}{2} \log(2\pi n) + n (\log(n) - 1) + \big(0 - \log(12n)\big)\\
n\log(n) & < \frac{1}{2} \log(2\pi) + n ( \log(n) + \log(n) - 1) \equiv (\log(n)) + \big(0 - \log(12n +1)\big)\\
 n^{n} & < \sqrt{2\pi n} \cdot (\frac{n}{e})^{n}\\
 n\log(n) & < \log(2\pi n) + n\big(\log(n) - \log(e)\big)\\
 n \log(n) & < \frac{1}{2} \log\big(2\pi n\big) + n \log(n) - n\log(e)\\
 0 & < \frac{1}{2}\log\big(2\pi n) - n\\
 2n & < \log\big(2\pi n\big)\\
 \end{split}
 \end{equation*}
This conjecture is false by contradiction, as there is no $n\in \mathbb{N}^{+}$ by which $2n - \log(n) < \log(2\pi)$. It follows then that $M^{\prime} \subset M$, and therefore that all measurements upon the Kendall distance are a strict subset of the Kemeny distance, $\tau(X,Y) \subset \rho_{\kappa}(X,Y)$.  
 \end{proof}
 
 \begin{corollary}
 The linear proportionality of the two measures is found upon the subset $X$ for which $\sigma_{\kappa}^{2} = \frac{n^{2}-n}{2}$, the set of $n!$ such permutations without ties. For this set, observe that $\tau(X,Y) \in [0,\frac{n^{2}-n}{4}]$, whereas $\rho_{\kappa}(x,y) \in [0,\frac{n^{2}-n}{2}]$, and that for each permutation, $\tau(x,y) = \frac{1}{2}\rho_{\kappa}(x,y)$. \end{corollary}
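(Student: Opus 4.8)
The plan is to restrict all computations to the sub-collection $\mathcal{S}\subset\mathcal{M}$ of length-$n$ extended-real vectors whose entries are pairwise distinct — the $n!$ rankings without ties — and to exploit that on $\mathcal{S}$ the score matrix $\kappa(X)$ is \emph{saturated}: every off-diagonal entry equals $\pm\sqrt{0.5}$ and none vanishes. First I would read off the variance claim directly from this saturation: in equation~\ref{eq:kem_variance} each of the $n^{2}-n$ off-diagonal summands equals $(\pm\sqrt{0.5})^{2}=\tfrac12$ while the $n$ diagonal entries are $0$, so $\sigma_{\kappa}^{2}(X)=\tfrac{n^{2}-n}{2}$ holds identically on $\mathcal{S}$ — exactly the maximal value identified in the discussion preceding Corollary~\ref{cor:density}. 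Since $\mathcal{S}$ is a \emph{proper} subset of $\mathcal{M}$ by that corollary, this is precisely the regime in which the restriction is forced.

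Next I would evaluate $\rho_{\kappa}(X,Y)$ on $\mathcal{S}$ by a concordant/discordant bookkeeping. Because there are no ties, the off-diagonal index pairs partition into the pairs on which $X$ and $Y$ agree in order (concordant, count $c$) and the pairs on which they disagree (discordant, count $d$), with $c+d$ equal to the full off-diagonal tally. Using skew-symmetry, $\kappa(Y)^{\intercal}=-\kappa(Y)$, each concordant pair contributes $-\tfrac12$ and each discordant pair $+\tfrac12$ to the cross term $\sum_{k,l}\kappa(X)_{kl}\odot\kappa(Y)^{\intercal}_{kl}$ of equation~\ref{eq:kem_dist}; combined with the fixed offset $\tfrac{n^{2}-n}{2}$ and with $c+d=n^{2}-n$, this collapses $\rho_{\kappa}(X,Y)$ to an affine function of the discordance count alone — in the paper's normalization, to the discordant-pair count itself. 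It then follows that $\rho_{\kappa}(X,Y)=0$ at $X=Y$ (all pairs concordant), that $\rho_{\kappa}(X,Y)$ reaches its maximum $\tfrac{n^{2}-n}{2}$ at the order-reversal of $X$ (all pairs discordant), and that every intermediate value in $[0,\tfrac{n^{2}-n}{2}]$ is attained.

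Finally I would compare against the Kendall $\tau$-distance. On $\mathcal{S}$ the Kendall distance has the same $\kappa$-type representation but with the classical unit sign scores $\pm1$ replacing $\pm\sqrt{0.5}$, so the Kendall cross term carries weight $1$ per pair against the Kemeny weight $(\sqrt{0.5})^{2}=\tfrac12$; equivalently the Kemeny metric charges exactly half the per-pair penalty. Running the same offset-and-range argument through this factor gives $\tau(X,Y)\in[0,\tfrac{n^{2}-n}{4}]$ together with $\tau(X,Y)=\tfrac12\rho_{\kappa}(X,Y)$ for every $X,Y\in\mathcal{S}$, while outside $\mathcal{S}$ the presence of ties makes $\sigma_{\kappa}^{2}<\tfrac{n^{2}-n}{2}$ and destroys the exact proportionality — which is why the statement is confined to the $n!$ untied permutations.

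The step I expect to be the main obstacle is the normalization bookkeeping in the second paragraph: making the constant $\tfrac{n^{2}-n}{2}$, the weights $\tfrac12=(\sqrt{0.5})^{2}$, the ordered-versus-unordered counting implicit in the double sum, and the transpose sign all conspire to place the endpoints at exactly $0$ and $\tfrac{n^{2}-n}{2}$ and to make the Kemeny-to-Kendall ratio exactly $\tfrac12$; fixing the precise convention for the Kendall $\tau$-distance being referenced is part of that same care.
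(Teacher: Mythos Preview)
Your approach is correct and more systematic than the paper's, but the route is genuinely different. The paper argues via the \emph{minimal increment}: it considers a single adjacent transposition $x=\{a,b\}\to y=\{b,a\}$, observes that the Kendall $\tau$-distance registers this as $1$ while the Kemeny $\rho_{\kappa}$-distance registers it as $2$ (because the Kemeny metric must reserve the unit step for the intermediate tied state, which the Kendall metric cannot see), and then extends to all untied permutations --- essentially a generator argument on $S_{n}$, since adjacent transpositions generate the symmetric group and both distances are additive over swaps. Your concordant/discordant decomposition instead computes both distances \emph{globally} as affine functions of the discordance count and reads off the constant ratio directly from the per-pair weights $(\sqrt{0.5})^{2}=\tfrac12$ versus $1$. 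Your method is more self-contained, makes the range endpoints explicit, and handles the variance claim as a by-product; the paper's is terser but leaves the passage from a single swap to the whole of $S_{n}$ implicit. Your closing caveat about normalization bookkeeping is well-taken: the paper's stated range $[0,\tfrac{n^{2}-n}{2}]$ for $\rho_{\kappa}$ in the corollary does not sit cleanly with the $\rho_{\kappa}=2$ value it computes for $n=2$ in its own proof, so whichever route you take, fixing one convention for the Kendall distance and tracking the factor of $2$ from ordered versus unordered pair-counting is where the real care is needed.
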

 \begin{proof}
 As a direct consequence of the distance of $a$ for the $\tau$ function, wherein an adjacency swap occurs such that for $ x = \{a,b\} \to y = \{b,a\} = \tau(x,y) = 1.$ However, for $x = \{a,b\} \to y = \{b,a\} = \rho_{\kappa}(x,y) = 2,$ with the minimum non-equal distance of 1 would otherwise reflect the measurement wherein a tie occurs. Since no ties occur upon the $\tau$ distance domain, the minimum distance of a tie relative to a total ordering is undefined without loss of identification (from which the density of the respective domains is therefore consistent with Corollary~\ref{cor:density}) and therefore a distance of $\tau(x,y) = 1a \propto \rho_{\kappa} = 2a \ \therefore 2\times \tau(x,y) = \rho_{\kappa}(x,y), \forall\ (x,y) \in \mathcal{M}\ \forall\ n < \infty^{+}$. Therefore, every $\tau$-distance is validly and uniquely defined upon the $\rho_{\kappa}$-distance, but not vice-versa, as there exist no Borel \(\sigma\)-mappings for a measure space in which a tie is observed to occur upon Kendall's \(\tau\).  
 \end{proof}

We observe that there must therefore also exist a sequence of moments (as trivially guaranteed for any finite Banach norm space). As has been proven, the first expected central moment is equal to 0 as defined in equation~\ref{eq:kem_dist}, and the second moment is always finite, for all finite \(n\). This proposition is trivially validated, as the square of such a finite skew-symmetric \(n \times n\) matrix is always negative semi-definite, and therefore any sum of squares upon a totally bounded compact space is itself almost surely positive finite. The expansion to the necessity of the minimum necessary moment sequence is now discussed. A sub-Gaussian variable \(\xi\) is said to be strictly sub-Gaussian if and only if 
\begin{definition}
\label{def:stict_sg}
\begin{equation}
\label{eq:strict}
E(e^{\lambda\xi}) \le e^{\frac{\lambda^{2}\sigma^{2}_{\xi}}{2}}, \forall \lambda \in \mathbb{R} \equiv \|\xi\|_{\kappa} \le \sigma_{\xi} = \|\xi\|\ell_{2}(\Omega).
\end{equation}
\end{definition}

\begin{theorem}
The distribution of the Kemeny distance is strictly sub-Gaussian for any finite sample.
\end{theorem}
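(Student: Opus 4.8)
The plan is to reduce the assertion to a moment inequality and then exploit the fact that the centred Kemeny distance splits into independent pieces. Write $\xi = \rho_{\kappa}(X,Y) - \tfrac{n^{2}-n}{2}$. By the symmetry already exploited in Lemma~\ref{lem:unbiased} — the telescoping cancellation of equal-magnitude positive and negative distances, equivalently the invariance of the sampling space under reversal of one argument — the law of $\xi$ is symmetric about $0$, so all odd central moments vanish and the moment generating function is the even series $E\,e^{\lambda\xi} = E\cosh(\lambda\xi) = \sum_{j\ge 0}\tfrac{\lambda^{2j}}{(2j)!}E[\xi^{2j}]$, finite for every $\lambda$ because $\xi$ is supported on the compact interval $[-\tfrac{n^{2}-n}{2},\tfrac{n^{2}-n}{2}]$ (Lemma~\ref{lem:kem_bounded}). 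Expanding the target $e^{\lambda^{2}\sigma_{\kappa}^{2}/2} = \sum_{j\ge 0}\tfrac{\lambda^{2j}}{2^{j}j!}\sigma_{\kappa}^{2j}$ and matching coefficients, it suffices to establish the Gaussian moment domination $E[\xi^{2j}] \le \tfrac{(2j)!}{2^{j}j!}\,\sigma_{\kappa}^{2j} = (2j-1)!!\,\sigma_{\kappa}^{2j}$ for all $j\ge 1$, with equality forced at $j=1$ by the definition of $\sigma_{\kappa}^{2}$. I stress that boundedness and symmetry alone are \emph{not} enough — a symmetric three-point law placing most of its mass at $0$ violates strict sub-Gaussianity — so the combinatorial structure of $\kappa$ has to be used.

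On the tie-free sub-space (the $n!$ permutations of Corollary~\ref{cor:density}, where $\sigma_{\kappa}^{2}=\tfrac{n^{2}-n}{2}$) this structure is transparent: $\rho_{\kappa}(X,Y)$ is twice the discordant-pair count of the relative permutation, which is uniform on $\mathrm{S}_{n}$, and the Mahonian/Lehmer-code identity writes that inversion count as $\sum_{i=1}^{n}Z_{i}$ with the $Z_{i}$ mutually independent and $Z_{i}$ uniform on $\{0,1,\dots,i-1\}$. Thus $\xi$ is a deterministic shift of a sum of independent centred bounded variables, and strict sub-Gaussianity is stable under such convolution — the proxy bounds multiply while variances add, $\prod_i e^{\lambda^{2}\sigma_i^{2}/2}=e^{\lambda^{2}(\sum_i\sigma_i^{2})/2}$ — so the problem collapses to showing each centred discrete-uniform summand is strictly sub-Gaussian. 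That is the one-variable comparison $\tfrac{\sinh(i\lambda/2)}{i\,\sinh(\lambda/2)}\le e^{\lambda^{2}(i^{2}-1)/24}$, of the same flavour as $\tfrac{\sinh\lambda}{\lambda}=\sum_{j}\tfrac{\lambda^{2j}}{(2j+1)!}\le\sum_{j}\tfrac{\lambda^{2j}}{6^{j}j!}=e^{\lambda^{2}/6}$ for the continuous uniform on $[-1,1]$; it reduces for $i=2$ to $\cosh(\lambda/2)\le e^{(\lambda/2)^{2}/2}$ and in general to a factorial coefficient bound valid for every $j$.

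The main obstacle is the extension from the tie-free sub-space to the full Kemeny population $\mathcal{M}$ of size $n^{n}$ (or $n^{n}-n$), where each pair contributes $0$ on a tie and the clean Lehmer decomposition is not literally available. I see two routes. Route (i): produce a filtration of $\mathcal{M}$ over the $n$ word-coordinates under which the incremental contributions to $\rho_{\kappa}$ are conditionally independent with symmetric conditional laws, so the convolution-stability argument applies coordinate by coordinate; I expect this to go through because the generating function of $\rho_{\kappa}$ over $\mathcal{M}$ factorises over the word-coordinates, but making the ``conditionally symmetric, non-degenerate increment'' statement precise — in particular ruling out the bad mass-at-zero behaviour for tied increments — is the delicate point. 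Route (ii): a monotonicity argument, observing that allowing ties only contracts each $\pm 1$ pair-contribution toward $0$ while preserving symmetry, and that $E\cosh(\lambda\xi)\le e^{\lambda^{2}\sigma_{\kappa}^{2}/2}$ propagates under such symmetric contractions once the tie-free base case is secured. It is in this passage, not in the routine series estimates, that the real work lies; as an alternative endpoint one may instead invoke the Buldygin--Kozachenko characterisation (\cite[Ch.~1]{buldygin2000}), computing the optimal sub-Gaussian standard of $\xi$ from its explicit Beta--Binomial law and checking that it coincides with $\sqrt{E\xi^{2}}$.
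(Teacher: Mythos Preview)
Your proposal takes a genuinely different route from the paper. The paper's own proof is very short: it observes that the centred Kemeny distance is supported on the compact interval $[-\tfrac{n^{2}-n}{2},\tfrac{n^{2}-n}{2}]$, has finite variance (Lemma~\ref{lem:kem_bounded}), and then invokes \textcite[Ch.~1]{buldygin2000} to conclude sub-Gaussianity, asserting without further calculation that the strict form of Definition~\ref{def:stict_sg} is satisfied in both the univariate and bivariate settings. There is no decomposition, no moment comparison, and no treatment of ties as a separate case.

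Your approach is substantially more constructive and, importantly, more honest about where the difficulty lies. You correctly observe --- with the three-point counterexample --- that compact symmetric support does \emph{not} by itself force the optimal sub-Gaussian proxy to coincide with the standard deviation, so the paper's passage from ``bounded and symmetric'' to ``strictly sub-Gaussian'' is at best a citation and not an argument. Your Lehmer-code / Mahonian decomposition on the tie-free subspace $\mathrm{S}_{n}$ is the right structural tool: it reduces the question to strict sub-Gaussianity of centred discrete uniforms, which is a clean one-variable inequality, and convolution stability then finishes that case. What your route buys is an actual mechanism; what the paper's route buys is brevity, at the cost of leaving the key inequality unverified.

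Where your proposal is still incomplete is exactly where you flag it: the extension from $\mathrm{S}_{n}$ to the full population $\mathcal{M}$ with ties. Neither of your Routes (i) and (ii) is yet a proof. Route (i) needs the claimed factorisation of the generating function over word-coordinates together with a check that the conditional increments remain strictly sub-Gaussian (your own three-point warning applies here, since a tied coordinate can place mass at~$0$). Route (ii) needs a precise monotonicity statement for the ratio $E\cosh(\lambda\xi)\big/e^{\lambda^{2}\operatorname{Var}\xi/2}$ under symmetric contraction toward~$0$, which is not standard and may in fact fail without further structure, since contracting support shrinks the numerator but also shrinks $\operatorname{Var}\xi$ in the exponent. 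Your fallback --- computing the optimal proxy directly from the Beta--Binomial law and matching it to $\sqrt{E\xi^{2}}$ --- is closest in spirit to what the paper presumably intends by its citation, and is likely the cleanest way to close the gap.
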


\begin{proof}
The distribution of the Kemeny distance is, for all finite \(n\), defined upon the spectrum \([-\frac{n^{2}-n}{2},\frac{n^{2}-n}{2}]\), and is therefore always finite for \(n<\infty^{+}\). The finite variance has also already been established (Lemma~\ref{lem:kem_bounded}). A sub-Gaussian variable is defined as the conjunction of a finite variance and compact totally bounded support, and therefore the Kemeny distance is sub-Gaussian for finite \(n\) \parencite[Ch.~1]{buldygin2000}. For either the bivariate or univariate case then, we observe a sub-Gaussian distribution: in the univariate scenario, the spectrum of the distance measure is \((0,b], b \in (0,\infty^{+})\)\footnote{Note that we assume \(n>0\), and therefore that \(b \ne 0\).} with expectation under affine transformation of 0, and satisfying condition~\ref{eq:strict}. For the bivariate scenario, the spectrum of the vector inner-product for the \(\kappa\) skew-symmetric matrices \(\{\kappa_{X},\kappa_{Y}\}\) is also finite and totally bounded, and is also therefore strictly sub-Gaussian for any finite \(n\). Thus, both the marginal distribution and bivariate distributions are strictly sub-Gaussian for all finite \(n\).
\end{proof}
\begin{corollary}
If then the Kemeny distance is almost surely strictly sub-Gaussian, we observe that the distribution is centred at 0, and possesses symmetric tails of density which is less than or equal to that of a standard normal distribution. Therefore it immediately follows that four moments are sufficient to characterise the probability distribution upon any population of size \(\mathcal{M}\).
\end{corollary}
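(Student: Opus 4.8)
The plan is to assemble the statement from three ingredients already established — zero mean, symmetry about the origin, and the strict sub-Gaussian moment generating function bound — and then to invoke a moment-determinacy argument, closing with the structural identification announced in the introduction. First I would record that $E(\xi)=0$: this is immediate from Lemma~\ref{lem:unbiased} together with the affine recentring built into equation~\ref{eq:kem_dist}, so the distribution is centred at the origin. Next I would establish the distributional symmetry $\xi \overset{d}{=} -\xi$. The reversal permutation acts on $\mathcal{M}$ as a fixed-point-free involution carrying each $\kappa(X)$ to $-\kappa(X)$ — this is the even-function property of $\kappa$ already exploited in Lemma~\ref{lem:unbiased} — and under the uniform sampling hypothesis it is measure preserving, so the law of $\xi$ is invariant under negation. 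An immediate consequence is that every odd moment vanishes, $E(\xi^{2j+1})=0$ for all $j\ge 0$; in particular the first and third moments are zero, so the only potentially informative low-order moments are the second and fourth, i.e.\ the variance and kurtosis.

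Then I would convert the strict sub-Gaussian inequality of Definition~\ref{def:stict_sg} into both a tail bound and a moment-growth bound. A Chernoff estimate applied to $E(e^{\lambda\xi})\le e^{\lambda^2\sigma_\xi^2/2}$ yields $P(|\xi|>t)\le 2e^{-t^2/(2\sigma_\xi^2)}$, which after standardisation to unit variance is dominated by the standard normal tail (the factor $2$ being harmless on the compactly supported lattice $\mathcal{M}$) — this is the "symmetric tails no heavier than the standard normal" clause. The same bound gives $E(\xi^{2k})\le \frac{(2k)!}{2^{k}k!}\,\sigma_\xi^{2k}$, whence $\sum_{k} E(\xi^{2k})^{-1/(2k)}=\infty$; Carleman's condition holds (trivially so, since $\xi$ is in fact bounded), the Hamburger moment problem for $\xi$ is determinate, and the law of $\xi$ is therefore pinned down by its moment sequence — which, by the symmetry step, is pinned down by its even moments alone. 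In the spirit of a symmetric Gram--Charlier/Edgeworth expansion, a strictly sub-Gaussian symmetric law deviates from its Gaussian surrogate only through the kurtosis, so the data $(E\xi,E\xi^{2},E\xi^{3},E\xi^{4})=(0,\sigma_\xi^{2},0,E\xi^{4})$ already captures the relevant departure.

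Finally I would close the gap between "determined by its full even-moment sequence" and "four moments suffice." Here I would appeal to the fact that, on a common finite population sampled uniformly, $\xi$ (equivalently the normalised Kemeny correlation) is supported on the lattice $\mathcal{M}$ and, by the Beta--Binomial identification stated in the introduction and developed for the bivariate coefficients, lies in a two-parameter symmetric family; matching the second and fourth moments recovers both parameters, after which determinacy forces equality of the full distributions, so the quadruple above is a complete invariant. The main obstacle is precisely this last step: strict sub-Gaussianity in isolation delivers determinacy, not finite-dimensional sufficiency, and a distribution on $|\mathcal{M}|$ atoms would in general require on the order of $|\mathcal{M}|$ moments. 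The reduction to four therefore genuinely rests on the explicit finite parametrisation of the Kemeny law (the Beta--Binomial structure), and I would present the corollary as a consequence of that structural identification, making explicit that "characterise" here means "selects the member of the Kemeny/Beta--Binomial family," not "recovers an arbitrary law from four numbers."
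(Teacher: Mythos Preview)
Your proposal is correct and in fact more careful than the paper's own argument. The paper's proof is extremely terse: it asserts that ``any power of $0$ is also $0$, and therefore that all higher order odd-moments are equal to $0$'' (which conflates $(E\xi)^{k}$ with $E(\xi^{k})$), then simply declares that the second moment is fixed by equation~\ref{eq:kem_variance} and that ``the final free moment to examine is the excess kurtosis $\mu_{4}$,'' citing strict sub-Gaussianity only to conclude that the excess kurtosis is negative and tends to $0$ from below. No determinacy argument (Carleman or otherwise) appears, and the reduction to four moments is asserted rather than justified.

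Your route differs in three substantive ways. First, you derive the vanishing of odd moments from the distributional symmetry $\xi \overset{d}{=} -\xi$ induced by the $\kappa$-involution, which is the correct mechanism; the paper's ``power of $0$'' step is a non sequitur that your argument repairs. Second, you make explicit the passage from the sub-Gaussian MGF bound to tail domination and to Carleman's condition, giving moment determinacy rather than leaving it implicit. Third --- and this is the genuinely important contribution --- you identify and name the gap the paper glosses over: strict sub-Gaussianity yields determinacy by the full moment sequence, not sufficiency of four moments, and you correctly locate the missing ingredient in the Beta--Binomial parametrisation, so that ``four moments suffice'' is read as ``four moments select the member of a two-parameter symmetric family.'' The paper eventually develops that parametrisation in Subsection~\ref{subsection:function_analytic}, but its proof of the corollary does not invoke it; your proposal makes the dependence explicit and is the logically cleaner argument.
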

\begin{proof}
It is trivially obvious then that any power of 0 is also 0, and therefore that all higher order odd-moments are equal to 0 for the Kemeny metric. The second central moment is defined in equation~\ref{eq:kem_variance}, and therefore the final free moment to examine is the excess kurtosis \(\mu_{4}\) upon a finite sample. The Kemeny distribution however is strictly sub-Gaussian and therefore possesses negative kurtosis for any finite \(n\) cannot be normally distributed. This unique probability distribution is therefore symmetric and unbiased, has a spread of scores which is almost surely positive and finite, has no skewness (by the even function property), and a finite negative excess kurtosis which tends to 0 asymptotically from below \parencite[Ch.~1]{buldygin2000}, paradoxically contradicting \textcite{diaconis1977} for finite samples, as the distribution of the finite sample distances cannot be normally distributed in the presence of ties. 
\end{proof}


\subsection{Functional analytic approximation of the Kemeny distance distribution for finite samples}
\label{subsection:function_analytic}
Typical approaches to obtaining the maximum likelihood estimator of a correlation coefficient are based upon the differentiation of a probability distribution for the respective measure space, rather than the metric function itself. We proceed to derive a number of explicit characteristics of the Kemeny metric space, including both an explicit finite sample probability distribution, and an asymptotic probability distribution. 

\begin{theorem}~\label{lem:decreasing_bounded_below}
 The sample statistics upon the Kemeny distance of size \(\mathcal{M} = n^{n}\) are monotonically convergent. 
 \end{theorem}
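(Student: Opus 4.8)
The plan is to read the statement as an instance of the monotone convergence theorem for real sequences: a sequence in \(\mathbb{R}\) that is decreasing and bounded below converges (this is also what the label \texttt{lem:decreasing\_bounded\_below} anticipates). Accordingly, I would first pin down precisely which sequence is at stake. Since every quantity here is defined over the \emph{finite} population \(\mathcal{M}\) of size \(n^{n}\), the natural objects are the normalised sample functionals indexed by \(n\): the centred empirical mean of \(\rho_{\kappa}\) (which is identically the fixed point \(\tfrac{n^{2}-n}{2}\), hence trivially convergent), the normalised Kemeny variance \(\sigma^{2}_{\kappa}(X)\big/\tfrac{n^{2}-n}{2}\in(0,1]\), and the negative excess kurtosis \(\mu_{4}\). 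Each is a well-defined real number for every \(n<\infty^{+}\), so we genuinely have sequences in \(\mathbb{R}\) to analyse.

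Second, I would dispatch boundedness, which is essentially already available. The Kemeny distance has compact, totally bounded support \([-\tfrac{n^{2}-n}{2},\tfrac{n^{2}-n}{2}]\) (the fact used to establish sub-Gaussianity), so the second moment lies in \((0,\tfrac{n^{2}-n}{2}]\) by Lemma~\ref{lem:kem_bounded} and hence the normalised variance lies in \((0,1]\); strict sub-Gaussianity forces \(\mu_{4}\le 0\). Thus each sequence is bounded below (by \(0\), respectively by a finite negative constant) and above.

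Third --- and this is the substantive step --- I would establish monotonicity. The cleanest route uses the nesting \(\mathcal{M}^{\prime}\subset\mathcal{M}\) of Corollary~\ref{cor:density}: passing from order \(n\) to order \(n+1\) enlarges the configuration space only by permutations carrying additional tied pairs, i.e.\ additional zero entries in the \(\kappa\) matrix, while the normalising constant grows from \(\tfrac{n^{2}-n}{2}\) to \(\tfrac{(n+1)^{2}-(n+1)}{2}\); since \(|\kappa_{kl}|\le\sqrt{0.5}\) uniformly, the per-element contribution to the normalised dispersion cannot increase, giving the one-step inequality \(a_{n+1}\le a_{n}\). The even-function property of \(\kappa\) (Lemma~\ref{lem:unbiased}) keeps all odd moments identically \(0\) along the whole sequence, so only the normalised variance and the excess kurtosis need be tracked, and the sub-Gaussian tail bound of \parencite[Ch.~1]{buldygin2000} forces \(\mu_{4}\uparrow 0\). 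I would make the one-step domination precise by a direct counting comparison of the surviving \(\kappa\odot\kappa\) terms at the two orders, controlling the ratio of \(n^{n}\) to \((n+1)^{n+1}\) against \(\tfrac{n^{2}-n}{2}\) versus \(\tfrac{(n+1)^{2}-(n+1)}{2}\) by the same Stirling estimate already invoked in Corollary~\ref{cor:density}.

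Finally, a real sequence that is monotone decreasing and bounded below converges; applying this to each normalised functional above yields the theorem. I expect the monotonicity step to be the main obstacle: the phrase ``sample statistics'' is broad, and one must be careful to isolate exactly which \emph{normalised} functionals are monotone --- the raw, unnormalised moments grow with \(n\) --- and to prove the one-step inequality by honest counting of the contributing terms rather than by an informal ``more ties dilute the variance'' heuristic.
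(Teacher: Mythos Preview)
Your reading of the theorem is different from the paper's. You index the sequence by the sample size \(n\) and study the normalised moments \(\sigma^{2}_{\kappa}/\tfrac{n^{2}-n}{2}\) and \(\mu_{4}\) as \(n\to\infty\). The paper, by contrast, fixes \(n\) and considers the sequence \(\{a_{m}\}\) indexed by the population elements \(m\in\mathcal{M}\) (the proof opens with ``the sequence \(\{a_{m}\}_{m=1}^{(\frac{n^{2}-n}{2})^{2}}\)''): these are the (squared) Kemeny distances enumerated over the finite population, and the argument is the classical one that a bounded monotone real sequence converges to its infimum, applied to the ordered list of squared distances. The subsequent Corollary and Remark confirm this reading --- the convergence is ``towards the global infimum at \(0\)'' for ``all finite \(m\in\mathcal{M}\)'', with \(n\) held fixed throughout. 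Your sequences are simply not the ones the paper is discussing.

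Even taken on its own terms, your monotonicity step has a gap. You invoke Corollary~\ref{cor:density} to justify a nesting ``from order \(n\) to order \(n+1\)'', but that Corollary compares \(n!\) with \(n^{n}\) for the \emph{same} \(n\); it says nothing about an embedding of \(\mathcal{M}_{n}\) into \(\mathcal{M}_{n+1}\). In fact the objects at level \(n\) are \(n\times n\) skew-symmetric matrices, while those at level \(n+1\) are \((n+1)\times(n+1)\); there is no canonical inclusion, and the assertion that the passage ``enlarges the configuration space only by permutations carrying additional tied pairs'' is not correct. Without that nesting the one-step inequality \(a_{n+1}\le a_{n}\) for the normalised variance is unsupported (and a quick check with the explicit formula \(\sigma^{2}_{\kappa}=\tfrac{(n-1)^{2}(n+4)(2n-1)}{18n}\) shows the ratio to \(\tfrac{n^{2}-n}{2}\) is not bounded by \(1\), so the stated range \((0,1]\) already fails). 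The paper avoids all of this by never varying \(n\).
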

 \begin{proof}
By the greatest lower-bound property for the sequence $\{a_{m}\}_{m=1}^{(\frac{n^{2}-n}{2})^{2}}$ the finite lower-bound exists at 0, for any finite $n$. Assume that for every $\epsilon>0$, there exists any specific $N$ such that $a_{N}>(c-\epsilon)$, for arbitrary error distance \(\epsilon\) to conform to the following definition: there exists at least one measure for which $\inf_{m}\{a_{m}\} < (c-\epsilon)$, which is valid for any Banach norm-space. Then, as this set of distances upon the Kemeny metric less than or equal to the expectation is always increasing (decreasing), and for which $c$ is the lower-bound, it follows that for every population \(\mathcal{M}\) upon the finite sample \(n\) holds \[\|c-a_{n}\| \ge \|c-a_{N}\| > \epsilon \implies \lim_{m}(a_{m}^{2}) = \inf_{m}\{(a_{m})^{2}\}\ge 0.\] The distances upon \(\mathcal{M}\) are then always in the convergent interval \([0,(\frac{n^{2}-n}{2})^{2}]\), which exists for all finite \(n \in \mathbb{N}^{+}>0\).
 \end{proof}
\begin{corollary}
By negating the totally bounded and monotonically convergent spectrum of the Kemeny metric from below, we observe that the monotone convergence is confirmed to hold from above as well. By squaring the signed distances, we observe that the squared Kemeny distances are monotonically convergent from both below and above, towards the global infinimum at \(0\). Hence, the Kemeny metric is monotonically convergent from both above and below to the same point, thereby confirming the monotone convergence theorem.
\end{corollary}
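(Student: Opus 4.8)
The plan is to leverage Theorem~\ref{lem:decreasing_bounded_below} directly: it already establishes that the squared Kemeny distances form a monotone sequence $\{a_m^2\}$ contained in the compact interval $[0,(\tfrac{n^2-n}{2})^2]$ with greatest lower bound $0$, valid for every finite $n$. First I would record that the signed Kemeny distance itself ranges over the symmetric spectrum $[-\tfrac{n^2-n}{2},\tfrac{n^2-n}{2}]$, and that the map $t\mapsto -t$ is an order-reversing isometry of this interval onto itself. Consequently, whenever $\{a_m\}$ is a nondecreasing sequence bounded below by $c$, the negated sequence $\{-a_m\}$ is nonincreasing and bounded above by $-c$; re-running the greatest-lower-bound argument of Theorem~\ref{lem:decreasing_bounded_below} on $\{-a_m\}$ then yields monotone convergence \emph{from above}, with $\sup_m\{-a_m\} = -\inf_m\{a_m\}$. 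This establishes the first assertion of the corollary.

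Next I would treat the squaring map. On the symmetric spectrum, $t\mapsto t^2$ is nondecreasing on $[0,\tfrac{n^2-n}{2}]$ and nonincreasing on $[-\tfrac{n^2-n}{2},0]$, and attains its minimum value $0$ at $t=0$ on both branches. Hence, sorting the signed distances by sign relative to the expectation $0$ and applying the square, the nonnegative and the nonpositive distances each generate a monotone sequence of squares, both bounded below by $0$, and by the negation step each converges; since the infimum of each branch is $0$, the squared Kemeny distances approach $0$ monotonically from both directions, the two one-sided monotone branches sharing the common limit point $0 = \inf_m\{a_m^2\}$. Combining the two steps, the ordered sequence of squared distances is simultaneously bounded above by $(\tfrac{n^2-n}{2})^2$ and below by $0$ and is monotone on each branch, so the monotone convergence theorem for real sequences applies on each branch and delivers the common limit; uniqueness of the limit point in $\overline{\mathbb{R}}$ is automatic once monotonicity and boundedness hold.

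Finally I would close by observing that ``confirming the monotone convergence theorem'' here means that every hypothesis of that theorem --- order-monotonicity of the sorted sequence and total boundedness of the spectrum, both valid for all finite $n\in\mathbb{N}^{+}$ --- is genuinely witnessed by the Kemeny metric, not merely invoked. The main obstacle I anticipate is expository rather than mathematical: making precise in what sense a collection of distances over the finite population $\mathcal{M}$ is a ``sequence,'' so that the phrases ``from below,'' ``from above,'' and ``to the same point'' are unambiguous. Concretely, one must index $\mathcal{M}$ by $m$, sort the distances, and identify the $\le$-part and the $\ge$-part relative to $0$ as the two monotone branches that the negation and squaring operations interchange and then fuse at the global infimum $0$; once that bookkeeping is fixed, the argument above is routine.
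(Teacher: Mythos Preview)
Your proposal is correct and follows essentially the same approach as the paper. In fact, the paper provides no separate proof for this corollary at all: the entire argument is embedded in the corollary's statement itself (negate to obtain convergence from above, square to collapse both signed branches to the common infimum $0$), and your write-up is simply a more careful and explicit expansion of that same reasoning, with the added virtue of flagging the indexing/bookkeeping issue that the paper leaves implicit.
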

\begin{remark}
We note that the monotone convergence of the Kemeny metric is uniquely minimised, for all finite \(m \in \mathcal{M}\), at the expectation of the Kemeny norm-space, which is equal to 0. Thus, the Kemeny metric also satisfies the definition of a Bregman divergence, at the point for which the variance (equation~\ref{eq:kem_variance}) is almost surely minimised within a population.
\end{remark}

The estimation of the correlation coefficient, that of the first-order characterisation, is therefore complete. However, the  distribution of \(n\) elements is only asymptotically normal, as with finite \(n\), the probability distribution for finite \(n\) of the Kemeny metric is strictly sub-Gaussian. The recognition of this property is important, as any sub-Gaussian distribution's variance will directly follow to be biased upwards for any finite \(n\) as a consequence of the coverage of of infinite support, thereby resulting in sub-optimal characterisations of the population variance. However, empirical distribution of the standard deviation of equation~\ref{eq:kem_cor} are smaller than the corresponding Kendall \(\tau_{b}\), thereby demonstrating that the sub-Gaussian distribution is majorised by the normal distribution, but for small \(n\) can be sharpened by approximately 5-13\%. 

As given by the explicit deterministic relationship between the Kemeny correlation and Spearman's \(\rho\) along with the sub-Gaussian probability distribution of the Kemeny distance function, the standard errors of the Kemeny metric should be approximately equivalent to those of Spearman's \(\rho\) transformed by equation~\ref{eq:kendall_sin}. To explore this, we produced for several sample sizes a number of correlation estimators, and examined the relative variances, which confirmed that the minimum variance property of a maximum likelihood estimator was obtained, presented in Table~\ref{tab:correlations} for 15,000 replications. Note the following definitions:

\begin{multicols}{2}
\footnotesize{
\begin{itemize}
	\item[Kemeny \(\rho\):]{Equation~\ref{eq:kemeny_rho}, equivalent to Spearman's \(\rho\).}
	\item[\(\tau\):]{Kemeny correlation (equation~\ref{eq:kem_cor}).}
	\item[\(r\):]{Pearson product-moment correlation.}
	\item[\(\rho\):]{Spearman's \(\rho\).}
	\item[\(\tau_{b}\):]{Kendall's \(\tau_{b}\) estimator \parencite{kendall1948}.}
	\item[\(\vec{r}\):]{Sinusoidal transformation of Kemeny \(\rho\).}
\end{itemize}
}%
\end{multicols}

 \begin{table}[!ht]
 \centering
 \scriptsize
 \caption{Comparison of the average error in correlation for a bivariate pair, using various methods estimated over a number of sample sizes, each time for 15,000 iterations.}
 \label{tab:correlations}
\begin{tabular}{llcccccc}
\toprule
n & Correlation  & mean & sd &  min & max & skew & kurtosis \\
\midrule
\multirow{6}{*}{n = 30} & Pearson \(r\) & -0.00262 & 0.18525 &  -0.72141 & 0.65972 & -0.01802 & -0.08994 \\
                        & Spearman \(\rho\) & -0.00281 & 0.18562 & -0.70316 & 0.66462 & -0.01807 & -0.08930 \\
                        & Kemeny \(\rho\) & -0.00281 & 0.18562  & -0.70316 & 0.66462 & -0.01807 & -0.08930 \\
                        & Kemeny \(\tau\) & -0.00200 & 0.12805  & -0.54713 & 0.48506 & -0.02348 & -0.00758 \\
                        & Kendall \(\tau_{b}\) & -0.00207 & 0.13245 & -0.56399 & 0.50238 & -0.02387 & -0.00878 \\
                        & Kemeny \(\vec{r}\) & -0.00184 & 0.12032 & -0.49645 & 0.46281 & -0.02271 & 0.05201 \\
\midrule
\multirow{6}{*}{n = 150} & Pearson \(r\) & 0.00071 & 0.08212  & -0.31894 & 0.29369 & 0.00297 & -0.01893 \\
                         		& Spearman \(\rho\) & 0.00080 & 0.08208  & -0.31567 & 0.28834 & 0.00539 & -0.01481 \\
                         		& Kemeny \(\rho\) & 0.00080 & 0.08208  & -0.31567 & 0.28834 & 0.00539 & -0.01481 \\
                         		& Kemeny \(\tau\) & 0.00051 & 0.05516  & -0.21709 & 0.20322 & 0.00844 & 0.00736 \\
                         		& Kendall \(\tau_{b}\) & 0.00052 & 0.05553  & -0.21839 & 0.20445 & 0.00843 & 0.00728 \\
                         		& Kemeny \(\vec{r}\) & 0.00051 & 0.05244  & -0.20446 & 0.18620 & 0.00577 & 0.01159 \\
\midrule
\multirow{6}{*}{n = 500} & Pearson \(r\) & 0.00012 & 0.04484  & -0.16904 & 0.17670 & -0.04287 & -0.02534 \\
                         & Spearman \(\rho\) & 0.00015 & 0.04487  & -0.16450 & 0.17300 & -0.04001 & -0.02576 \\
                         & Kemeny \(\rho\) & 0.00015 & 0.04487  & -0.16450 & 0.17300 & -0.04001 & -0.02576 \\
                         & Kemeny \(\tau\)& 0.00009 & 0.02999  & -0.11051 & 0.11766 & -0.04023 & -0.01592 \\
                         & Kendall \(\tau_{b}\) & 0.00009 & 0.03005  & -0.11074 & 0.11789 & -0.04024 & -0.01592 \\
                         & Kemeny \(\vec{r}\) & 0.00009 & 0.02860  & -0.10520 & 0.11069 & -0.04029 & -0.01830 \\
\bottomrule
\end{tabular}
 \end{table}
From the results presented in Table~\ref{tab:correlations}, we observe empirical verification of several mathematical proofs demonstrated in this work. First, the Kendall's \(\tau_{b}\) is a biased estimator of the permutation space for finite \(n\) in the presence of ties, and also does not demonstrate minimum variance properties. Thus, we can categorically determine that, as theoretically expected, the Kemeny \(\tau\) estimator developed here is a superior ML estimation function. Second, the Kemeny \(\rho\) and Spearman's \(\rho\) are equivalent, thereby allowing us to demonstrate that relative to Pearson's \(r\), these correlation estimators are both unbiased and linear functions for any linear bivariate rankings, which possesses performance greater than or equal to Pearson's \(r\), for all distributions upon a common population. Third, we observe that under sinusoidal transformation of equation~\ref{eq:kendall_sin} upon Kemeny \(\rho\) possesses tighter bounds than the Kemeny \(\tau\) estimator in equation~\ref{eq:kem_cor}. 

Further investigation as to the explicit nature of this estimator is necessary, as there appears to be a comparative gain of 5\% efficiency without the introduction of bias. At this time, we would note however that the non-linear nature of equation~\ref{eq:kendall_sin} does not invalidate the Lehmann-Scheff\'{e} condition of equation~\ref{eq:kem_cor}; instead we suspect that the comparative adjusted correction serves as an instantiation of a Rao-Blackwell style estimator, for which the unbiased Spearman's and Kemeny \(\rho\) is capable of defining a narrower continuous space than the natural permutation space of \(M\), by solving as a dual the minimisation of both the linear rank and the linear permutation distances. We hypothesise at this time that a dual characterisation of the common sample space upon two orthonormal metric spaces would be solved with a Lagrange multiplier for the equated Rayleigh quotient differences between the metric spaces. This would allow for the performance of the Kemeny \(\vec{r}\) estimation function to be linearly and uniquely solved upon both the MLE and Tikhinov regularised (i.e, biased) score vector spaces, without loss of generality. Further work will explore this relationship upon the Rayleigh quotient by equating the (potentially) regularised system of linear score equations and the system of linear rank equations.

Finally, a fourth area of mathematical development is necessary for this MLE work to continue. While the spectrum of the Kemeny metric space is strictly sub-Gaussian and asymptotically normally distributed, the operation function space upon the square vector matrix characterisation (that of \(\kappa\)) presents implementation difficulties. The utilisation of a probabilistic optimisation function as an explicit transition function is clearly a highly desirable optimisation function space, but one for which we are unable to find existing developments upon. While a vector probability estimate is directly estimable to provide maximum likelihood procedures, the convolution of a vectorised \(n \times 1\) probability distribution with a  \(n \times n\) data matrix is uncertain, as the transition across permutation matrices, i.e., the explicit linear nature of the gradient upon a vector matrix space, is currently uncertain. We suspect that this is a fundamental grounding underlying the Markov Chain relationship to continuous variable spaces.

\begin{lemma}
The Kemeny metric satisfies the strong law of large numbers for any identically and independently distributed as a linear distance function.
\end{lemma}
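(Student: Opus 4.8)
The plan is to reduce the statement to Kolmogorov's strong law of large numbers: verify that the Kemeny distance, evaluated on a replicated i.i.d.\ sample, is an integrable random variable with a well-defined expectation, invoke the classical SLLN, and then identify the almost-sure limit with the fixed point supplied by Lemma~\ref{lem:unbiased}. First I would record that the score map $\kappa$ of equation~\ref{eq:kem_score} sends every coordinate of an extended-real vector into the finite set $\{-\sqrt{0.5},\,0,\,\sqrt{0.5}\}$, so that for each replication $m$ the pairwise value $\rho_{\kappa}(X_{m},Y_{m})$ lies in the compact interval $[-\tfrac{n^{2}-n}{2},\tfrac{n^{2}-n}{2}]$ (equivalently $\tau_{\kappa}(X_{m},Y_{m})\in[-1,1]$). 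Boundedness forces every moment to be finite --- in particular $E|\rho_{\kappa}|<\infty$ and $\sigma_{\kappa}^{2}<\infty$ by Lemma~\ref{lem:kem_bounded} --- so the integrability hypothesis of the SLLN is free; the same observation dispatches any worry about the extended-real domain, since $\rho_{\kappa}$ depends only on the induced ordering and is therefore finite-valued even when some coordinate equals $\pm\infty$.

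The identically-and-independently-distributed structure is exactly the axiomatic hypothesis of the lemma: the replications are drawn independently, and with a common law, from a uniformly sampled common population, so $\{\rho_{\kappa}(X_{m},Y_{m})\}_{m\ge 1}$ is an i.i.d.\ sequence of integrable random variables. Kolmogorov's SLLN then gives that $\tfrac{1}{N}\sum_{m=1}^{N}\rho_{\kappa}(X_{m},Y_{m})$ converges almost surely to $E[\rho_{\kappa}]$, and Lemma~\ref{lem:unbiased} identifies this limit as the fixed point $\tfrac{n^{2}-n}{2}$ (equivalently, the averaged correlation $\tfrac{1}{N}\sum_{m}\tau_{\kappa}(X_{m},Y_{m})$ converges a.s.\ to $0$). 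Because $\rho_{\kappa}$ is the bilinear inner-product functional of the norm-space of equation~\ref{eq:kem_cor} (Theorem~\ref{lem:hilbert}), each partial average is again a legitimate point of that closed, convex space, so the almost-sure limit cannot leak to the boundary or to $\pm\infty$; this is precisely the content of the phrase ``as a linear distance function''.

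As a second, self-contained route that also yields a rate, I would use strict sub-Gaussianity directly: the Hoeffding-type bound $P\!\left(\bigl|\tfrac{1}{N}\sum_{m=1}^{N}\tau_{\kappa}(X_{m},Y_{m})\bigr|>\epsilon\right)\le 2\exp\!\bigl(-N\epsilon^{2}/(2\sigma_{\kappa}^{2})\bigr)$ has a right-hand side that is summable in $N$, so Borel--Cantelli upgrades convergence in probability to almost-sure (indeed complete) convergence without citing Kolmogorov. The main obstacle to a fully rigorous statement is not analytic but structural: if the bivariate pairs were built by reusing coordinates of a single population draw, the pairwise distances would be dependent and the SLLN would fail to apply verbatim; the lemma sidesteps this by hypothesis, but a careful write-up must make explicit that independence is imposed at the level of the replications, and not merely inherited from the common population --- it is this bookkeeping, rather than any inequality, that I expect to occupy most of the proof.
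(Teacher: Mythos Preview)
Your primary argument is correct and coincides with the paper's: both reduce to the single observation that the Kemeny distance takes values in the bounded interval $[-\tfrac{n^{2}-n}{2},\tfrac{n^{2}-n}{2}]$, so integrability is automatic and the strong law for i.i.d.\ sequences applies without further work. The paper phrases this as ``by Markov's inequality the existence of a finite expectation satisfies the strong law of large numbers,'' which is looser than your explicit appeal to Kolmogorov's SLLN, but the substance is identical.

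Your second route --- the Hoeffding-type tail bound summed over $N$ and closed off by Borel--Cantelli --- is a genuine addition that the paper does not pursue. It buys you complete (not merely almost-sure) convergence and an explicit exponential rate, at the cost of invoking the sub-Gaussian machinery rather than the bare first-moment hypothesis; the paper's argument is shorter but yields no rate. Your closing remark about the need to impose independence at the level of replications, rather than inherit it from a single population draw, is a point the paper leaves implicit.
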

\begin{proof}
By Markov's inequality the existence of a finite expectation satisfies the strong law of large numbers. By the totally bounded nature of the Kemeny distance function, s.t. \(E(I\{X_{i} \le x\}) < \infty\) for all finite \(n\) then. This is equivalent to the establishing \(\sup{\sqrt{(\pm \frac{n^{2}-n}{2})^{2}}} = \frac{n^{2}-n}{2}, \,\forall 0 \le  n < \infty^{+}\). The second condition trivially holds for any finite vector sequence \(\{X_{i}\}_{i=1}^{n}\) for which a linear ordering may be performed using \(\kappa\). Thus, the strong law of large numbers is observed to hold for any finite sample. 
\end{proof}
\begin{lemma}~\label{lem:partition}
Let \(F\) be a distribution function on \(\overline{\mathbb{R}}\). For each \(\epsilon>0\) there exists a finite partition of the extended real line such that for an orderable sequence \(\infty^{-} \le t_{1} \le \cdots \le t_{k} \le \infty^{+}\), there exists \(0 \le j \le k-1\) 
\[F(t_{j+1})^{-} -  F(t_{j}) \le \epsilon.\]
\end{lemma}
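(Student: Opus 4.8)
The plan is to construct the partition greedily from left to right, using only monotonicity and right-continuity of $F$ together with the extended-real boundary values $F(\infty^{-})=0$ and $F(\infty^{+})=1$. Put $t_{0}=\infty^{-}$, and suppose $t_{j}$ has been chosen with $F(t_{j})<1-\epsilon$; then define
\[
t_{j+1}=\sup\{\, x\in\mathbb{R} : F(x) < F(t_{j})+\epsilon \,\}.
\]
If instead $F(t_{j})\ge 1-\epsilon$, I terminate by appending $t_{j+1}=\infty^{+}$. The first thing to check is that in the non-terminal case this $t_{j+1}$ is a well-defined real number that strictly advances the index: the defining set contains a right-neighbourhood of $t_{j}$ (by right-continuity, since $F(t_{j})<F(t_{j})+\epsilon$) and is bounded above (since $F(x)\to 1>F(t_{j})+\epsilon$ forces $F(x)>F(t_{j})+\epsilon$ for all large $x$), hence $t_{j+1}\in(t_{j},\infty)$.

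Next I would extract the two inequalities that drive the argument. For every $x<t_{j+1}$ we have $F(x)<F(t_{j})+\epsilon$, so letting $x\uparrow t_{j+1}$ gives exactly $F(t_{j+1})^{-}-F(t_{j})\le\epsilon$, the bound the lemma asks for. For every $x>t_{j+1}$ we have $F(x)\ge F(t_{j})+\epsilon$, so right-continuity at $t_{j+1}$ gives $F(t_{j+1})\ge F(t_{j})+\epsilon$; this second inequality is the termination engine, since the nondecreasing sequence $F(t_{0}),F(t_{1}),\dots$ then increases by at least $\epsilon$ per step while confined to $[0,1]$, so after at most $\lceil 1/\epsilon\rceil$ steps we reach some $t_{j}$ with $F(t_{j})\ge 1-\epsilon$ and stop, with final gap $F(\infty^{+})^{-}-F(t_{j})=1-F(t_{j})\le\epsilon$ of the same form. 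This yields the finite ordered partition $\infty^{-}=t_{0}\le t_{1}\le\cdots\le t_{k}=\infty^{+}$ with $F(t_{j+1})^{-}-F(t_{j})\le\epsilon$ for every $0\le j\le k-1$, which implies the stated assertion (the ``there exists $j$'' form being an immediate weakening of ``for all $j$'').

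The one genuine subtlety, which I would flag explicitly, is that $F$ may possess atoms of mass exceeding $\epsilon$: such a jump at a point $a$ cannot be straddled by any short interval. This does no harm here, because the construction automatically places a partition point at $a$ — it is precisely the supremum at which the strict inequality $F(x)<F(t_{j})+\epsilon$ first fails — so the interval to its left contributes only the controlled quantity $F(a^{-})-F(t_{j})\le\epsilon$, while the jump itself is absorbed into the value $F(t_{j+1})=F(a)$ from which the recursion resumes. A minor bookkeeping point is the degenerate case $\epsilon\ge 1$, where the trivial partition $\{\infty^{-},\infty^{+}\}$ already works, so one may assume $\epsilon<1$ throughout. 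Apart from the atom case, every step is a direct application of the defining properties of a distribution function, and I do not anticipate any real obstacle.
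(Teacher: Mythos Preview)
Your proposal is correct and takes essentially the same approach as the paper: a greedy left-to-right construction with $t_{0}=\inf\overline{\mathbb{R}}$ and $t_{j+1}$ defined as the supremum of points where $F$ has not yet exceeded $F(t_{j})+\epsilon$. The paper's version uses the non-strict inequality $F(z)\le F(t_{j})+\epsilon$ in the defining set and is considerably terser, omitting the well-definedness, termination, and atom-handling details you spell out; your treatment is the more careful of the two, but the underlying idea is identical.
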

\begin{proof}
Let \(0 < \epsilon\) be given, such that there exists monotone convergence. Allow \(t_{0} = \inf{\overline{\mathbb{R}}}\), for which \(j \ge 0\) we define \(t_{j+1} = \sup\{z: F(z) \le F(t_{j}) + \epsilon\}.\) Then by right continuity, there are a finite sequence of steps for which this definition is discontinuous, and we observe that for our definition of the \(\kappa\) function, this scenario does not occur upon any countable finite population. Thus is defined a transition state of monotonically decreasing distance sequences from the expectation upon \(F\).
\end{proof} 
\begin{theorem}\label{thm:gc}
The Kemeny metric function upon \(\mathcal{M}\) satisfies the Glivenko-Cantelli theorem: Let \(\{X_{i}\}_{i=1}^{\mathcal{M}}\) be an independently distributed uniform sequence of random variables with distribution function \(F \in \overline{\mathbb{R}}\). Then \[\sup_{x\in\overline{\mathbb{R}}} | \hat{F}_{m}(x) - F(x)| \to 0, a.s.\]
\end{theorem}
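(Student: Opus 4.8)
The plan is to run the classical two-stage Glivenko--Cantelli argument, whose two ingredients have already been assembled above: pointwise almost-sure convergence of the empirical distribution function, supplied by the strong law of large numbers established in the preceding lemma, and a uniformly fine finite partition of \(\overline{\mathbb{R}}\), supplied by Lemma~\ref{lem:partition}. Fix \(\epsilon>0\) and invoke Lemma~\ref{lem:partition} to obtain points \(\infty^{-}=t_{0}\le t_{1}\le\cdots\le t_{k}=\infty^{+}\) with \(F(t_{j+1})^{-}-F(t_{j})\le\epsilon\) for every \(0\le j\le k-1\); since \(\kappa\) induces only finitely many order configurations on a finite population, this partition and the left limits \(t_{j}^{-}\) are honest, finitely many evaluation points.

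First I would apply the strong law of large numbers at each of the finitely many sites \(t_{0},\dots,t_{k}\) and at each left limit \(t_{1}^{-},\dots,t_{k}^{-}\): for each fixed such point \(t\) the variables \(I\{X_{i}\le t\}\) (respectively \(I\{X_{i}<t\}\)) are bounded with mean \(F(t)\) (respectively \(F(t^{-})\)), so \(\hat{F}_{m}(t)=\frac{1}{m}\sum_{i=1}^{m}I\{X_{i}\le t\}\to F(t)\) and \(\hat{F}_{m}(t^{-})\to F(t^{-})\) almost surely. A finite intersection of almost-sure events is almost sure, so on one common event \(\Omega_{0}\) of probability one all of these \(2k+1\) convergences hold simultaneously.

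Second, the monotonicity sandwich upgrades this pointwise control to uniform control. For arbitrary \(x\in\overline{\mathbb{R}}\) choose \(j\) with \(t_{j}\le x<t_{j+1}\); monotonicity of both \(\hat{F}_{m}\) and \(F\) gives \(\hat{F}_{m}(t_{j})\le\hat{F}_{m}(x)\le\hat{F}_{m}(t_{j+1}^{-})\) and \(F(t_{j})\le F(x)\le F(t_{j+1}^{-})\), so
\[
\hat{F}_{m}(x)-F(x)\le\bigl(\hat{F}_{m}(t_{j+1}^{-})-F(t_{j+1}^{-})\bigr)+\bigl(F(t_{j+1}^{-})-F(t_{j})\bigr)\le\bigl(\hat{F}_{m}(t_{j+1}^{-})-F(t_{j+1}^{-})\bigr)+\epsilon,
\]
and symmetrically \(\hat{F}_{m}(x)-F(x)\ge\bigl(\hat{F}_{m}(t_{j})-F(t_{j})\bigr)-\epsilon\). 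Taking \(\sup_{x\in\overline{\mathbb{R}}}\) and then \(m\to\infty\) on \(\Omega_{0}\) yields \(\limsup_{m}\sup_{x}|\hat{F}_{m}(x)-F(x)|\le\epsilon\); intersecting over the countable family \(\epsilon=1/r\), \(r\in\mathbb{N}^{+}\), preserves almost-sureness and forces the \(\limsup\) to be \(0\), which is the assertion.

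The main obstacle is organisational rather than analytic: the null set must be fixed \emph{before} quantifying over the uncountably many \(x\), which is precisely why pointwise SLLN alone does not suffice and why the finite partition of Lemma~\ref{lem:partition} is indispensable; one must also track the extended-real endpoints and the left-continuity bookkeeping at atoms of \(F\), using \(\hat{F}_{m}(\infty^{-})=F(\infty^{-})=0\) and \(\hat{F}_{m}(\infty^{+})=F(\infty^{+})=1\). A secondary subtlety peculiar to this framework is reading the limit \(m\to\mathcal{M}\) when \(\mathcal{M}=n^{n}\) is finite for fixed \(n\): the statement should be understood along the exhaustion of the uniformly sampled common population, so the SLLN step is applied to the empirical measure built from the first \(m\) of those independent uniform draws.
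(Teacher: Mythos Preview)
Your proposal is correct and follows essentially the same approach as the paper: invoke the partition of Lemma~\ref{lem:partition}, apply the strong law of large numbers at the finitely many partition points and their left limits, and use the monotonicity sandwich to upgrade pointwise to uniform convergence. The paper's proof is slightly terser (it uses \(\epsilon/2\) in the final bound and omits the explicit countable intersection over \(\epsilon=1/r\)), but the argument is structurally identical; if anything, your version is more carefully organised about fixing the null set before quantifying over \(x\).
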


\begin{proof}
For any \(\epsilon>0\), holds \[\lim_{m\to\infty}\sup_{x\in\overline{\mathbb{R}}} |\hat{F}_{m}(x) - F(x)| \le \epsilon, a.s.\]  By Lemma~\ref{lem:partition} exists a partition index \(j\) for which \(t_{j} \le x < t_{j+1}\), satisficing:
\footnotesize
\begin{align*}
\hat{F}_{m}(t_{j}) \le \hat{F}_{m}(x) \le \hat{F}_{m}(t^{-}_{j+1}) \land F(t_{j}) \le F(x) \le F(t^{-}_{j+1}),\\
\implies \hat{F}_{m}(t_{j}) - F(t^{-}_{j+1}) \le \hat{F}_{m}(x) - F(x) \le \hat{F}_{m}(t^{-}_{j+1}) - F(t_{j}) \equiv \\
\hat{F}_{m}(t_{j}) - F(t_{j}) + F(t_{j}) - F(t^{-}_{j+1}) \le \hat{F}_{m}(x) - F(x) \le  \hat{F}(t^{-}_{j+1}) - F(t^{-}_{j+1}) + F(t^{-}_{j+1}) - F(t_{j})\\
\therefore \hat{F}_{m}(t_{j}) - F(t_{j}) - \frac{\epsilon}{2} \le \hat{F}_{m}(x) - F(x) \le \hat{F}_{m}(t^{-}_{j+1}) - F(t^{-}_{j+1}) + \frac{\epsilon}{2},
\end{align*}\normalsize
which tends to equality at 0 by the strong law of large numbers. Thus, the rank ordering of any extended real distribution satisfies the Glivenko-Cantelli theorem upon the Kemeny metric for any finite and therefore countable sample.
\end{proof}

\subsubsection{Haar measure}
\label{subsec:haar}
The existence of Haar measures allows us to define admissible procedures such that optimal invariant decision criteria may be established. A function as a Haar measure is defined as a unique countably additive, non-trivial measure \(\mu\) on the Borel subsets of \(G\) satisfying the following properties:
\begin{definition}~\label{def:haar}
\begin{multicols}{2}
\begin{enumerate}
    \item{The measure \(\mu\) is left-translation-invariant: \(\mu (gS)=\mu (S)\) for every \(g\in G\) and all Borel sets \(S\subseteq G\).}
    \item{The measure \(\mu\) is finite on every compact set: \(\mu (K)<\infty^{+}\) for all compact \(K\subseteq G\)}
    \item{The measure \(\mu\) is outer regular on Borel sets \(S\subseteq G\): \[\mu (S)=\inf\{\mu (U):S\subseteq U\}.\]}
    \item{The measure \(\mu\) is inner regular on open sets for compact \(K\) \(U\subseteq G:\)\[\mu (U)=\sup\{\mu (K):K\subseteq U\}.\]}
\end{enumerate}
\end{multicols}
A measure on \(G\) which satisfies these conditions is called a left Haar measure, and is a sufficient and necessary condition to establish right Haar measure existence and proportionality, and therefore equivalence. 
\end{definition}

\begin{lemma}~\label{lem:radon}
The Kemeny metric space satisfies Definition~\ref{def:radon} of a Radon measure space, thereby proving the existence of a Radon derivative.
\begin{definition}~\label{def:radon}
If \(X\) is a Hausdorff topological space, then a Radon measure on \(X\) is a Borel measure \(m\) on \(X\) such that \(m\) is locally finite and inner regular on all Borel subsets.
\end{definition}
\end{lemma}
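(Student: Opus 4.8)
\emph{Proof strategy.} The plan is to verify directly the three clauses of Definition~\ref{def:radon} — that the ambient space is Hausdorff, that the relevant Borel measure is locally finite, and that it is inner regular on every Borel subset — and then to obtain the asserted Radon (Radon--Nikodym) derivative by comparing this measure against the Haar measure of Section~\ref{subsec:haar} through absolute continuity.

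First I would fix the topology. For each finite \(n\) the underlying sample space is the permutation-with-ties space \(\mathcal{M}\) of cardinality \(n^{n}\), metrised by \(\rho_{\kappa}\) of equation~\ref{eq:kem_dist}; every metric space is Hausdorff, and since \(\mathcal{M}\) is finite the induced topology is the discrete, hence compact Hausdorff, topology. Equivalently, under the isometric embedding into \(\overline{\mathbb{R}}^{n}\) established earlier one works inside a compact metrisable — therefore Hausdorff — space and takes \(m\) to be the pushforward under \(\kappa\) of the uniform sampling measure on the common population; this \(m\) is a bona fide Borel measure because \(\kappa\) is piecewise constant, hence Borel. I would take \(m\) to be this uniform (normalised counting) measure, which is precisely the measure consistent with the uniform-sampling hypothesis invoked throughout.

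Next, local finiteness. Since \(\rho_{\kappa}\) has compact, totally bounded support (Lemma~\ref{lem:kem_bounded}) and \(\mathcal{M}\) is finite for every finite \(n\), one has \(m(K)\le m(\mathcal{M})<\infty^{+}\) for every compact \(K\), so \(m\) is finite and a fortiori locally finite. For inner regularity I would argue that in the finite (compact) metric setting every Borel set \(B\) is closed in a compact space, hence compact, so \(m(B)=\sup\{m(K):K\subseteq B,\ K\text{ compact}\}\) trivially; for a general extended-real distribution function \(F\) the same conclusion follows from Lemma~\ref{lem:partition}, whose finite partitions \(\infty^{-}\le t_{1}\le\cdots\le t_{k}\le\infty^{+}\) exhaust \(B\) from inside by finite unions of the corresponding cells up to any \(\epsilon>0\). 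Outer regularity, which is not required by Definition~\ref{def:radon} but is needed for full regularity, follows symmetrically by complementation, exactly as in the Glivenko--Cantelli argument of Theorem~\ref{thm:gc}.

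With the three clauses in hand \(m\) is a Radon measure; being finite it is in particular \(\sigma\)-finite, and it is mutually absolutely continuous with the Haar measure \(\mu\) of Section~\ref{subsec:haar} (both are equivalent to counting measure on \(\mathcal{M}\)), so the Radon--Nikodym theorem supplies the derivative \(dm/d\mu\). I expect the only delicate point to be the choice of topology together with the verification that \(m\) is genuinely a Borel measure there: one must check that the matrix-valued nature of \(\kappa\) breaks neither the Hausdorff separation nor measurability. Reducing everything to the compact metrisable — indeed, for each \(n\), finite — model resolves this, since on such spaces every finite Borel measure is automatically Radon, and this reduction is the step I would write out with the most care.
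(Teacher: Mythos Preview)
Your verification of the three clauses of Definition~\ref{def:radon} is correct, but it proceeds along a different route from the paper's. The paper argues abstractly: it invokes the Hilbert-space structure of equation~\ref{eq:kem_cor} to obtain \(T_{6}\) (perfectly normal Hausdorff) separation, cites the Riesz representation theorem for inner regularity on open sets, appeals to Lemma~\ref{lem:kem_bounded} for local finiteness, and concludes tightness via \textcite{bogachev2007} Theorem~7.1.7. You instead exploit the fact that for each finite \(n\) the space \(\mathcal{M}\) is a finite metric space, so the topology is discrete and compact; on a finite space every Borel measure is automatically finite, locally finite, and inner regular (every subset is compact), which makes all three conditions trivial. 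Your reduction is more elementary and arguably more transparent for the setting at hand; the paper's argument, by contrast, is formulated so as to apply on general metrisable supports without leaning on finiteness.

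One caution: your final step obtains the Radon derivative by applying Radon--Nikodym against the Haar measure of Section~\ref{subsec:haar}, but Lemma~\ref{lem:haar} itself cites Lemma~\ref{lem:radon} for inner regularity, so as written the dependency is circular. You can break the loop cleanly by taking counting measure on \(\mathcal{M}\) as the reference (which you already note is equivalent to both \(m\) and \(\mu\)); on a finite space the Radon--Nikodym derivative with respect to counting measure is just the point-mass density, and no appeal to Lemma~\ref{lem:haar} is needed.
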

\begin{proof}
The Kemeny metric is a Hilbert metric space (by equation~\ref{eq:kem_cor}), and thus is a \(T_{6}\), or perfectly normal space Hausdorff topological vector space. As a result, \(\kappa(X), X \in \overline{\mathbb{R}}^{n \times 1}\) must be both locally finite and inner regular. By the Riesz representation theorem, all metric spaces are inner regular on open sets \(K\). By Lemma~\ref{lem:kem_bounded}, the metric space is locally finite. Thus the finite Borel measure upon the Kemeny metric is tight (in the sense of \cite{bogachev2007} Theorem 7.1.7), and there exists a Radon derivative upon the Kemeny measure. 
\end{proof}

\begin{lemma}
\label{lem:haar}
The Kemeny metric space satisfies all properties of Definition~\ref{def:haar} and is therefore a Haar measure space.
\end{lemma}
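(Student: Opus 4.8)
The plan is to realise the Kemeny construction as a measure on a locally compact topological group and then verify the four clauses of Definition~\ref{def:haar} one at a time, reusing the regularity and finiteness facts already in hand. The natural group here is the additive group of the \(\binom{n}{2}\)-dimensional vector space of skew-symmetric matrices into which \(\kappa\) embeds \(\overline{\mathbb{R}}^{n}\) via equations~\ref{eq:kem_score}--\ref{eq:kem_cor}; equivalently, on the finite-sample side, the uniform counting measure supported on the orbit \(\mathcal{M}\) under the relabelling (symmetric-group) action. Both descriptions carry a locally compact Hausdorff topology — the first because it is finite-dimensional, the second because it is finite and discrete — and in both cases the Kemeny measure is the object whose existence as a Radon measure was established in Lemma~\ref{lem:radon}. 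I would fix the finite-dimensional vector-space picture (the cleaner of the two) and proceed.

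Clauses (2)--(4) of Definition~\ref{def:haar} are then almost immediate. Finiteness on compact sets (clause 2) is Lemma~\ref{lem:kem_bounded}: the spectrum of the metric lies in \([-\tfrac{n^{2}-n}{2},\tfrac{n^{2}-n}{2}]\), so every compact set receives finite mass. Outer regularity on Borel sets (clause 3) and inner regularity on open sets (clause 4) are exactly the content of the Radon property proved in Lemma~\ref{lem:radon} together with the tightness statement cited there from \cite{bogachev2007}; in the finite-dimensional picture this is the standard fact that Lebesgue-type measures on \(\mathbb{R}^{d}\) are Radon, and in the discrete picture it is trivial since every subset is clopen. Thus the only clause requiring genuine work is clause (1), left-translation-invariance.

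For clause (1) the argument I would give is that the Kemeny inner product of equation~\ref{eq:kem_cor} is invariant under the group action: adding a fixed skew-symmetric matrix (respectively, applying a fixed relabelling permutation) is an isometry of the Kemeny norm-space, which is precisely the isometric-embedding property already invoked after equation~\ref{eq:kem_cor}. Hence the pushforward of the Kemeny measure under any left translation \(g\) is again a locally finite, inner-regular Borel measure satisfying clauses (2)--(4), so by the uniqueness-up-to-scaling part of the Haar theorem — or, in the discrete picture, simply because the relabelling action permutes the points of \(\mathcal{M}\) among themselves and the counting measure is manifestly permutation-invariant — it coincides with \(\mu\) itself. This establishes left-invariance, and the concluding sentence of Definition~\ref{def:haar} then upgrades it to equivalence of left and right Haar measures, completing the proof. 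The main obstacle is entirely in clause (1): one must be careful that the relevant group is the \emph{additive} (or relabelling) structure under which \(\kappa\) is genuinely equivariant, and not the composition structure on \(\mathcal{M}\) — which, having order \(n^{n}-n\) rather than \(n!\), is not a group — so correctly identifying the underlying group object is the step needing the most care.
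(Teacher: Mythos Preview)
Your proposal is correct and addresses the same four clauses, but your route differs in emphasis from the paper's very terse argument. The paper dispatches clause~(1) in a single sentence by observing that the Kemeny metric and its Borel \(\sigma\)-algebra are closed under addition and multiplication, cites Lemma~\ref{lem:kem_bounded} for clause~(2), Lemma~\ref{lem:radon} for clause~(4), and asserts that clause~(3) ``follows as a Hilbert space.'' You instead spend most of your effort on clause~(1), explicitly identifying the underlying locally compact group (the additive group of skew-symmetric matrices, or the relabelling action on \(\mathcal{M}\)) and arguing translation-invariance via the isometry property plus the uniqueness-up-to-scaling of Haar measure; you then handle both regularity clauses~(3) and~(4) together from the Radon property of Lemma~\ref{lem:radon}. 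Your care about which group is actually in play --- and your warning that \(\mathcal{M}\) itself, of order \(n^{n}-n\), is not a group under composition --- is a genuine clarification that the paper leaves implicit; the paper's appeal to closure under operations is shorter but does not make that identification explicit.
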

\begin{proof}
The Kemeny metric and its Borel \(\sigma\)-algebra are closed under addition and multiplication, and therefore are left-translation-invariant. The total boundedness of Lemma~\ref{lem:kem_bounded}, guarantees the measure \(\kappa(K)\) is finite for all \(K \subseteq \overline{\mathbb{R}}\). The inner-regularity is proven Lemma~\ref{lem:radon} and the outer regularity follows as a Hilbert space. 
\end{proof}

\subsubsection{Probability distribution of the finite sample Kemeny distance}
The strict sub-Gaussian nature of the Kemeny distribution guarantees that four moments are sufficient, and that the distribution is symmetric (resulting in all odd-moments being centred at 0 w.l.g) and unbiased in expectation. Here, we approach this probability function from a discrete perspectives, using a 
Beta-Binomial distribution, which unlike a normal distribution, is compact and totally bounded finite \(n\), and therefore strictly sub-Gaussian. 

\begin{lemma}~\label{lem:kem_asym_normal}
The bivariate Kemeny metric space is asymptotically normally distributed.
\end{lemma}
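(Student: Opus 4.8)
The plan is to recognise the bivariate Kemeny correlation of equation~\ref{eq:kem_cor} as a U-statistic of degree two and to invoke Hoeffding's central limit theorem, carrying over the standing hypothesis that the rows $Z_i = (x_i,y_i)$ are independently sampled from a common population. Up to the affine normalisation already fixed by the construction, $\rho_{\kappa}(X,Y)$, and hence $\tau_{\kappa}(X,Y)$, equals $\binom{n}{2}$ times the U-statistic $U_n = \binom{n}{2}^{-1}\sum_{k<l} h(Z_k,Z_l)$ with the bounded symmetric kernel $h(Z_k,Z_l) = \tfrac{1}{2}\,\operatorname{sgn}(x_k-x_l)\,\operatorname{sgn}(y_k-y_l)$, the convention $\operatorname{sgn}(0)=0$ absorbing precisely the tie case that motivates working on $\mathcal{M}$ rather than $\mathrm{S}_n$. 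Since $|h|\le\tfrac{1}{2}$ the kernel has moments of every order — this is just the strict sub-Gaussianity already proved — and $E[U_n] = \tau_{\kappa}$ in the population by Lemma~\ref{lem:unbiased}.

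First I would form the first H\'{a}jek projection $h_1(z) = E[h(z,Z_2)] - E[U_n]$ and use the orthogonality of the Hoeffding decomposition to write $U_n - E[U_n] = \tfrac{2}{n}\sum_{i=1}^{n} h_1(Z_i) + R_n$, where the projection average is a sum of independent bounded variables and $\operatorname{Var}(R_n) = O(n^{-2})$, the latter needing only the finite second moment of $h$ guaranteed by Lemma~\ref{lem:kem_bounded}. The Lindeberg--L\'{e}vy central limit theorem applied to the leading term gives $\sqrt{n}\,\tfrac{2}{n}\sum_{i} h_1(Z_i) \Rightarrow N(0,4\sigma_1^2)$ with $\sigma_1^2 = \operatorname{Var}(h_1(Z_1))$, while $\sqrt{n}\,R_n \to 0$ in $L^2$; Slutsky's lemma then yields $\sqrt{n}\,(U_n - \tau_{\kappa}) \Rightarrow N(0,4\sigma_1^2)$, and the affine normalisation transfers this to the standardised Kemeny distance and correlation. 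Non-degeneracy $\sigma_1^2>0$ holds for every non-constant common population; the only way to get $\sigma_1^2=0$ is the identically-zero kernel, i.e.\ a point-mass population, which is exactly the degenerate stratum already excluded when the Kemeny variance of equation~\ref{eq:kem_variance} was shown to be strictly positive. The full metric-space statement — joint asymptotic normality of any finite family of pairwise Kemeny correlations built on a common sample, and of the distance together with the marginal variances of equation~\ref{eq:kem_variance} — then follows from the Cram\'{e}r--Wold device, since every linear combination is again a bounded U-statistic reducing to the univariate argument, with limiting covariance read off from the covariances of the associated first-order projections.

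As an internal consistency check with the rest of the paper I would also push the Beta-Binomial representation established above to its limit: with a trial parameter growing like $\binom{n}{2}$, Stirling's approximation (exactly as invoked in Corollary~\ref{cor:density}) applied to the Beta-Binomial mass function yields a local central limit theorem, so the standardised Kemeny distance converges uniformly to the standard normal density and, in particular, the finite-$n$ negative excess kurtosis noted in the corollary to the strict sub-Gaussianity theorem increases monotonically to $0$ — recovering the ``asymptotically normal, yet strictly sub-Gaussian for every finite $n$'' picture. I expect the main obstacle to be exactly the tie-saturated regime: if the common population is chosen so that the first H\'{a}jek projection vanishes, the correct scaling is $n$ rather than $\sqrt{n}$ and the limit is a (generally non-Gaussian) weighted sum of chi-squares. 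I would dispose of this by restricting, as the theorem implicitly does, to non-degenerate common populations — equivalently to the generic case $\sigma_1^2>0$ — and noting that the excluded boundary coincides with the degenerate distribution already set aside in the discussion following Lemma~\ref{lem:unbiased}.
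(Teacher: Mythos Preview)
Your proposal is correct and is in fact considerably more rigorous than the paper's own argument, but it proceeds by an entirely different route. The paper does not touch U-statistic theory at all: its proof rests on the Beta-Binomial representation of the Kemeny distance and argues purely through moments --- all odd central moments vanish by symmetry, the variance is positive, and the excess kurtosis (as a function of $n$) is negative for every finite $n$ and increases monotonically to $0$. From this moment-matching picture the paper concludes asymptotic normality directly. Your Hoeffding decomposition argument, by contrast, identifies $\tau_{\kappa}$ as a degree-two U-statistic with bounded kernel, applies the H\'{a}jek projection, and obtains the CLT from Lindeberg--L\'{e}vy plus Slutsky; this is the textbook route for Kendall's $\tau$ and carries over cleanly to the tie-inclusive $\mathcal{M}$ setting precisely because the kernel remains bounded.

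What each approach buys: the paper's moment argument is short and ties the result tightly to the Beta-Binomial pmf developed earlier in the same section, but as stated it is incomplete --- convergence of the first four moments does not by itself force convergence in distribution without an appeal to a moment-determinacy or tightness argument, which the paper leaves implicit in the Beta-Binomial limit. Your U-statistic argument is self-contained, gives the correct $\sqrt{n}$ rate and limiting variance $4\sigma_1^{2}$ explicitly, extends immediately to joint normality of several pairwise correlations via Cram\'{e}r--Wold, and correctly flags the degenerate $\sigma_1^{2}=0$ boundary. Your closing ``consistency check'' via Stirling on the Beta-Binomial mass is essentially the paper's entire proof, so you have in effect given both arguments.
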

\begin{proof}
Consider the kurtosis of the probability distribution of the Kemeny distance function for the Beta-Binomial distribution. As a function of \(n\), it is trivially observed to be a monotonically increasing function, and by definition of strict sub-Gaussianity the excess kurtosis is always negative, here for all finite \(n\) with asymptotic normality. By the existence of all odd-moments equal to 0 and a positive variance, the monotonically convergent kurtosis tends to 3 from below as a in the asymptotic limit of \(n\). This maintains the strict sub-Gaussianity over all finite sample sizes, and is otherwise approximately normal with slight leptokurtosis (see Table~\ref{tab:1}), which becomes degenerate for the asymptotic limit.
\end{proof}

This substitution of the Beta-Binomial distribution validates by the approximate normality of the linear function space for the strict sub-Gaussianity of the Kemeny metric space. Functional approximation analysis has found the results presented in Table~\ref{tab:1}, demonstrating the change of the first, second and fourth moments relative to \(n\). Numerically, an approximation function of the variance and the excess kurtosis, as a function of \(n\), are found for \(n \ge 9\):
\begin{subequations}
\begin{equation}
\label{eq:kem_var_approx}
\hat{\sigma}^{2}_{\kappa} = 11.82 - 2.31825n +0.207355n^{2} + .110824n^{3}
\end{equation}
\begin{equation}
\label{eq:kem_kurt_approx}
\hat{\mu}_{4} = -\exp(.0002939n^{2} - .05537n - 1.149).
\end{equation}
\end{subequations}
The restriction is negligible, as for \(n<9\), the central limit theorem is only valid asymptotically w.r.t. \(\mathcal{M}\). For this subset are found symmetric distributions where the most common frequency values are those symmetrically adjacent to the expectation (i.e., ties of distance \(\pm 1\)), whose average is then the expectation. It is also trivially confirmed, either algebraically or numerically, that the standard errors of equation~\ref{eq:kem_cor}, produced using the square root of equation~\ref{eq:kem_var_approx} scaled by \(\frac{1}{\sqrt{n-1}}\), are smaller than those traditionally produced by Kendall's \(\tau\)\footnote{\(\sigma_{\tau} \approx \sqrt {n(n-1)(2n+5)/2}\)}. This empirically confirms our initial claim and derivations of an improved and unbiased stochastically dominating test for the bivariate order independence, using the Kemeny correlation. 

\begin{table*}[!ht]
\caption{Distributional characterisation of the Kemeny distance function, for finite \(n\), permuted exhaustively for \(n \le 8\), and sampled with 3,294,172 examples for all larger \(n\).}
\label{tab:1}
\centering
\scriptsize
\begin{tabular}{lccc | lccc | lccc}
\toprule
 \(n\) & \(\mu_{1}\) & \(\sigma_{\kappa}\) & Excess \(\mu_{4}\) &  \(n\) & \(\mu_{1}\)  & \(\sigma_{\kappa}\) & Excess \(\mu_{4}\) &  \(n\) & \(\mu_{1}\) & \(\sigma_{\kappa}\) & Excess \(\mu_{4}\)\\
\midrule
 2 & 0.000 & 0.707 & -1.875 &    25 & 0.002 & 42.647 & -0.091 &   80 & -0.066 &  240.736 & -0.023\\
   3 & 0.000 & 1.610 & -1.171 &    26 & 0.019 & 45.183 & -0.083 &   85 & 0.0643 & 263.268 & -0.0274\\
   4 & 0.000 & 2.646 & -0.747 &    27 & -0.066 & 50.477 & -0.080 & 92 & 0.144 & 296.597 & -0.023\\
   5 & 0.000 & 3.795 & -0.548 &    28 & 0.040 & 53.177 & -0.076 & 96 & 0.155 & 315.766 & -0.0276\\ 
   6 & 0.000 & 5.046 & -0.432 &    29 & 0.040 & 55.900 & -0.075 & 98 & & & \\ 
   7 & 0.000 & 6.392 & -0.356 &    30 & 0.046 & 58.674 & -0.072 & 100 & -0.038 & 335.703 & -0.024 \\ 
   8 & 0.000 & 7.826 & -0.302 &    31 & 0.011 & 61.506 & -0.069 & 103 & & & \\ 
   9 & 0.000 & 9.345 & -0.259 &    32 & 0.043 & 64.418 & -0.066 & 105 & 0.0280 & 360.907 & -0.022 \\ 
   10 & 0.006 & 10.939 & -0.230 &    33 & 0.014 & 67.287 & -0.061 & 108& & & \\ 
   11 & 0.009 & 12.622 & -0.212 &    34 & 0.082 & 70.272 & -0.062 & 112 & & & \\ 
   12 & -0.007 & 14.352 & -0.191 &    35 & -0.016 & 73.262 & -0.065 & 115 & & & \\ 
   13 & -0.017 & 16.168 & -0.173 &     36 & -0.036 & 73.262 & -0.057 & 123 & & & \\ 
   14 & 0.006 & 18.064 & -0.161 &    37 & -0.005 & 76.255 & -0.063 & 125 & 0.021 & 468.456 & -0.013 \\ 
   15 & -0.010 & 19.996 & -0.148 &    38 & 0.010 & 79.419 & -0.060 & 126 & & & \\ 
   16 & -0.025 & 22.005 & -0.141 &    40 & -0.035 & 85.764 & -0.058 & 128 & & & \\  
   17 & 0.001 & 24.066 & -0.131 &    45 & -0.052 & 102.107 & -0.052 & 135 & & & \\ 
   18 & 0.010 & 26.216 & -0.122 &    50 & 0.035 & 119.342 & -0.043 & 138 & & & \\ 
   19 & -0.021 & 28.386 & -0.117 &    55 & 0.075 & 137.645 & -0.039 & 143 & & & \\ 
   20 & 0.025 & 30.645 & -0.105 &    60 & 0.057 & 156.656 & -0.036 & 147 & & & \\ 
   21 & 0.006 & 32.942 & -0.105 &    62 & -0.100 & 164.530 & -0.039 & 155 & & & \\ 
   22 & 0.008 & 35.272 & -0.096 &  64 & -0.020 & 172.447 & -0.039 & 175 & & & \\ 
   23 & 0.031 & 37.694 & -0.096 &    68 & -0.032 & 188.808 & -0.030 & 200 & & & \\ 
   24 & -0.012 & 40.155 & -0.095 &    75 & 0.042 & 218.527 & -0.026 & 225 & -0.056 & 1127.979 & -0.013 \\ 
\bottomrule
\end{tabular}
\end{table*}

%

The centred moments of the bivariate beta-Binomial distribution upon support \([0,n^{2}-n]\), trivially re-centred to \([\frac{n-n^{2}}{2},\frac{n^{2}-n}{2}]\) for observed sample size \(n \in \mathbb{N}\) results in the following expression of the population variance:
\begin{equation}
\label{eq:trunc_kemeny_var}
\sigma^{2}_{\kappa} = \frac{(n-1)^{2} \cdot (n+4)\cdot(2n-1)}{18n}, ~ n \in \mathbb{N}^{+}>1
\end{equation}

Induction upon \(n\) allows us to examine the asymptotic behaviour for the positive second and fourth moments as functions of \(\frac{n^{2}-n}{2}\), and as we previously observed, the second moment is asymptotically divergent but otherwise strongly converges for all \(n \in \mathbb{N}^{+}>2\); likewise, the fourth central moment is seen to converge to 0 from below (\(\frac{-54}{25n}\)), as given in Table~\ref{tab:1} and confirming Lemma~\ref{lem:kem_asym_normal}. The empirical approximations given in equation~\ref{eq:kem_var_approx} are also observed to be consistent with equation~\ref{eq:trunc_kemeny_var}, substantiating the empirical findings given in Table~\ref{tab:1}. 

As the distribution function is defined to be the beta-Binomial, the excess kurtosis is an explicit function of \(n\), and therefore does not require separate estimation, but is instead resolved by the convergence of the symmetric distribution to the standard normal distribution, as \(\lim_{n\to\infty^{+}}\).

\subsubsection{Cochrane's theorem upon arbitrary uniformly sampled variables}
If we accept that a beta-Binomial distribution is necessary to restrict the support to the interval of the \(\kappa\) function and thus the Kemeny metric function (see Lemma~\ref{lem:kem_bounded}), then the univariate distribution of the variance must also be a central \(\chi^{2}_{\nu=1}\) distribution, using the modified Bessel function of the first kind \(I_{v}\) is of the form:
\begin{equation}
f(X\mid \sigma_{\kappa},n) = \frac{1}{2}\bigg(\frac{X}{\sigma^{2}_{\kappa}}\bigg)^{\frac{M-1}{2}}\exp\Bigg(-\frac{X+\sigma^{2}_{\kappa}}{2}\Bigg)I_{v}(\sigma_{\kappa}\sqrt{X}),~ X = \kappa(X)
\end{equation}
To prove this, consider the set of \(\kappa^{2}\) values for any space \(\mathcal{M}\) for which the variance is positive, and for which the expectation is always 0 for all \(m\). By the Chernoff bound then, the known tails of the CDF may be obtained at the truncated points \(a = \frac{n-n^{2}}{2},b = \frac{n^{2}-n}{2}\). As the \(\kappa\) function is the basis of a complete metric space, it is closed under both addition and multiplication, and therefore allows for the sum of \(\chi^{2}\) variables to also be distributed as such. Conditions for this extension were already proven in \textcite{semrl1996}, and therefore Cochran's theorem holds for randomly sampled variables of length \(n\) upon the Kemeny measure space. From this directly follows the \(\chi^{2}_{1}\) distribution, allowing us to accept Chernoff's bound for the truncated uniform distribution without issue, purely as a function of the already proven generalised central limit theorem for strictly sub-Gaussian random variables, as is measured upon the Kemeny metric, assuming only uniform sampling independence.

A natural question then emerges, given a parametric distribution of a statistical estimator which satisfies the Gauss-Markov theorem upon any general bivariate family sampled independently and identically: how does one construct the test statistic to examine for significant differences from 0, the null hypothesis, upon a finite sample. Our representation allows us avoid the bifurcation of the variance approximation between tied and non-tied samples, and allows for an exact p-value to be uniquely determined for all finite samples, using the distances (affine linear transformations of the respective distances for the correlations) divided the standard deviations of these distributions.

For \(n=15\), with \(N = (n^{2}-n),\alpha_{1} = \alpha_{2}\) and \(\sigma_{\kappa}\) given by equation~\ref{eq:kem_variance}, follows a symmetric distribution centred at 0 with 2.5\% and 97.5\% quantiles of (\(-1.849937,1.850131\))\footnote{Similarly, for \(n=50\) the 95\% decision threshold for \(H_{0} = 0\) difference in mean distance is at \(z_{\Phi}(95\%) =  \pm 1.875806\).}, similarly using the asymptotic approximation of the variance for the Kendall \(\tau_{b}\) correlation estimation and test statistic \(z_{b}\):

\begin{equation}
\label{eq:z_kemeny}
z_{\phi} = \frac{-\rho_{\kappa}(X,Y)}{\sqrt{\sigma^{2}_{\kappa}}}
\end{equation}

\begin{subequations}
\begin{equation}
z_{b}=\frac{n_{c}-n_{d}}{\sqrt {v}} 
\end{equation}
\begin{equation}
\begin{array}{cl}
v     & =  \frac{(v_{0}-v_{t}-v_{u})}{18} + v_{1}+v_{2}\\
v_{0} & =  n(n-1)(2n+5)\\
v_{t} & =  \sum _{i}t_{i}(t_{i}-1)(2t_{i}+5)\\
v_{u} & =  \sum _{j}u_{j}(u_{j}-1)(2u_{j}+5)\\
v_{1} & =  \sum _{i}t_{i}(t_{i}-1)\sum _{j}u_{j}(u_{j}-1)/(2n(n-1))\\
v_{2} & =  \sum _{i}t_{i}(t_{i}-1)(t_{i}-2)\sum_{j}\frac{u_{j}(u_{j}-1)(u_{j}-2)}{(9n(n-1)(n-2))}
\end{array}
\end{equation}
\end{subequations}

\begin{figure}[!ht]
\centering
\includegraphics[scale = .45]{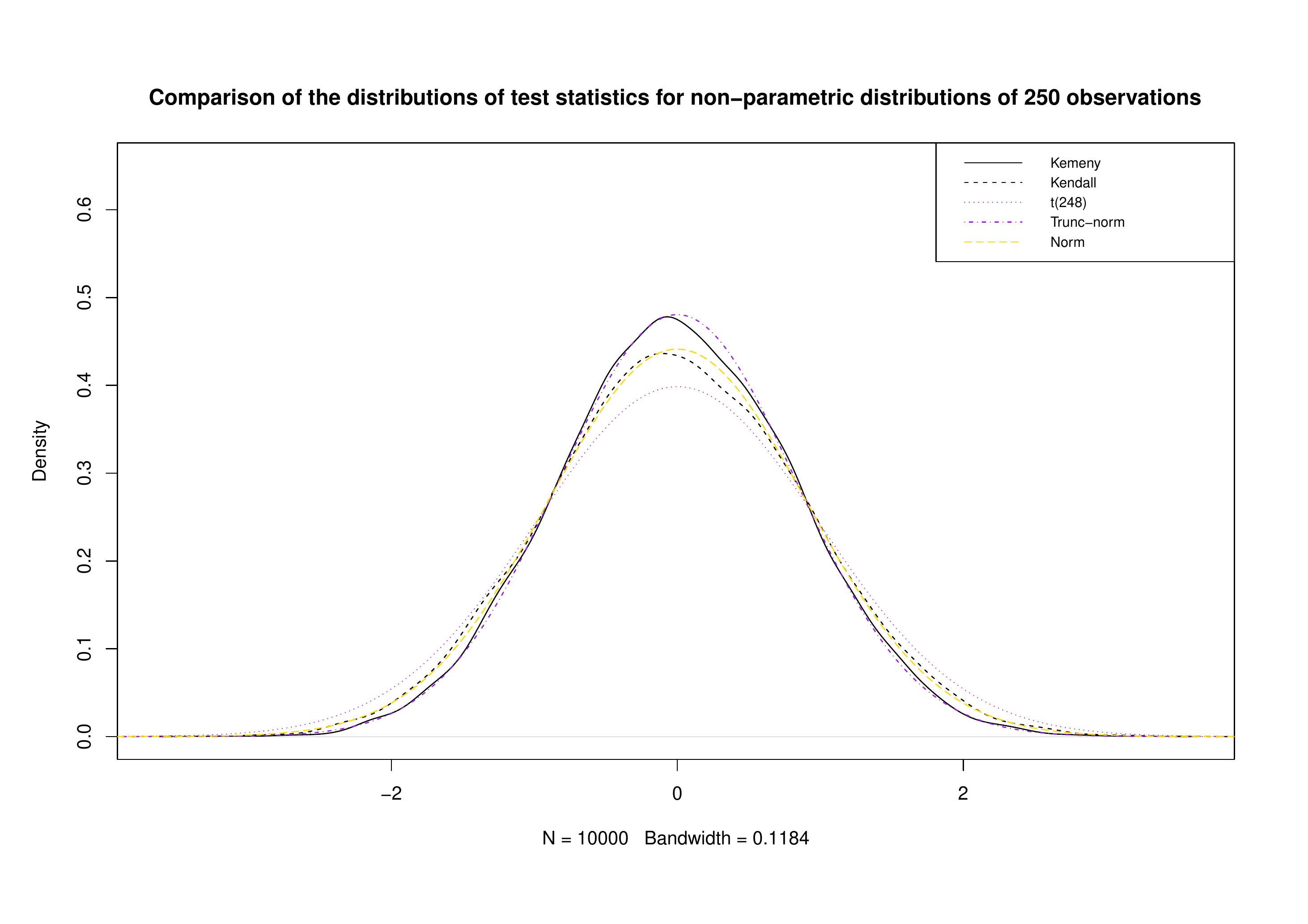}
\caption{Depiction of the distribution of \(n = 250\) replicated 10,000 times, fitted using the centred distributions of the test statistics found in Table~\ref{tab:3}.}
\label{fig:demo}
\end{figure}

As an empirical demonstration of the appropriate choice of density function and the utility of the Kemeny correlation, we construct 5,000 iterations upon multiple sample sizes of \(\{15,25,100,250,1250,2236\}\), presented Table~\ref{tab:3}. These results provide the test statistic calculated from using the respective formulas for distributions and correlation coefficients, demonstrating quite clearly that the Kendall's \(\tau_{b}\) estimator is approximately 10\% biased and simultaneously less concentrated than the corresponding Kemeny correlation in the presence of ties. 

Provided in Figure~\ref{fig:demo}, is the centred empirical density functions of the test-statistics centred at 0 for \(n = 250\), comparing the density approximations for the empirical Kemeny and Kendall \(z\) test statistics, along with: (1) the Truncated normal distribution, (2) a Student's \(t_{248}\) distribution characterising a typical Pearson \(r\) test distribution, (3) and the standardised normal distribution with standard deviation as given by \textcite{diaconis1977}. In Figure~\ref{fig:demo} for the corresponding entries from Table~\ref{tab:3}, the mean difference between the Kendall and Kemeny coefficients is not 0, and attention should be upon the shape of the tails in the distribution. The finite sample bias of the tied Kendall \(\tau_{b}\) correlation coefficient is observed to be positively and symmetrically biased away from 0, leading to an empirical confirmation that when averaging over non-measurable ties, the rate of Type I errors is inflated by approximately 10\% when using Kendall's \(\tau_{b}\), even for larger sample sizes.  

\begin{table}[!ht]
\centering
\scriptsize
\caption{Comparison of the distributions of the Kendall \(\tau_{b}\) and Kemeny correlation estimators' respective test statistic distribution, for distinct sample sizes with 5,000 replications.}
\label{tab:3}
\begin{tabular}{ccccccccc}
\toprule
                        &         & mean & sd & median & range & skew & excess kurtosis\\
\midrule
\multirow{ 2}{*}{n = 15} & Kendall & -1.39 & 0.98 & -1.47 & 6.75 & 0.33 & -0.03\\
                         & Kemeny  & -1.23 & 0.87 & -1.30 & 6.05 & 0.29 & -0.02\\
\midrule
\multirow{ 2}{*}{n = 25}  & Kendall & -1.87 & 0.94 & -1.91 & 6.65 & 0.27 & -0.09\\
                          & Kemeny  & -1.67 & 0.85 & -1.47 & 5.93 & 0.24 & -0.08\\
\midrule
\multirow{ 2}{*}{n = 100} & Kendall & -3.84 & 0.93 & -3.86 & 6.36 & 0.12 & -0.09\\
                          & Kemeny  & -3.50 & 0.85 & -3.51 & 5.91 & 0.11 & -0.08\\
\midrule
\multirow{ 2}{*}{n = 250} & Kendall & -6.07 & 0.90 & -6.08 & 6.46 & 0.11 & -0.10\\
                          & Kemeny  & -5.55 & 0.83 & -5.56 & 5.80 & 0.10 & -0.11\\
\midrule
\multirow{ 2}{*}{n = 1250} & Kendall & -13.607 & 0.972 & -13.614 & 7.625 & 0.051 & 0.021\\
                           & Kemeny  & -12.480 & 0.894 & -12.485 & 7.065 & 0.047 & 0.019\\
\midrule
\multirow{ 2}{*}{n = 2236} & Kendall & -18.207 & 0.961 & -18.218 & 7.145 & 0.030 & 0.052\\
                           & Kemeny  & -16.705 & 0.885 & -16.719 & 6.622 & 0.028 & 0.052\\
\bottomrule
\end{tabular}
\end{table}

These empirical results demonstrate two conclusive findings. First, the Kendall \(\tau_{b}\) is biased for finite samples in the presence of ties (and is a non minimum variance estimator function, as would be expected) and second, that finite sample distribution of the test statistics is uniformally most power when characterised as the beta-Binomial distribution, even for relatively large sample sizes. Thus, our estimator is demonstrated to both dominate Kendall's \(\tau_{b}\), and also to be computationally simpler as well, with exact p-values allowed subject to the assumptions of Theorem~\ref{thm:gauss-markov}.   

\subsection{Calculation of the Likelihood function}

We have obtained a finite state space upon \(\mathcal{M}\) such that under \(\lim_{m \to \mathcal{M}}\) is defined a probability distribution function \(\phi_{\kappa}\) of the bivariate correlation coefficient. This linear function space has been found to converge almost surely by the strong central limit theorem, under uniform sampling, to the beta-Binomial distribution of dimension \(k\). The natural next step is the examination of the numerical solution to maximum likelihood estimation. 

\(\mathcal{M}\) is almost surely finite, and therefore presents a sequence of viable permutation matrices \(m = \kappa(X), ~ m \in \mathcal{M},~X\in \overline{\mathbb{R}}^{n\times 1}\), for which one element is, almost surely, the most likely to be observed. This follows for a given distribution \(\bar{X} = \rho_{\kappa}(X,X) = 0, \sigma^{2}\), defined for any distance function and in equation~\ref{eq:kem_variance}, respectively. Likewise then, for any bivariate pair of independent random variables, the Hadamard product of any two such matrices which are non-degenerate characterise the permutation vector space whose solution is the suitably scaled Kemeny correlation as given in equation~\ref{eq:kem_cor}. 

The permutation space with ties upon state \(X\) as been shown to be characterised by a truncated uniform probability distribution represented by \(\kappa(X)\) with population parameters \(\sigma^{2}_{\kappa}(X)\) and the Kemeny distance from the arbitrary origin. Likewise, the bivariate combination of two such states for variables \(X,Y\) is a beta-Binomial distribution characterised by the sample size \(n\) and states, or distributions, \(\kappa(X)\) and \(\kappa(Y)\), represented by the variance-covariance matrix \(\Xi\). This permutation space is therefore representable as a Cayley graph of the data, with a finite countable set of states upon \(\mathcal{M}\) presented for the nodes \(V\) and edges \(E\): \(g = (V,E)\). The set of nodes is the set of states, \(V = \Omega\), and the set of edges \((x,y)\) is then determined by the state transition probabilities upon the Markov chain. A directed edge between two nodes therefore must represent a non-zero probability of transitioning from states \(x\to y\) with weight \(\Pr(x,y) = \phi_{\kappa}(y \mid{x})\). As a probabilistic representation, said graph must possess certain structural properties.

The fundamental theorem of Markov Chains asserts that if a given Markov Chain is both irreducible and aperiodic, then it is guaranteed to have a unique stationary distribution \(\pi\), and that said chain converges for any \(x \in \Omega\):
\[\lim_{t \to \infty^{+}} \Pr(x,y)^{t} \to \pi[y].\]
As would otherwise be expected, the space of \((\mathcal{M}_{x},\mathcal{M}_{y})\) and in particular, \(\mathcal{M}_{x} \times \mathcal{M}_{y}\) is quite large, given as it is by the exponential \(n^{n}-n\) with the corresponding bivariate matrix \(\Pr: n^{n}-n \times n^{n}-n\) in size, such that by sampling \(x \in \Omega\), the probability of sampling \(x\) is proportional to \(w(x)\):
\[\pi(x) = \frac{w(x)}{Z} \equiv \pi(x) \propto w(x).\] We immediately note that any sequence of scores which minimises the Kemeny distance must also immediately satisfy this requirement and thereby obtain both irreducibility and aperiodicity, and that by the Glivenko-Cantelli theorem (Theorem~\ref{thm:gc}), the transition probability \(\Pr(x\mid {y})\) is independent to the stationary distribution \(\pi\) such that: 
\begin{align*}
\Pr(x\mid y) = \Pr(x,y) \times A(x,y)\\
\pi(x) \times k(x,y) = \pi(y) \times \Pr(y,x) \times A(y,x)\\
\frac{A(x,y)}{A(y,x)} = \frac{w(y) \times \Pr(y,x)}{w(x) \times k(x,y)}
\end{align*}


Upon such a structure, let us define the rectangular fixed matrix \(D^{n \times p} \in \overline{\mathbb{R}}\) for which is then a vector \(\theta^{k}\) of \(k\) random variables, with distribution \(\pi(\cdot)\). Under either Bayesian or Frequentist inference then is composed the following expectation: \[E(f(\theta)) \propto \frac{\int f(x)\pi(x)\diff{x}}{\int\pi(x)\diff{x}},\] for some function \(f(\cdot)\). Let us define the transition kernel \(\Pr(x,\diff{y}) = \Pr(x,y)\diff{y} + r(x)\delta_{x}(\diff{y})\), with \(f(x): \mathcal{X} \to \mathbb{R}^{+}\) a non-negative density with equality when \(\Pr(x,x)=0\), \(\delta_{x}(\diff{y})\) is the Dirac delta function in \(\diff{y}\), and let the probability that the chain remains at \(x\) be defined as:\[r(x) = 1 - \int_{\mathcal{X}}\Pr(x,y)\diff{y}.\] Let \(\Pr(x,A)\) define the transition kernel for \(x \in \mathcal{X}\) and \(A \in \mathcal{B}(\mathcal{X})\), the Borel \(\sigma\)-field on \(\mathcal{X}\), such that \(\mathbb{P}(x,\cdot),~ \forall{x}\in\mathcal{X}\) is a probability measure and which for all \(A \in \mathcal{B}(\mathcal{X}), \Pr(\cdot,A)\) is measurable. 

We desire a transition kernel such that a unique stationary distribution results with density \(\pi(x)\) and for which the law of large numbers and the central limit theorems both apply. Then the function \[K(\theta\mid\theta^{t-1}) = \alpha(\theta\mid\theta^{t-1})q(\theta\mid\theta^{t-1}) + (1-\alpha(\theta^{t-1}))\delta_{\theta^{t-1}}(\theta),\] is a probability density in \(\theta\) for given value \(\theta^{t-1}\) measurable in \(\theta^{t-1}\), such that \(\Pr\) is a conditional probability density function expressible as \(\Pr(y\mid{x})\). This can be easily seen to be equivalent to the set of solutions \(m^{\prime} \in \mathcal{M}\) which possess both the same bivariate distance (equation~\ref{eq:kem_cor}) and univariate variances (equation~\ref{eq:kem_variance}; actually standard deviations, which by proportionality are equivalent). 

At the same time then, each cell \((i,j)~ i,j = 1,\ldots,k \in \Pr\) thereby denotes a regular conditional probability \parencite[p.~77-80]{breiman1968} constructed as the probability of \(\Pr(y\mid{x})\), subject to the restriction that \((x,y)\) are respective elements in \(\mathcal{M}\). Thus, the search is to identify the exact same condition previously described regarding the distance and marginal standard deviations, which allows a more pragmatic representation equivalence to be recognised \(\Pr(x,y) = \Pr(y\mid{x})\), over the Hadamard product of all \(n^{2}-n\) cells as per equation~\ref{eq:kem_cor}. 

Conditional independence follows for the multivariate set of independently sampled extended reals \(X^{n \times k} \in \mathcal{X}\) for \(\Pr(\theta\mid{X})\), where we re-express the kernel as the expectation of the indicator function \[E(I(X_{n})\mid X_{n-1}) = \int \Pr(x,\diff{y})I(y) = \int \delta(x_{1},\ldots,x_{i-1},y,x_{i+1},\ldots,x_{d})f(y\mid{x})\diff{y}.\] As constructed, the indicator sub-space is identical to that of the support of the Kemeny metric function space, and the Markov condition guarantees the existence of the expectation of each parameter \((x\mid{y},\theta) = (y\mid{x},\theta)\). 

Consider the approximately multivariate normal distribution for all possible functions as given upon the Kemeny metric space by \((\Xi,\phi_{\kappa})\) which includes \(f,g\) and for which \[\sup \text{cor}_{\pi}(f(x_{1}),g(x_{2}))\] is a guaranteed linear function whose convergence rate \(\lambda^{*} = \tau^{2}_{\kappa} = \arcsin(r^{2})\cdot \frac{2}{\pi}\) is bounded by the ordering induced by the Kemeny correlation \parencite{peskun1973} or the Pearson correlation (for the bivariate normal case).
\subsubsection{Maximum entropy derivation of the Kemeny correlation PMF}

For domain \([0,n^{2}-n]\) for all finite \(n\), consider the symmetric Riffled Beta Binomial distribution, treated as a mixture of two distinct Beta Binomial distributions for both even and odd distances (and therefore mutually orthogonal domains without ties and with only ties). Define the support of this distribution to be \(m = [0,1,\ldots,2m]\), such that \(E(X) = \tfrac{n^{2}-n}{2} = m\) upon the Kemeny metric, and allow the weighted mixture of the odd and even (by complement) distances \(w = \tfrac{n^{n} - n - n!}{2(n^{n} - n)} = 0.5\). 

The central moments are as follows, for \(\alpha_{1}>0,\alpha_{2}>0,w \in [0,1]\): 
\begin{subequations}
\begin{multline}
\mu_{2} = \frac{1}{(1+2\alpha_{1}) (1+2\alpha_{2})} \Bigg(1-2m + m^{2} - w + 
2mw + 2\alpha_{2}(-1+m+w-mw+m^{2}w) - \\ 2\alpha_{1}\Big(-1 + m(2-3w) + m^{2}(w-1) + w-2\alpha_{2}(w+m-1)\Big)\Bigg) 
\end{multline}
\begin{equation}
\mu_{3} = 0
\end{equation}
\begin{multline}
\mu_{4} = 5 - 8m + 3m^{2} -5w + 6mw + \tfrac{(m-1)mw(2+3(m-1))}{2+4\alpha_{1}} - \tfrac{3mw(m-3)(m-2)(m-1)}{6+4\alpha_{1}} - \\ \tfrac{m(m-2)(m-1)(8+3(m-3))(w-1)}{2+\alpha_{2}} + \tfrac{3(w-1)(m-1)(m-2)(m-3)(m-4)}{6+\alpha_{2}}.
\end{multline}
\end{subequations}
Allowing \(\alpha_{1}>0, \alpha_{2}>0\), then holds both the general variance formula (equal to equation~\ref{eq:kem_variance}):
\begin{equation}
\sigma^{2}_{\kappa} = \tfrac{1}{2}\bigg(\tfrac{m(m-1)}{1+2\alpha_{1}} + \tfrac{(m-1)(m-2)}{1+2\alpha_{2}} + 2m -1\bigg),
\end{equation}
\begin{equation*}
\frac{(2(1+\alpha)(3+6(m-1)m(2+m(m-1)) + 4\alpha(-4 + m(11+m(6m-11))) + 4\alpha^{2}(5+2m(3m-5))))}{((3+2\alpha)(1-2\alpha + 2m (2\alpha + m -1))^{2})}
\end{equation*}
where given the given definition of \(m\) may be substituted, from which follows:
\begin{equation}
\mu_{4} = 2\cdot\frac{\splitfrac{(2 (\alpha + 1) (4 \alpha^2 ((n^2 - n) (\tfrac{3}{2} (n^2 - n) - 5) + 5) + 4 \alpha (\tfrac{1}{2} (n^2 - n) (\tfrac{1}{2} (n^2 - n) (3 (n^2 - n) - 11) + 11) - 4) + }{3 (n^2 - n) (\tfrac{1}{2} (n^2 - n) - 1) (\tfrac{1}{2} (n^2 - n) (\tfrac{1}{2} (n^2 - n) - 1) + 2) + 3))}}{\tfrac{1}{2}((2 \alpha + 3) (-2 \alpha + (n^2 - n) (2 \alpha + \tfrac{1}{2} (n^2 - n) - 1) + 1)^2)}
\end{equation}

Given the known second moment equation we can solve for \(\alpha\) for any sample size with and without the kurtosis:
\begin{subequations}
\begin{equation}
\alpha =  -\frac{5}{2} - \frac{3\mu_{2}^{2}}{-3\mu_{2}^{2} + \mu_{4}} = \frac{9-5\mu_{4}}{2(-3+\mu_{4})},
\end{equation}
\begin{equation}
\alpha(n) = \frac{(n - 1) (9 n^3 - 4 n^2 - 14 n + 8)}{(2 (n - 2) (4 n^2 + 9 n - 4))},~ n \in \mathbb{N} > 1
\end{equation}
such that given:
\begin{equation}
q = \sqrt{2}\sqrt{\frac{\mu_{2}\mu_{4}}{3\mu_{2}^{2} - \mu_{4}}}
\end{equation}
\end{subequations}
follows the non-normalised probability mass function for arbitrary countable sample size \(n\):
\begin{equation}
\label{eq:kem_pdf}
f(x\mid \alpha,n) \propto (q^{2} - x^{2})^{\alpha}
\end{equation}
Thus completes the probability mass function of the strictly sub-Gaussian bivariate correlation. Explicitly, this section resolves the otherwise present non-measurable set upon the beta-Binomial distribution, and optimisation function problems should use this representation, or otherwise enact a Gibbs sub-sampling restriction upon the valid objective parameter solution space.

\subsubsection{Maximum likelihood derivation of the Kemeny correlation}

For \(n\) elements uniformly sampled upon \((\kappa(x),\kappa(y))\) then exists parameters \(\tau_{\kappa},\sigma^{2}_{\kappa}(x),\sigma^{2}_{\kappa}(y)\) which are distributed as \(\phi_{\kappa}\). Treating the data space as fixed, a likelihood function  then follow: \[\mathcal{L}(\tau_{\kappa},\sigma_{\kappa}(x),\sigma_{\kappa}(y)\mid \kappa(x),\kappa(y),\alpha) = \frac{\phi_{\kappa}(x,y)}{\partial{\sigma_{\kappa}(x)},\partial{\sigma_{\kappa}(y)},\partial{\tau_{\kappa}(x,y)}},\] each element of which is orthonormal, and therefore may be separably estimated. We use the notation that \(\theta = \{\sigma_{\kappa}(x),\sigma_{\kappa}(y),\tau_{\kappa}(x,y)\}\) allowing \(f = \phi_{\kappa}\) for \(\tau_{\kappa}\), and \(f = U_{a,b}(0,\sigma_{\kappa}^{2})\), as a uniform distribution with variance \(\sigma^{2}_{\kappa}\) in the interval \([0,\frac{n^{2}-n}{2}]\) denoting the order statistics.

The correlation coefficient then, allowing the variances of the correlations to be fixed to 1 without loss of generality, is the monotonically convergent product of the probabilities upon the \(\mathcal{M}_{x} \times \mathcal{M}_{y}\) state space whose score \(\frac{\partial{f}}{\partial{\theta}} = 0\), such that 
\begin{equation*}
\begin{aligned}
E \left[\frac {\partial }{\partial \theta }\log f(x,y\mid\theta,\alpha )\mid \theta \right]  & = \int _{\mathbb {R} }{\frac {{\frac {\partial }{\partial \theta }}f(x,y\mid\theta )}{f(x,y\mid\theta )}}f(x,y\mid\theta )\,\diff{\phi(x,y)}\\
 & = {\frac {\partial }{\partial \theta }}\int_{\overline{\mathbb{R}}}f(x,y\mid\theta )\,\diff{\phi(x,y)}\\
 & = {\frac {\partial }{\partial \theta }}1=0.
 \end{aligned}
\end{equation*}
This in turn allows the construction of the variance of the score, the Fisher information, as well:
\begin{equation}
\mathcal {I}(\theta )=E \left[\left({\frac {\partial }{\partial \theta }}\log f(x,y\mid\theta )\right)^{2}\mid\theta \right]=\int _{\overline{\mathbb{R}} }\left({\frac {\partial }{\partial \theta }}\log f(x,y\mid\theta )\right)^{2}f(x,y\mid\theta )\,\diff{\phi(x,y)}.
\end{equation}

Desirable properties for a maximum likelihood estimator now follows, specifically consistency and efficiency, which are proven to hold upon the Kemeny metric. Consistency is first established by examination of the Borel-Cantelli lemma for the Kemeny metric space:

\begin{lemma}
The Kemeny estimator functions and their corresponding Beta-Binomial pmf \(\phi_{\kappa}\) is a consistent estimator which satisfies the Borel-Cantelli lemma.
\end{lemma}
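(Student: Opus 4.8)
The plan is to obtain \emph{strong} consistency directly from the strict sub-Gaussianity of the Kemeny correlation together with the first Borel--Cantelli lemma, and then to note that the Beta--Binomial pmf $\phi_{\kappa}$ --- through the Chernoff bound already invoked before Cochran's theorem --- supplies precisely the summable tail probabilities the lemma consumes. Write $\hat{\tau}_{\kappa}^{(n)}$ for the normalised estimator of equation~\ref{eq:kem_cor} computed on a uniform sample of size $n$ from the common population, with population value $\tau$. By Lemma~\ref{lem:unbiased}, $E(\hat{\tau}_{\kappa}^{(n)}) = \tau$, so only the fluctuation about $\tau$ must be controlled; the pmf parameters $\alpha(n)$ and $\sigma^{2}_{\kappa}(n)$ are deterministic functions of $n$ as derived above and are therefore trivially consistent.

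First I would record the variance decay: dividing the Kemeny distance by its maximal value $\tfrac{n^{2}-n}{2}$ turns equation~\ref{eq:trunc_kemeny_var} into $\operatorname{Var}(\hat{\tau}_{\kappa}^{(n)}) = \tfrac{2(n+4)(2n-1)}{9n^{3}} = \Theta(n^{-1})$, which already yields weak consistency by Chebyshev and matches the $\tfrac{1}{\sqrt{n-1}}$ standard-error scaling noted earlier. For the strong statement, strict sub-Gaussianity (Definition~\ref{def:stict_sg}) gives the centred normalised statistic a variance proxy equal to its variance $\bar{\sigma}^{2}_{n} = \Theta(n^{-1})$, so that for every $\epsilon>0$
\[
\Pr\!\big(|\hat{\tau}_{\kappa}^{(n)} - \tau| > \epsilon\big) \;\le\; 2\exp\!\Big(-\tfrac{\epsilon^{2}}{2\bar{\sigma}^{2}_{n}}\Big) \;\le\; 2\exp\!\big(-c\,n\,\epsilon^{2}\big)
\]
for a constant $c>0$ depending only on the population; the identical bound also follows by applying the Chernoff bound directly to $\phi_{\kappa}$ on its truncated support. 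Summing this geometric-type series gives $\sum_{n\ge1}\Pr\!\big(|\hat{\tau}_{\kappa}^{(n)} - \tau| > \epsilon\big) < \infty$ for each fixed $\epsilon$.

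The first Borel--Cantelli lemma then applies to the events $A_{n}(\epsilon) = \{|\hat{\tau}_{\kappa}^{(n)} - \tau| > \epsilon\}$, giving $\Pr(\limsup_{n} A_{n}(\epsilon)) = 0$; that is, almost surely only finitely many $A_{n}(\epsilon)$ occur, and intersecting over the countable family $\epsilon = 1/k$, $k\in\mathbb{N}^{+}$, yields $\hat{\tau}_{\kappa}^{(n)} \to \tau$ almost surely, i.e.\ strong consistency. Crucially, no independence across sample sizes is required, since only the first (not the second) Borel--Cantelli lemma is invoked, so the nestedness of successive samples is harmless. One may alternatively route consistency through Theorem~\ref{thm:gc}, since $\hat{\tau}_{\kappa}^{(n)}$ is a sup-norm continuous functional of the empirical distribution function and hence inherits Glivenko--Cantelli convergence by the continuous mapping theorem; efficiency is then deferred to the Fisher-information argument $\mathcal{I}(\theta)$ assembled above.

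The main obstacle I anticipate is not the Borel--Cantelli step, which is routine once the tail bound is secured, but verifying that the sub-Gaussian variance proxy of the \emph{normalised} statistic genuinely decays at the $n^{-1}$ rate \emph{uniformly} in the unknown population configuration of ties --- recall the unnormalised distance variance of equation~\ref{eq:trunc_kemeny_var} grows like $n^{3}$. This demands care in tracking the factor $\tfrac{2}{n^{2}-n}$ through the moment-generating-function bound of Definition~\ref{def:stict_sg}, and in confirming that the constant $c$ may be taken population-independent; the latter follows from the fact used repeatedly above that a common population admits a single permutation space $\mathcal{M}$, so the variance proxy is maximised on the tie-free sub-space and is therefore uniformly bounded once normalised.
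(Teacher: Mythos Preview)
Your argument is sound in outline but takes a genuinely different route from the paper. The paper does not index the Borel--Cantelli events by sample size $n$; instead it indexes the events $E_{m}$ by elements $m$ of the \emph{finite} permutation population $\mathcal{M}$ and obtains summability of $\sum_{m}\Pr(E_{m})$ directly from structural properties already assembled --- monotone non-increase (Theorem~\ref{lem:decreasing_bounded_below}), continuity from above via the Haar measure (Lemma~\ref{lem:haar}), sub-additivity of the Hilbert space (Theorem~\ref{lem:hilbert}), and compactness (Lemma~\ref{lem:kem_bounded}) --- concluding with $\Pr(\limsup_{m} E_{m})=0$ from the explicit rate $\tfrac{2}{m}\searrow\tfrac{2}{n^{n}-n}$. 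Your approach is the more conventional probabilistic one: you treat $\hat{\tau}_{\kappa}^{(n)}-\tau$ as a centred sub-Gaussian variable, extract a Chernoff tail $2\exp(-cn\epsilon^{2})$ from Definition~\ref{def:stict_sg} and equation~\ref{eq:trunc_kemeny_var}, and sum over $n$. What your route buys is an explicit exponential rate and a self-contained argument that does not lean on the Haar and Hilbert machinery; what the paper's route buys is that no variance-proxy computation under a non-null population is needed, since finiteness of $\mathcal{M}$ delivers summability immediately. The one point you correctly flag as delicate --- that the strict sub-Gaussianity in Definition~\ref{def:stict_sg} is established for the uniform law on $\mathcal{M}$ (i.e.\ under $\tau=0$), whereas you require the variance proxy of the \emph{centred} statistic under an arbitrary population --- is real: equation~\ref{eq:trunc_kemeny_var} is the null variance, and transferring the proxy to the non-null case needs either a U-statistic concentration bound or the observation that $\hat{\tau}_{\kappa}^{(n)}\in[-1,1]$ is bounded and hence sub-Gaussian with a proxy controlled by its (non-null) variance, which you would need to show separately is $O(n^{-1})$. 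The paper sidesteps this entirely by working inside the finite space $\mathcal{M}$ rather than over growing $n$.
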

\begin{proof}
 The Kemeny distance has been shown to completely characterise the distribution the location of an element \(m \in \mathcal{M}\), thereby denoting the sub-additive distance from an arbitrary point of origin, and the fixed observed variance. Over the union of all such squared distance events \(E_{m}\), which have been shown to be monotonically non-increasing (Lemma~\ref{lem:decreasing_bounded_below}), continuous from above as a Haar measure (Lemma~\ref{lem:haar}), sub-additive as a Hilbert space over the extended reals (Lemma~\ref{lem:hilbert}; and therefore the compact domain Lemma~\ref{lem:kem_bounded}), and is therefore always finite upon \(m \in \mathcal{M}\), such that \(\sum_{m=1}^{\mathcal{M}}\Pr(E_{m}) < \infty^{+}\) almost surely. This follows as \(\Pr(\lim_{m\to\infty}\sup E_{m}) = 0\) for the finite collection of two extrema endpoints (thereby allowing all other points to be continuous from above) tends to 0 almost surely for finite \(m \in \mathcal{M}\): \[\lim_{m\to\infty^{+}}\lim_{n\to\infty^{+}} \frac{2}{m} \searrow \frac{2}{n^{n}-n} = 0.\]     
\end{proof}
\begin{corollary}
It is also noted that the expectation upon the Kemeny estimator parameter space is compact and therefore identified, a sufficient condition to establish consistency.
\end{corollary}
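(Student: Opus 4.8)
The plan is to verify the two assertions of the corollary --- compactness of the parameter space and identification --- and then to combine them with the Borel--Cantelli estimate of the preceding lemma in order to invoke the standard extremum-estimator consistency argument. First I would fix the parametrisation $\theta = (\sigma_{\kappa}(x),\sigma_{\kappa}(y),\tau_{\kappa}(x,y))$ and note that, because $\mathcal{M} = n^{n}$ is finite for every finite $n$, the set of attainable parameter triples is itself a finite set and hence trivially compact; more usefully, each coordinate ranges over a closed bounded interval --- the marginal variances over $[0,\tfrac{n^{2}-n}{2}]$ by Lemma~\ref{lem:kem_bounded} (the left endpoint being attained only by the excluded degenerate constant vector, so the closure may be taken without loss) and $\tau_{\kappa}$ over $[-1,1]$ by equation~\ref{eq:kem_cor} --- so the ambient parameter space is the compact Cartesian product of these intervals.

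Next I would establish identification, i.e. injectivity of the map $\theta \mapsto \phi_{\kappa}(\cdot\mid\theta,n)$ for fixed $n$. The leverage here is that, by strict sub-Gaussianity and Lemma~\ref{lem:kem_asym_normal}, the finite-sample law is the symmetric (riffled) Beta--Binomial whose first four moments are $\mu_{1}=0$, $\mu_{3}=0$, the variance of equation~\ref{eq:trunc_kemeny_var}, and an excess kurtosis that is a deterministic function of $n$ alone; the bivariate law is therefore completely pinned down once the two marginal variances and the single cross-moment $\tau_{\kappa}$ are fixed, and conversely these three numbers are recovered as moment functionals of $\phi_{\kappa}$. Since the support is compact the associated moment problem is determinate, so distinct $\theta$ give distributions with distinct moment sequences and hence distinct distributions. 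I expect this to be the main obstacle: one must check that the three coordinates of $\theta$ are genuinely free and separately recoverable --- making precise the excerpt's claim that the score components are orthonormal by exhibiting the moment functionals that invert the parametrisation --- and that the boundary stratum $\sigma_{\kappa}=0$ is excised rather than merely ignored.

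Finally I would assemble the consistency conclusion in classical Wald form. With $\Theta$ compact, the population criterion $\theta \mapsto E_{\theta_{0}}[\log f(x,y\mid\theta)]$ continuous (continuity in $\theta$ being inherited from the explicit polynomial-in-moments form of the pmf in equation~\ref{eq:kem_pdf}), uniquely maximised at the true $\theta_{0}$ by the identification just shown, and the empirical criterion converging to it uniformly on $\Theta$ by the Glivenko--Cantelli theorem (Theorem~\ref{thm:gc}) together with the summable-probability bound $\sum_{m=1}^{\mathcal{M}}\Pr(E_{m})<\infty^{+}$ of the preceding lemma, the maximiser $\hat{\theta}_{m}$ converges almost surely to $\theta_{0}$. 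Thus compactness together with identification is indeed sufficient, given the structure already established, to certify consistency of the Kemeny maximum likelihood estimator, which completes the corollary.
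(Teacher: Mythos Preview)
The paper supplies no proof for this corollary at all: it is a one-sentence remark appended after the Borel--Cantelli lemma, with no accompanying \texttt{proof} environment or argument. Your proposal is therefore not so much a different route as the only route on offer --- you have written the argument the paper merely gestures at.

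Substantively, your plan is the standard Wald extremum-estimator argument and is sound in outline. Two points are worth flagging. First, you correctly separate compactness from identification, whereas the corollary's own phrasing (``compact and \emph{therefore} identified'') conflates them; compactness of $\Theta$ does not by itself yield injectivity of $\theta\mapsto\phi_{\kappa}(\cdot\mid\theta)$, and you are right to treat identification as a separate obligation. Second, the step you flag as ``the main obstacle'' --- recovering $(\sigma_{\kappa}(x),\sigma_{\kappa}(y),\tau_{\kappa})$ from the law --- is indeed the only place where real work is needed, and the paper nowhere discharges it. Your moment-determinacy argument (compact support $\Rightarrow$ Hausdorff moment problem is determinate $\Rightarrow$ distinct moment sequences $\Rightarrow$ distinct laws) is the right lever, but you would still need to exhibit explicitly which moment functionals return each of the three coordinates; the paper's appeal to ``orthonormality'' of the score components is not a proof of this. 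In short: your proposal is more rigorous than the paper's treatment, and the gap you identify is genuine and is not closed anywhere in the manuscript.
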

Likewise, the Cram\`{e}r-Rao lower bound is obtained upon asymptotically normal unbiased estimators upon uniformly sampled random variables.
\begin{lemma}
The Kemeny estimator functions satisfy the Cram\`{e}r-Rao lower bound upon the population \(\mathcal{M}\) constructed of asymptotic limit on \(n\).
\end{lemma}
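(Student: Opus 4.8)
The plan is to verify the classical regularity conditions under which the Cram\`{e}r--Rao inequality becomes an equality in the asymptotic limit, and then to invoke the asymptotic normality already established for \(\phi_{\kappa}\). First I would observe that, for every finite \(n\), the Beta--Binomial law \(\phi_{\kappa}\) is supported on the fixed lattice \([-\tfrac{n^{2}-n}{2},\tfrac{n^{2}-n}{2}]\), whose endpoints do not depend on the parameter vector \(\theta=\{\sigma_{\kappa}(x),\sigma_{\kappa}(y),\tau_{\kappa}(x,y)\}\); consequently the interchange of differentiation with the finite summation over \(\mathcal{M}\) is automatic, \(\log\phi_{\kappa}\) is real-analytic in \(\theta\) on the open region \(\alpha_{1},\alpha_{2}>0\), and the score identity \(E[\partial_{\theta}\log f\mid\theta]=0\) derived in the preceding display holds termwise. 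This supplies the differentiability hypotheses required by the Cram\`{e}r--Rao theorem.

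Second, I would show that the Fisher information \(\mathcal{I}(\theta)\) is finite and strictly positive. Finiteness is immediate from the total boundedness of the \(\kappa\) support (Lemma~\ref{lem:kem_bounded}), since \((\partial_{\theta}\log f)^{2}\) is then a bounded function integrated against a probability measure; strict positivity follows from the strictly positive Kemeny variance \(\sigma_{\kappa}^{2}\) of equation~\ref{eq:kem_variance} on any non-degenerate population, because a vanishing information would force \(\partial_{\theta}\log f\equiv 0\) almost surely and hence a parameter-free, degenerate law. The orthonormality of the three coordinates of \(\theta\)---already exploited to separate the likelihood---renders \(\mathcal{I}(\theta)\) block diagonal, so it suffices to treat each coordinate as a scalar problem.

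Third, with unbiasedness in hand (Lemma~\ref{lem:unbiased}) and consistency just established through the Borel--Cantelli argument, I would appeal to the asymptotic-normality lemma (Lemma~\ref{lem:kem_asym_normal}): the generalised central limit theorem for strictly sub-Gaussian variables forces \(\phi_{\kappa}\) to converge to the Gaussian law whose covariance is precisely \(\mathcal{I}(\theta)^{-1}\), because the variance of equation~\ref{eq:trunc_kemeny_var}, normalised by the \(n\) uniformly sampled coordinates, is the reciprocal of the per-observation information. Hence the limiting variance of the Kemeny estimator equals the inverse Fisher information, and the Cram\`{e}r--Rao bound is attained as \(n\to\infty^{+}\), which is exactly the claim.

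The step I expect to be the main obstacle is the third: showing that the asymptotic variance is literally \(\mathcal{I}(\theta)^{-1}\) and not merely of the same order. For finite \(n\) the law lies strictly \emph{inside} the Gaussian envelope, so no finite-sample equality can hold; one must verify that the sub-Gaussian deficit---the negative excess kurtosis \(\hat{\mu}_{4}\) of equation~\ref{eq:kem_kurt_approx}, which decays like \(-54/(25n)\)---vanishes at exactly the rate that closes the gap, and that the \(\arcsin\)-type reparametrisation relating \(\tau_{\kappa}\) to the Pearson scale leaves the information calculation intact. This reduces to a matching-of-constants computation between equation~\ref{eq:trunc_kemeny_var} and the expansion of \(\mathcal{I}(\theta)\), which I would carry out by induction on \(n\) in the manner used for the moment formulas above.
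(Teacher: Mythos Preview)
Your proposal follows the textbook route to asymptotic efficiency: check the Cram\`{e}r--Rao regularity conditions (parameter-independent support, smooth score, finite and non-singular Fisher information), then argue that the asymptotic-normal limit has covariance exactly \(\mathcal{I}(\theta)^{-1}\). The paper does \emph{not} take this route. Its proof is considerably shorter and leans on three ingredients you either omit or demote: (i) unbiasedness via Lemma~\ref{lem:unbiased}, which you share; (ii) asymptotic normality via Lemma~\ref{lem:kem_asym_normal}, which you also share; but then (iii) instead of computing \(\mathcal{I}(\theta)\) at all, the paper invokes the Gauss--Markov theorem (Theorem~\ref{thm:gauss-markov}) directly, arguing that a BLUE whose strictly sub-Gaussian variance ratio tends to one must approach the normal variance from below, and closes with the explicit limit \(\lim_{n\to\infty^{+}}\tfrac{n^{n}-n}{n^{n}}\,\sigma_{\kappa}^{2}\nearrow\sigma^{2}\). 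So where you set up the information matrix and try to match constants, the paper simply asserts the variance-ratio limit and cites Gauss--Markov as the mechanism.

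What each buys: your approach is the more orthodox one and, if the constants calculation in your third step can be completed, would yield a self-contained efficiency proof independent of the Gauss--Markov claim. The paper's approach is faster but less transparent---it never writes down \(\mathcal{I}(\theta)\), and the logical step from ``BLUE plus asymptotically normal'' to ``attains the Cram\`{e}r--Rao bound'' is left implicit in the Gauss--Markov citation together with the displayed limit. The obstacle you flag---that the CLT alone does not force the limiting covariance to equal \(\mathcal{I}(\theta)^{-1}\)---is real, and the paper does not resolve it through your machinery; it instead substitutes the density-ratio argument \(\tfrac{n^{n}-n}{n^{n}}\to 1\) for that step. If you wish to align with the paper, replace your Fisher-information computation by the Gauss--Markov invocation and the variance-ratio limit; if you prefer your own line, be aware that the matching-of-constants you identify as the hard part is precisely what the paper circumvents rather than proves.
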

\begin{proof}
The unbiasedness of the estimator function is established in Lemma~\ref{lem:unbiased}, and said estimator function observed to be asymptotically normally distributed as well by Lemma~\ref{lem:kem_asym_normal}. As the variance of the estimator function is strictly sub-Gaussian for all finite \(n\), the variance is a scalar constant ratio which converges to 1 as a linear function of all data distributions. Under these conditions, it follow that by the Gauss-Markov theorem (Theorem~\ref{thm:gauss-markov}) the asymptotic variance grows approach the below to the asymptotic variance of the normal distribution, and therefore concludes the proof in obtaining the Cram\`{e}r-Rao lower bound: \[\lim_{n\to\infty^{+}} \frac{n^{n}-n}{n^{n}}\sigma^{2}_{\kappa} \nearrow \sigma^{2}.\] 
\end{proof}
Verification of the lower-bound upon the Euclidean space being approached asymptotically from below for the strictly sub-Gaussian Kemeny metric space is observed in every empirical simulation conducted, wherein the variance of the Kemeny \(\tau\) correlation estimator is always smaller than the corresponding Euclidean estimator function.

\section{Examination of the Kemeny, Spearman, and Pearson correlations}
In this paper, we have examined the distribution of the null hypothesis of independence (i.e., a bivariate distance of 0) under the hypothesis of fixed independent parameters with a testing null hypothesis framework for a \(z\) statistic drawn from a beta-Binomial distribution. We have also identified a bijective relationship between Spearman's \(\rho_{S}\) and the Kemeny correlation, as defined in equation~\ref{eq:kendall_sin}. Here, we proceed to show that the testing and null hypothesis framework is identical for both the Kemeny and Spearman correlation coefficients. This allows us to thereby resolve the loss of identification in the presence of ties for both metrics with a common and identical test statistic procedure.

\subsection{Relationship between Kendall and Pearson and Polychoric correlations}
\label{subsec:kendall_sin}
In \textcite[p.~129]{kendall1948} it was claimed that 
\begin{equation}
\label{eq:kendall_sin}
\text{r}_{X,Y} = \sin\bigg({\tau_{b}(X,Y)\cdot\frac{\pi}{2}}\bigg).
\end{equation}
We proceed to show this characterisation to be invalid, and instead demonstrate that the left-hand side of the equation is actually Spearman's \(\rho\), constructable from \(\kappa(X)\) and \(\kappa(Y)\). First, consider the definition of the \(\ell_{2}\)-norm -- the insertion of one or more infinite values explicitly results in the expectation of at least one of the random variables to be non-finite. Then by the non-finite expectation, the \(n \times 1\) vector upon the Euclidean metric space is neither capable of being centred, nor is the relative distance capable of being assessed. The inner-product is also undefined, as the sum of inner-product non-finite values is itself infinite, and thus the Pearson correlation measure is degenerate. However, the Kemeny correlation is valid upon the extended reals, and produces a finite measure concomitant for any finite \(n\). This results in a paradox, wherein the Kemeny correlation exists and implies a finite convergent value, while the Pearson correlation is degenerate, which is in contradiction of equation~\ref{eq:kendall_sin}. Thus, the relationship defined in equation~\ref{eq:kendall_sin} is invalid. 

However, we can construct from the Kemeny metric and substituted for the Pearson correlation, an equivalent estimator upon the originally considered domain of \(\mathrm{S}_{n}\). Take the \(n \times n\) \(\kappa\)-permutation matrix for each extended real random variables \(\{X_{m},Y_{m}\} \in \overline{\mathbb{R}}^{n \times 1} \subset \mathcal{M}\), for which we desire a bivariate vector matrix. This is obtained by taking the sum over all \(k\) rows in the skew-symmetric permutation matrix for each random variable, resulting in the production of two \(n \times 1\) vectors, denoted as \(\vec{X}\) and \(\vec{Y}\), respectively for variables \(\{X,Y\}\): 
\begin{equation}
\label{eq:kemeny_rho}
\begin{aligned}
\vec{X}: \overline{\mathbb{R}}^{n \times 1} \to \kappa(X) \in \overline{\mathbb{R}}^{n \times n} \to  \overline{\mathbb{R}}^{n \times 1}\\
\vec{X} = \sum_{k}^{n} \kappa_{k}(X) = \bigg[ \sum_{k}^{n} \kappa_{l=1}(X) , \sum_{k}^{n}\kappa_{l=2}(X) , \cdots , \sum_{k}^{n}\kappa_{l=n}(X)\bigg]. 
\end{aligned}
\end{equation}
We note the expectations, \(E(\vec{X}) = 0\) and \(E(\vec{Y})= 0\) are fixed, and a variance (and standard deviation) measure can be equivalently defined for each univariate random variable (equation~\ref{eq:kem_variance}). Each \(\vec{X}\) is then a vector of the rank ordering of a variable, with finite mean and variance, even in the observation of a non-finite variate value in either \(X\) or \(Y\).  The inner product \(\langle\frac{\vec{X}}{\sigma_{\kappa}(X)},\frac{\vec{Y}}{\sigma_{\kappa}(Y)}\rangle\) defined in this way may then be understood to represent the angle between the two random variables. 

However, we must recognise that this inner-product is valid upon the vector of the extended reals, unlike the Pearson correlation. Upon \(\mathrm{S}_{n}\), the space of permutations wherein ties occur with probability 0, we observe that the inner-product of the ranks of finite scores and the scores themselves are equivalent: it is then only in the presence of ties in a common domain for which non-equivalent estimates of the cosine of the angle may follow. As Kendall explicitly ignored this case, the mistaken identification \(\vec{X} \equiv X\) and the respective inner-products is understandable, if still inaccurate. It is also trivially understood that both \(\vec{X}\) and \(\vec{Y}\) are the rank vectors of the original corresponding variables, and it may be observed that under Kendall's sinusoidal relationship between the inner products holds here between equations~\ref{eq:kemeny_rho} and equation~\ref{eq:kem_cor}. This invalidates the existence of a relationship between the rank and the score inner-products and instead we must replace the left hand side of equation~\ref{eq:kendall_sin} with the almost equivalent correlation upon the ranks (Spearman's \(\rho\)).

With this paradox resolved, we return to the question of the maximum likelihood estimation of the bivariate rank-ordering of a distribution of uniformly sampled variates upon \(\overline{\mathbb{R}}^{n \times p}\). This allows for linear MLE functions to be employed over non-parametric distributions, which satisfy the Cr\'{a}mer-Rao lower-bound in order to assess the sufficient statistics of any \(n \times p\) distribution. Of note, this resolves to the identification of the median and variance of each of \(p\) variates, along with the correlation matrix \(\Xi_{p \times p}\), as an \(\ell_{2}\)-norm space, thereby identifying a quadratic solution to the expectation of the median.

In equation~\ref{eq:spearman_rho} it was examined, using the vectorised \(n \times 1\) bivariate variable pair upon \(\{X,Y\}\), the limiting cases for \(n = 3\), defining a space of cardinality \(|\mathcal{M}| = n^{n}-n\) elements \(m\). Here, for all \(m \in \mathcal{M}\), we observe that the vectorised permutation space is centred at 0, and contains no constant vectors, with support \([\frac{n-n^{2}}{2},\frac{n^{2}-n}{2}]\), with each extremum occurring only once upon a spanning support of \(n^{2}-n + 1\) distinct distances. With said finite support, for all finite \(n\), we observe the same necessary conditions for a strictly sub-Gaussian random variable as observed for the Kemeny metric, and also that the support of the distribution for Spearman's \(\rho_{S}\) is defined to be bijectively equivalent to that of the Kemeny correlation. Using equation~\ref{eq:spearman_rho}, we observe that an inner-product may be construct with expectation 0, and therefore produces both distance which when suitably normed is also a correlation:
\begin{equation}
\label{eq:spearman_rho}
\rho_{S} = \frac{1}{\sigma_{\kappa}(X)\sigma_{\kappa}(Y)}\tfrac{1}{n-1} \sum_{i=1}^{n} \langle \vec{X}_{(i)}\vec{Y}_{(i)}\rangle.
\end{equation}
Finally, consider that across all 24 occurrences permutations is reproduced 7 distances in tabulated frequencies and expectations which are exactly identical to the Kemeny distance, including the variance of the distances. Expressed as follows then, we construct a complete metric Frobenius space from equation~\ref{eq:spearman_rho}:
\begin{equation}
\label{eq:spearman_metric}
d_{\epsilon}(X_{(i)},Y_{(i)}) = \sqrt{2}\sqrt{1-\rho_{S}(X_{(i)},Y_{(i)})}, \rho_{S} \in [-1,1],~ X,Y \in \overline{\mathbb{R}},
\end{equation}
such that immediately follows a distance of 0 when the order statistic vectors are concordant, and a maximum distance of \(\sqrt{2}\sqrt{1-(-1)} = 2\) for the reverse image of the order statistics. We observe then that the strictly sub-Gaussian Spearman's footrule, generalised to be a quadratic signed distance function (equation~\ref{eq:spearman_rho} with fixed common sub-multiplicative variance upon the order statistics, results a test statistic possessing all minimum variance and maximum likelihood properties for the entire support with ties. 

The connection between Kendall's \(\tau_{b}\) and variations and Spearman's \(\rho\) has been long acknowledged. However, a metric topology akin to that developed for the Kemeny metric is less explicit. To resolve this, we require a Frobenius norm-space for which a mapping is defined from the domain of permutations with ties to a continuous finite function image. Using the complete metric space for random vectors \(\{X_{i},Y_{i}\}_{i=1}^{n}\), we have a viable distance measure constructed from the cross-product of the centred vector valued transformations. However, to reduce the linearity assumption, we express the Euclidean distance between two arbitrary vectors of length \(n\) which are uniformly sampled upon the bivariate extended real space. With this extension then there exists no finite first moment expectation over the population: however, if we instead focus upon the cross-product of the Kemeny vectorised representation of the order statistics, \(\{\vec{X}_{(i)},\vec{Y}_{(i)}\}_{(i=1)}^{n}\) as given in equation~\ref{eq:spearman_rho}, we obtain an asymptotically normal order-statistic space with a central moment of 0 and variance \(\sigma_{\epsilon}^{2} = n-1.\) As the Euclidean distance function is an even function, the estimator function is a valid candidate for an unbiased minimum variance estimator, and we accept that the skewness (third central moment) is almost surely 0 upon a homogeneous population. Note that by \((n-1) \sum_{i=1}^{\mathcal{M}} \langle X_{(i)},Y_{(i)}\rangle = 0,\) we trivially have established the asymptotic normality and unbiasedness of the estimating function.  

Before discussing the fourth moment, we first must consider the population however. As previously given in Subsection~\ref{subsec:kendall_sin}, the population space of Spearman's \(\rho\) is coincidental to that of the Kemeny \(\tau\) correlation estimator: in particular, it is the exhaustive linear permutation space \(\mathcal{M}\), including all ties. This allows us to immediately recognise that the asymptotically unbiased nature of the Spearman \(\rho\) correlation estimator is not normally distributed for finite samples. This follows due to the observed strict sub-Gaussianity of the estimator function, in particular the existence of a finite first and second moment over the extended real space, which is also compact and totally bounded as a function of \(n\). The distance function support is quantified by \(\sqrt{\tfrac{n^{3} - n}{3}},\)  
presenting the square root of the sum of the squared differences over all \(n\) element rankings in \(\mathcal{M}\), and in turn the totally compact and bounded support upon the Euclidean metric space. By Definition~\ref{def:stict_sg}, we observe a permutation space which is orthonormal to that of the Kemeny metric space by equation~\ref{eq:kendall_sin}, also functionally related. 

Therefore, the excess kurtosis of the unbiased symmetric estimator of equation~\ref{eq:spearman_rho} is also monotonically convergent to 0, to satisfy the asymptotic normality of the estimator class. As a function of \(n\), the population kurtosis may be estimated by the following third order polynomial, using the reported values of Table~\ref{tab:excesskurtosis_rho}
\begin{equation}
\gamma_{2}^{2}(n) = -0.7561593 + 1.1482686n - 0.1240335n^{2} + 0.0044051n^{3}.
\end{equation}

\begin{table}
\centering
\footnotesize
\caption{\footnotesize Reported kurtosis for the Spearman's \(\rho\) correlation space for various sample sizes.}
\label{tab:excesskurtosis_rho}
\begin{tabular}{cc|cc}
\toprule
n     & Kurtosis & n & Kurtosis\\
\midrule
2 & 1     & 11 & 2.580637\\
3 & 1.5   & 12 & 2.619854\\
4 & 1.84182 & 13 & 2.643464\\
5 & 2.077129 & 14 & 2.671357\\
6 & 2.234365 & 15 & 2.695132\\
7 & 2.34464 & 16 & 2.713222\\
8 & 2.42575 & 17 & 2.728253\\
9 & 2.489407 & 18 & 2.745692\\ 
10 & 2.539668 & 19 & 2.762238\\ 
\bottomrule
\end{tabular}
\end{table}
It therefore follows that as with Kemeny's \(\tau\), the distribution of the population upon \(\mathcal{M}\) is an estimator whose distribution possesses both the minimum variance and Maximum Likelihood properties, equivalent to that of Kemeny \(\tau\). This is of course explicitly necessary, due to the pre-existing equation~\ref{eq:kendall_sin}. However, the strict sub-Gaussian nature of the random estimator variable does highlight a substantive criticism which must be noted. As the support of the distribution is compact and totally bounded, the variance is correspondingly over-estimated as conventionally estimated. Therefore, we propose an alternative null hypothesis significance testing framework for the distribution over \(\mathcal{M}\), which given the accepted standard deviation of the correlation coefficient, which may be appropriately scaled by the observed empirical standard deviation presents:
\begin{equation}
\label{eq:partial_wald_spear}
z_{\phi} = \frac{\rho_{S}}{\sqrt{n-1}}
\end{equation}
noted to be identified in the presence or absence of ties, and thus  stochastically dominates the alternative estimator functions; The beta-Binomial distribution is bounded from above and below with support \(\pm\tfrac{(n^{3}-n)(\sqrt{n-1})}{6},\) addressing the moderate under-dispersion and excess kurtosis for finite samples by accepting this distribution as being strictly sub-Gaussian. 

In line with the maximum entropy derivation of the probability distribution, we again choose the Beta-Binomial distribution, now with support \(N = \tfrac{n^{3}-n}{3}\), and 
\begin{equation}
\hspace{-3cm}
\alpha = \frac{\splitfrac{\tfrac{n^{2}-4n+4}{6n} + \tfrac{1}{6n\sqrt[3]{2}} ((-2n^{6}+24n^{5}-156n^{4}+608n^{3}-876n^{2}+}{\splitfrac{\sqrt{-432n^{9} + 5184n^{8} - 34560n^{7}+128736n^{6}-238464n^{5}+225936n^{4}-107136n^{3}+20736n^{2}} +}{ 528n - 128)^{\tfrac{1}{3}}) - (-(n^{2}-4+4)^{2}-12(n^{2}-n))}}}{\splitfrac{3n\sqrt[3]{4}\big(-2n^{6} + 24n^{5} - 156n^{4} + 608n^{3} - 876n^{2} + 528n - 128 + }{ \sqrt{-432n^{9} + 5184n^{8} - 34560n^{7} + 128736n^{6} - 238464n^{5} + 225936n^{4} - 107136n^{3} + 20736n^{2}}\big)^{\tfrac{1}{3}}}}, \, n> 0
\end{equation}
This choice is validated with an empirical demonstration in Table~\ref{tab:5} which presents the corresponding parameter estimates and test statistics. Note that the distribution of the estimator parameters of Spearman's \(\rho_{S}\) and \(\rho_{\kappa}\) are identical in terms of absolute value skewness and kurtosis, unlike that of Kendall's and Kemeny's \(\tau\) -- however, the distribution of the parameters is more flexible when compared to that of the traditional variance approximation used under asymptotic normality, maintaining the strictly sub-Gaussian distribution through the non-positively bounded finite excess kurtosis. These results would conform to the traditional characterisation of the Spearman's \(\rho\) estimator as an unbiased estimator, as there is a bijective inverse relationship between the Spearman's footrule distance and the quadratic euclidean distance. 

Therefore, the only valid difference is determined in the evaluation of the p-values for finite samples, wherein we have observed minor, albeit optimally covering, differences, consistent with performance of an asymptotic estimator in a finite sample. Interestingly, implies that for finite samples, the use of a typical normal distribution leads to the possibility of theoretic decision errors, which are otherwise corrected for by the utilisation of the our testing procedure, due to the ill-posed dense support for the parameters. This problem was explicitly resolved by the use of the beta-binomial distribution constructed by Jayne's criterion, which also resolves the decision problems, as the support is explicitly coincidental to the Kemeny support space.

\begin{table}
\centering
\tiny
\caption{Comparison of the distributions of Spearman's \(\rho\) and Kemeny's \(\rho_{S}\) correlation estimators' respective test statistic distribution, for distinct sample sizes with 5,000 replications. Population values were re-sampled from a 2,236 bivariate sample of ordinal response items, with a population correlation of \(\rho_{S} = -0.3706851.\)}
\label{tab:5}
\begin{tabular}{ccccccccc}
\toprule
                        &         & mean & sd & median & range & skew & kurtosis\\
\midrule
\multirow{ 2}{*}{n = 15} & Spearman & 763.4964 & 141.7768  & 775.1984  & 936.1687  & -.4183  & 2.855630\\
                         & Kemeny-\(\rho_{S}\)  & -1.3597  & .9473  & -1.4379 & 6.2250  & .4183 &  2.855630\\
                         & Pearson-\(r\)  & -1.5640  & 1.4725  & -1.3958  & 12.4509 & -.8050  &  2.551283\\
\midrule
\multirow{ 2}{*}{n = 25}  & Spearman  & 3560.7549  & 489.2358  & 3585.1425  &  3497.0231 & -.3251  & 3.093031 \\
                          & Kemeny-\(\rho_{S}\)  & -1.8103  & .9218  & -1.8562  & 6.5892  & .3251  & 3.093031\\
                           & Pearson-\(r\)  & -1.9309  & 1.3617  & -1.7955  & 12.5224  & -.6580  & 4.107461\\
\midrule
\multirow{ 2}{*}{n = 100} & Spearman & 228291.6195 & 15637.7198  & 228755.8677  & 117577.3829  & -.1570  & 2.965249 \\
                          & Kemeny-\(\rho_{S}\)  & -3.6803  & .9337  & -3.7080  & 7.0200  & 0.1570  & 2.965249\\
                           & Pearson-\(r\)  & -3.6637 & 1.3057  & -3.5622  & 9.3480  & -.3269  & 2.875589\\
\midrule
\multirow{ 2}{*}{n = 250} & Spearman &  3569285.90367 & 151384.21862   &  3570740 & 1062418 & -.1051  & 3.078387\\
                          & Kemeny-\(\rho_{S}\)  & -5.839  & .9227  & -5.8684  & 6.3732 & .1051 & 2.950035\\
                           & Pearson-\(r\)   &  -5.7378 & 1.2702 & -5.7098  & 9.4235 & -0.1260  & 3.110889\\
\midrule
\multirow{ 2}{*}{n = 1250} & Spearman & 446109898.161 & 8561320.458 & 446084959 & 56305967.027 & .043 & 2.85028 \\
                           & Kemeny-\(\rho_{S}\)  & -13.092 & .929  & -13.08945 & 6.113 & -.043 & 2.85028\\
                           & Pearson-\(r\)  & -12.809 & 1.270 & -12.80797 & 10.140 & -.097 & 2.960598\\                           
\midrule
\multirow{ 2}{*}{n = 2236} & Spearman & 2553875721.27 & 36792618.22 & 2554130308 &  276930490.35 &  -.05 & 3.120799 \\
                           & Kemeny-\(\rho_{S}\)  &  -17.52 & .93  & -17.5300  & 7.03  &  .05 & 3.120799\\
                           & Pearson-\(r\)  & -17.12 & 1.27  &  -17.118 & 10.61 & -.08 & 3.190086\\
\bottomrule
\end{tabular}
\end{table}

The directed distance \(\langle\vec{X},\vec{Y}\rangle\) normed or scaled by the product of the standard deviations of the sample variances upon the population provides a standard definition for the partial Wald test statistic (equation~\ref{eq:partial_wald_spear}). As the distance is defined as a strictly sub-Gaussian random variable upon a population of finite size \(n\), the permutation block of all \(\mathcal{M}\) such elements imply a single finite homogeneous variance (or population) for all orderable (and therefore comparable) score distributions. The strict sub-Gaussianity and the necessary utilisation of the totally bounded and compact support for the beta-Binomial distribution protect against the possibility of Type I errors, and asymptotically converges to the normal distribution. Further, in the limiting case for which the Pearson correlation is nearly equivalent to Spearman's \(\rho_{S}\), we observe a well-approximated linear function space, for which the scores are a valid representation of uniform unit changes across the arbitrary domain. Therefore, while an explicit connection is made via equation~\ref{eq:kendall_sin}, we tend to observe that a system of linear equations may be characterised as a almost surely well-posed solution upon the Kemeny metric space, even if the Pearson and or Spearman variance-covariance matrices are semi-positive definite. Therefore, both the Kemeny and Euclidean distance functions are valid measurement tools with a valid and unique probability measure space, subject to viable assumptions, which may be used to solve estimation problems.       

Interestingly, with this identification problem resolved, a trivial extension exists to represent the polychoric correlation, wherein the inner-product of the latent space may be represented by the beta-binomial distribution, rather than the normal distribution, and therefore presents a resolution to the strictly sub-Gaussian but non-normal distribution otherwise resolved by the work of \textcite{browne1984} with Diagonally Weighted Least Squares. This work also resolves the finite sample properties which are otherwise only asymptotically resolved, resulting in the finite sample estimation with generalisability. 


%
\subsection{The Polychoric and Kemeny correlations}

The polychoric correlation is defined in both computational and Maximum Likelihood principles for the correlation between two latent variables which are held to be bivariate normally distributed. The reasoning behind this assumption is both historical and analytic efficiency. From a historical background, Pearson largely developed the Pearson \(r\) (originally termed Galton's \(\rho\)) and effectively discounted Spearman's \(\rho\) from all consideration. From a theoretic perspective, linear functions of latent variables are only valid upon the Euclidean metric space topology under the assumption of bivariate normality, which due to the stability of the function allows the inner-product norm (i.e., the correlation) of said latent space to also be Gaussian.

However, it has been repeatedly observed that the distribution of the polychoric correlation often does not conform to a bivariate normal distribution. In fact, this contradiction resulted in the development of  \textcite{browne1984}, which sought to provide an estimator for latent variable models even if the latent variable probability space was unknown. Here, we resolve this identification problem, providing a non-parametric finite sample correlation estimator which follows the Beta-Binomial distribution. Such an estimator is asymptotically equivalent to the bivariate normal polychoric correlation, but is defined without said latent variable assumption, requiring only that the latent correlation be orderable, resulting in the Spearman's \(\rho_{S}\) correlation which results from the use of equation~\ref{eq:kendall_sin}. Given a probability distribution for said correlations it naturally of course follows that a MLE decomposition of the Spearman correlation matrix is actually possible, and thereby resolves numerous problems in the estimation of the polychoric correlation matrix. 

For a bivariate pair of random orderable (and therefore continuously linearly ranked) variables \((X,Y)\) of length \(n\) which are uniformly sampled, we observe the argument that \[r_{X,Y} = \cos\Big(\frac{\pi}{\sqrt{\frac{ad}{\frac{b}{c}}}} \Big).\] Note that however the estimation of the tetrachoric pair only relies upon the ordered set, and thus without assumption of a bivariate distribution for the pair of latent variables, \(r_{\theta_{X},\theta_{Y}}\) may be validly replaced with any inner-product norm \(\langle{X},{Y}\rangle\), including Spearman's \(\rho_{S}\) as given in equation~\ref{eq:spearman_rho}. This is immediately valid, as by the law of parallelograms for a linear metric space, the \(\cos(\theta_{X}\cdot\theta_{Y}) = \langle{X},{Y}\rangle\). 

Pearson explicitly discounted all other distributions, leaving only Pearson's \(r\) (and therefore implicitly the Euclidean distance) as the measure space, due a lack of existing alternative normed inner-product. However our estimator (equation~\ref{eq:spearman_rho}) is unbiased, and per Theorems~\ref{thm:lower_equality} and \ref{thm:upper_equality} we observe that in the presence of bivariate normality, the inner-product estimator equivalent to the standard tetrachoric correlation, by definition of the equality of the cosine of the angle and the inner-product upon the asymptotic population.

Extensions to the dimensionality of \((X,Y)\) to increase to possess more than two levels introduces the polychoric correlation. The tetrachoric correlation, by the Berry-Essen theorem, has an approximate bivariate normal distribution, linear upon the bivariate levels. However, the introduction of at least one additional response level (\(k> 2\) response levels) imposes the possibility of non-linearity directly except in the case of the Rasch model, directly introduces a non-linear but monotonic response function. This is the motivating logic underlying the two (or more) parameter item response model, and complicates the necessary minimum samples sizes for each observed response level. 

However, the logic of the polychoric correlation relies upon two sets of parameters, the bivariate correlation and the thresholds \(\delta_{f_{p}}, f = \{1,2, \ldots, k-1\}, p \in \{1,2\}\) for each bivariate variable pair assuming common lengths \(r = s = k\) for all observed variables. The estimation of the thresholds \(\delta_{f_{p}}\) is sample dependent, and introduces a linear ordering for each variable in which \(k-1\) mutually exclusive levels for variable \(p\) are defined, wherein the ordered set of \(k-1\) Heaviside functions are incorporated for each level of each variable. Note that for a non-parametric latent variable space approximated by the Heaviside functions, a linear function of the ordering upon both the Kemeny and Spearman correlations is validly produced solely as a monotonic function of the thresholds which, in expectation, are equivalent to the observed \(k-1\) levels of both \(X\) and \(Y\). From this then, we conclude that the bivariate correlation of the latent variable space, assuming monotonically non-decreasing ordering of the \(k-1\) levels upon each observed variable is equivalent to that of the bivariate Spearman's \(\rho_{S}\) correlation, and equivalently to the Kemeny \(\tau_{\kappa}\) correlation. 

From such a construction, we directly observe that the tetrachoric and, by induction upon \(k\), the polychoric correlation, as a function of estimated thresholds and the correlations between these disjoint sets, results in a ordered pair of vectors \(X_{(i)},Y_{(i)}\). However, the identification problems with the number of observations per bi-variate cell upon columns and rows \(R \times S\) for the polychoric correlation are removed here, due to the lack of a bivariate continuous distribution with the weaker orderable Beta-Binomial distribution. This directly follows from the use of a continuously linear bivariate function space directly upon the data, rather than the identification of the \(k-1\) thresholds. 

When estimated as per equation~\ref{eq:spearman_rho}, the existence of a bivariate linearly orderable distribution uniquely identified by Spearman's \(\rho_{S}\). Computationally, this resolves a dramatic concerning problem with the identification of the skewness of latent variables, as the quadratic Kemeny distance function is linear over all such distributions. The possibility of measurement error in the observed bivariate correlation introduces the subsequent latent correlation \(\rho\), which we derive here for arbitrary ordered or continuous bivariate latent distributions. The tetrachoric correlation originated as \[\Phi(r,s\mid \rho) =  \frac{1}{2\pi\sqrt{(1-\rho^{2})}} \int_{\infty^{-}}^{r}\int_{\infty^{-}}^{s} \exp\Big(- \frac{x^{2}-2\rho xy + y^{2}}{2(1-\rho^{2})}\Big)\diff{x}\diff{y},\]
and can be easily resolved for ordered set observations upon \(X,Y\) with \(R,S\) categories each, such that:
\begin{equation*}
\begin{cases}
X = r~\text{if}~ a_{r} \le \theta_{1} < a_{r}, r = 1,2,\ldots,R\\
Y = s~\text{if}~ b_{s-1} \le \theta_{2} < b_{s}, s = 1,2,\ldots,S\\
\end{cases}
\end{equation*}
where \(\theta_{1},\theta_{2}\) have a bivariate beta-Binomial distribution with rank correlation coefficient \(\rho = \langle \theta_{1}\theta_{2}\rangle\). The measurement and estimation of the product-moment correlation between the latent variables, whose expectation is centred at 0 and with arbitrary variance as a function of \(n\). This is opposed to the traditional polychoric correlation, wherein the adjacent cells \(r,r+1\) and \(s,s+1\) counts are evaluated by multinomial likelihood function with normalising constant \(C\):
\begin{subequations}
\begin{equation}
L = C \prod_{r=1}^{R}\prod_{s=1}^{S} P_{rs}^{n_{ij}}
\end{equation}
\begin{equation}
P_{rs} = \Phi(a_{r},b_{s}) - \Phi(a_{r-1},b_{s}) - \Phi(a_{r},b_{s-1}) + \Phi(a_{r-1},b_{s-1}).
\end{equation}
\end{subequations}
By replacing the bivariate \(2 \times 2\) construction of the tetrachoric correlation and its bivariate Pearson's correlation with a continuous linear distribution directly over all observations, the necessity of the identification of ties is resolved. In particular, we use this to resolve the otherwise present paradox by substantially weakening the assumption upon the linear inner-product when observing observed data vectors \(X,Y\) \parencite{foldnes2019,gronneberg2020}.

Assume \(F\) is a bivariate cumulative distribution which is a bivariate beta-binomial distribution with marginal distributions \(F_{1},F_{2}\) which are uniformly distributed with variances \(\sigma^{2}_{(1)},\sigma^{2}_{(2)}\). Note that this definition, and the corresponding probability distributions, are therefore analytical solutions to application of a copula function \parencite{sklar1959}, which must exist uniquely upon the support of the order statistics with positive variance, with an upper bound in the presence of no observed ties. This assumption is typically justified for a continuous bivariate distribution, for which the probability of ties tends to 0, \textit{a.s.} by the birthday paradox, but is otherwise unnecessary for our estimator upon the Kemeny metric space, thereby providing a more adaptive linear mapping upon the domain. 

From Theorem 1 and Proposition 1 of \textcite{gronneberg2020}, we immediately note that the correlation upon the Kemeny metric is defined with a complete domain, \(\rho \in [-1,1]\) and for the set of all distributions which are linearly orderable (i.e., multinomial distributions are explicitly contained within the set of all distributions, but are not identified for the Spearman and Kemeny correlations) in a much more straightforward characterisation of the latent correlation \(\rho\). For latent variable distributions \(\eta,\xi\) which are congeneric (sharing a common latent variable), we observe construct factor loadings \(\Lambda = \vec{\lambda}^{p}\) and \(\vec{\epsilon}^{p}\) and \(\sigma_{\epsilon}^{2}\), the error variance of latent variable \(\theta:\)
\[
\begin{array}{ccc}
& X & = \lambda_{1}\theta + \epsilon_{1}\\
& Y & = \lambda_{2}\theta + \epsilon_{2}, \therefore\\
 \implies & \quad \lambda^{2}_{1} + \sigma^{2}_{\epsilon}  = \lambda^{2}_{2} + \sigma^{2}_{\epsilon}   & = 1\\
& \rho & = \lambda_{1}\lambda_{2}\\
\end{array}
\]
solved such that \(\lambda_{1} = \sqrt{\rho^{2}}, \lambda_{2} = -\lambda_{1}, \sigma_{\epsilon} = 1 - \sqrt{\rho^{2}},\) which directly follow for any distribution which is linearly orderable to be \(\rho = \rho_{S}(\lambda_{1},\lambda_{2}) = \langle\lambda_{1}\lambda_{2}\rangle,\) and allowing for the choice of sign to be arbitrary for the factor loadings. Pragmatically, given the set of distributions which observe both \((\sigma_{\epsilon}, \rho_{S})\) the empirical solution to the measurement error variance is implied both the solution to the 1 - Kemeny correlation transformed by equation~\ref{eq:kendall_sin}, thereby allowing for the orthonormal estimation of the maximum linear correlation which also allows for measurement error to be linearly identified.

Note that given the estimate of the measurement error and the factor loadings, we observe a beta-Binomial distribution for any common homogeneous population \(F\) which is linearly orderable and thereby satisfies weaker conditions than the polychoric correlation. These parameters are directly identified as a function of \(\sigma_{\epsilon}\) from which the eigendecomposition directly allows for the unique solution to the polychoric correlation. 

Numerically of course the problem is the identification of \(\sigma_{\epsilon} = \sqrt{1 - |\rho_{S}|},\) the latent correlation. Without any explicit errors observed then, \(\rho_{S} = 1\rho\), and we directly obtain \(\rho_{S} = \lambda_{1}\cdot\lambda_{2} = \sqrt{(1 - \rho)}\) for all bivariate latent correlations. \(\sigma_{\epsilon} = 1-\rho^{2}\) in turn directly depicts the algebraic solution to the problem of the identification and estimation of the latent correlation, resolved as the balance of the correlation and the error variance for both eigenvalue-eigenvector two dimensional system: \(1\), denoting the linear combination of the latent correlation, and \(2\) denoting the measurement error variance common to \((X,Y\mid\theta)\). 

Solved in this way then, we observe that both parameters are identified for all \(\rho \in [-1,1]\) and furthermore may be both numerically and algebraically solved with a just-identified, and therefore unique, solution. Upon the population the linear separability of the measurement error from the latent distribution \(\theta\) justifies the unbiasedness of the estimator, 
for all latent correlations which are sub-additive and therefore the product of the factor loadings between two indicators upon a common population function space. This also thereby resolves the identification of general distributions of latent variables as linear functions of the data, allowing for identified linear (generalised least squares) decomposition for both non-parametric and parametric multivariate distributions. Further, the institution of the maximum likelihood normality assumption (standard ML theory estimators) applied to the Kemeny metric conducts a valid embedding of the non-parametric space into the Euclidean metric space, proven in the following Subsection. Of course, a linear quadratic function embedding must maintain the original metric space topology: therefore, a least-squares style decomposition must result in a non-parametric latent variable space of order statistics, whose inner-product may be estimated as both the Spearman and Kemeny correlations in all cases.

\subsection{Examination of the limiting relationship between the Euclidean and Kemeny metric spaces}
We have previously suggested that the Pearson correlation and the Kemeny correlations are related, and further shown that the latter is almost surely positive definite when addressing \(p\ge 2\) random variates drawn uniformly from random distributions upon the extended reals. We will now proceed to show that, using the results of Subsection~\ref{subsec:kendall_sin}, the Kemeny and Pearson probability measures are equivalent, and therefore dual. Using this duality then, we show that while when appropriately employed, the Pearson correlation is more efficient, the largest difference of the two approximations tends to 0 almost surely, for sample sizes of at least 9. Consequently,  for ill-posed linear functions, the assumption of the Kemeny metric outperforms and uniquely and correctly solves the estimation problem more efficiently, which is achieved via an analytically tractable algorithmic procedure.

Let \(\mathcal{F}\) denote the field of measured functions upon the Euclidean metric, and let \(\mathcal{G}\) denote the corresponding measured functions upon the Kemeny metric. Further allow \(x \subseteq X\) for finite \(n \to \infty^{+}\), for which it is consequently assumed that the function space itself is \(X\) and the \(F(x) \setminus G(x) = \varnothing\), without loss of generality. By the central limit theorem, the two function spaces must converge to equality in the limit upon the population, and in turn the expectations of the two function spaces must do so as well. This last equivalence is verified by the convergence of the mean to that of the median upon the population for the respective linear function spaces, which by sub-additivity is free to move from 0 to \(\varepsilon_{0}\), the Bayes error rate, uniformly over the entire space by the Markov and Tchebyshev theoretic inequality bounds, which may be shown to satisfy the Chernoff bound as well. 

\begin{lemma}
The Euclidean metric upon the rank space (Spearman's \(\rho\), equation~\ref{eq:spearman_metric}) is a topological vector space over the space of all permutations with ties.
\end{lemma}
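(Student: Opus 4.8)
The plan is to present the rank space concretely as a finite-dimensional real coordinate space and then invoke the elementary fact that every finite-dimensional normed $\mathbb{R}$-vector space is a topological vector space in its norm topology. Let $V$ denote the $\mathbb{R}$-linear span of $\{\vec{X} : X \in \mathcal{M}\}$, where $\vec{X}$ is the vectorisation of equation~\ref{eq:kemeny_rho}. First I would verify that $\vec{\cdot}$ is well-defined and finite-valued on all of $\mathcal{M}$: every entry of $\kappa(X)$ lies in $\{-\sqrt{.5}, 0, \sqrt{.5}\}$ whether or not the coordinates of $X$ are finite or extended-real, so each component of $\vec{X}$ is bounded in absolute value by $\sqrt{.5}\,(n-1)$, whence $\vec{X} \in \mathbb{R}^{n}$ and the vectorisation never escapes the finite reals. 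This is exactly the feature that makes the construction work where the raw-score Euclidean embedding degenerates, as noted in Subsection~\ref{subsec:kendall_sin}. Moreover, skew-symmetry of $\kappa$ gives $\mathbf{1}^{\intercal}\vec{X} = \sum_{k,l}\kappa_{kl}(X) = 0$, so $V$ sits inside the hyperplane $\{v \in \mathbb{R}^{n} : \mathbf{1}^{\intercal}v = 0\}$, a genuine finite-dimensional real vector space that is complete and locally convex.

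Next I would pin down the topology. The inner product $\langle \vec{X}/\sigma_{\kappa}(X),\, \vec{Y}/\sigma_{\kappa}(Y)\rangle$ underlying Spearman's $\rho_{S}$ in equation~\ref{eq:spearman_rho} is the restriction to $V$ of the ordinary $\ell_{2}$ inner product, and by the Hilbert-space structure already established (Theorem~\ref{lem:hilbert}) it is positive definite on non-degenerate vectors. Hence the metric $d_{\epsilon}(X_{(i)}, Y_{(i)}) = \sqrt{2}\sqrt{1 - \rho_{S}(X_{(i)}, Y_{(i)})}$ of equation~\ref{eq:spearman_metric} is, up to the fixed normalisation by the common population standard deviation, precisely the norm metric $\|\vec{X} - \vec{Y}\|_{2}$ on $V$; symmetry, non-negativity, identity of indiscernibles and the triangle inequality are inherited verbatim from $\ell_{2}$. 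Equipping $V$ with the metric topology of $d_{\epsilon}$, equivalently the $\ell_{2}$-norm topology, continuity of addition $(\vec{X}, \vec{Y}) \mapsto \vec{X} + \vec{Y}$ and of scalar multiplication $(\alpha, \vec{X}) \mapsto \alpha\vec{X}$ is immediate, since any normed vector space is a topological vector space; and in finite dimension this is the unique Hausdorff vector topology, so nothing is lost in restricting the ambient $\overline{\mathbb{R}}^{n}$ to its finite-valued vectors.

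The main obstacle is conceptual rather than computational: $\mathcal{M}$ itself is not closed under addition --- the sum of the rank vectors of two tied permutations need not be the rank vector of any permutation --- so the statement cannot be read as ``$\mathcal{M}$ with coordinatewise operations is a vector space.'' The resolution is the passage to the span $V$ above, together with the observation that $\mathcal{M}$ embeds into $V$ as a generating subset on which $d_{\epsilon}$ is already defined, and that $d_{\epsilon}$ extends continuously to all of $V$ because it is assembled from the $\ell_{2}$ inner product, which is globally defined and jointly continuous. A secondary point to clear is the legitimacy of the normalisation by $\sigma_{\kappa}(X)$ and $\sigma_{\kappa}(Y)$: by Lemma~\ref{lem:kem_bounded}, and with constant vectors excluded as degenerate, these standard deviations are strictly positive and finite for every $X, Y \in \mathcal{M}$, so the reparametrisation $\rho_{S} \mapsto \sqrt{2}\sqrt{1 - \rho_{S}}$ is a homeomorphism onto its image and faithfully transports the topology. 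With these two remarks in place the claim collapses to the finite-dimensional normed-space fact and the proof closes.
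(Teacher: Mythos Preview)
Your argument is correct and takes a genuinely different route from the paper. The paper's own proof proceeds by directly verifying the metric axioms for \(d_{\epsilon}\) (identity of indiscernibles from \(\rho_{S}=1\), symmetry from the inner product, sub-additivity inherited from the Euclidean metric), then argues compactness from the total boundedness of \(\rho_{S}\in[-1,1]\) via Cauchy--Schwarz, deduces completeness from compactness, and finally asserts the Hilbert-space structure from the existence of an inner-product norm together with positive homogeneity under rescaling. Your approach instead embeds the rank vectors concretely into the hyperplane \(\{v\in\mathbb{R}^{n}:\mathbf{1}^{\intercal}v=0\}\), identifies \(d_{\epsilon}\) with the restricted \(\ell_{2}\) metric up to normalisation, and then invokes the standard fact that every finite-dimensional normed real vector space is a topological vector space. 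What your route buys is a cleaner separation of concerns and, importantly, an explicit treatment of the point the paper leaves implicit: \(\mathcal{M}\) itself is not closed under addition, so the vector-space structure must live on the span \(V\) rather than on the permutation set, with \(\mathcal{M}\) sitting inside as a generating subset. The paper's route, by contrast, obtains compactness and completeness of the metric directly, which your argument does not need but which the paper uses elsewhere. Both reach the same conclusion; yours is more structurally transparent about where the linear operations actually live.
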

\begin{proof}
Let equation~\ref{eq:spearman_metric} be defined using the inner-product norm in equation~\ref{eq:spearman_rho}, for which \(\rho_{S} \in [-1,1]\). For \(\rho_{S} = 1\), the unique minima of 0 is directly obtained, and by the symmetry of the inner-product, the symmetric nature of the distance function is verified as well. Finally, sub-additivity is verified upon the Euclidean metric space for all finite measures, which includes all points upon the extended real line for the vectorised Kemeny metric space. The completeness of the distance function is now proven by the verification of its compactness. 

Consider the Identity permutation and its reverse in \(\mathcal{M}\), for which \(\rho_{S} = -1\) is uniquely identified (thereby totally bounded the convex hull of the metric space). By the Cauchy-Schwartz inequality, no distance can be greater than said extrema, and therefore the space is compact for all extended real vectors of length \(n\). As \(d_{\epsilon}\) is a compact metric space, it is also complete. Finally, \(d_{\epsilon}\) is a Hilbert space by the existence of a Banach norm space with an inner-product norm, and pre-Hilbert positive homogeneity with arbitrary scaling may be applied for the arbitrary rescaling of the vector space by constant \(a\), for which is divided out the reciprocal standard deviation scaled by \(a^{2}\), thereby maintaining a constant scaling of \(a\).
\end{proof}

\begin{theorem}
\label{thm:lower_equality}
The total variational distance between any Euclidean linear function and an unbiased Kemeny linear function is lower-bounded by 0.
\end{theorem}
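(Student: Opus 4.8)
The plan is to separate the statement into two parts: the \emph{definitional} non-negativity of the total variational distance, and the \emph{sharpness} of the bound $0$. First I would fix the measurable space on which the comparison is made. Let $P_{\mathcal{F}}$ be the law induced on the correlation statistic by a Euclidean (Pearson-type) linear function and $P_{\mathcal{G}}$ the law induced by an unbiased Kemeny linear function -- equivalently Spearman's $\rho_{S}$ under equation~\ref{eq:kendall_sin} -- each regarded as a Borel probability measure on $[-1,1]$. This is legitimate because the Kemeny measure is a Radon and Haar measure (Lemmas~\ref{lem:radon} and~\ref{lem:haar}), hence tight and well defined on the Borel $\sigma$-algebra, while the Euclidean law is the usual absolutely continuous correlation law on the same carrier. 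With both measures on a common $\sigma$-algebra $\mathcal{B}$, the total variational distance is $d_{TV}(P_{\mathcal{F}},P_{\mathcal{G}}) = \sup_{A\in\mathcal{B}}\lvert P_{\mathcal{F}}(A)-P_{\mathcal{G}}(A)\rvert$, a supremum of absolute values; it is therefore bounded below by $0$, with the value $0$ attained on the trivial sets $A=\varnothing$ and $A=[-1,1]$. That disposes of the inequality itself.

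Second, I would show the lower bound is genuinely the greatest lower bound rather than a vacuous one, by exhibiting a sequence along which $d_{TV}\to 0$; this is precisely the duality argument sketched in the paragraph preceding the theorem. Both $P_{\mathcal{F}}$ and $P_{\mathcal{G}}$ are strictly sub-Gaussian, symmetric about $0$, with matching first two moments in the limit (the Kemeny variance converging to the Euclidean variance from below, cf.\ the Cram\`{e}r--Rao lemma and equation~\ref{eq:kem_var_approx}), and the Kemeny excess kurtosis tends to $0$ by Lemma~\ref{lem:kem_asym_normal}; hence both sequences of laws converge weakly to the same Gaussian limit. Since that limit is absolutely continuous, Scheff\'{e}'s lemma upgrades weak convergence to convergence in total variation, so $\inf_{n} d_{TV}(P_{\mathcal{F}},P_{\mathcal{G}}) = 0$ and $0$ is the sharp lower bound.

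The one delicate point -- and the reason the result is stated as a one-sided bound rather than an outright equality -- is that for each fixed finite $n$ the two laws are \emph{not} identical: the Kemeny law is supported discretely on the Beta-Binomial grid while the Euclidean law is continuous, so $d_{TV}$ is strictly positive at finite $n$ and only the limit vanishes. The size of that finite-sample gap is exactly what the complementary Theorem~\ref{thm:upper_equality} controls. Accordingly the only real work here is bookkeeping: verifying that the two induced measures live on a common measurable space so that their difference, and hence $d_{TV}$, is literally well defined; once that is in place, non-negativity is immediate and sharpness follows from the weak-convergence machinery already assembled in the preceding subsections.
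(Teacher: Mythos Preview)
Your first paragraph is entirely sufficient to prove the theorem \emph{as stated}: total variation is a supremum of absolute values, hence non-negative, full stop. The paper is implicitly after the sharper claim that $0$ is the \emph{greatest} lower bound, and it is in your second paragraph, where you try to establish this sharpness, that the argument breaks down.

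The appeal to Scheff\'{e}'s lemma is incorrect. Scheff\'{e} requires pointwise almost-everywhere convergence of densities with respect to a common dominating measure; it does \emph{not} say that weak convergence to an absolutely continuous limit implies convergence in total variation. Your own final paragraph supplies the counterexample: the Kemeny law $P_{\mathcal{G}}$ is supported on a finite Beta-Binomial grid, so for the Borel set $A$ equal to that support one has $P_{\mathcal{G}}(A)=1$ while $P_{\mathcal{F}}(A)=0$ for the continuous Euclidean law, forcing $d_{TV}(P_{\mathcal{F}},P_{\mathcal{G}})=1$ at every finite $n$. Weak convergence of both sequences to a common Gaussian limit cannot repair this; discrete measures never converge in total variation to continuous ones, and the triangle inequality you would need afterwards fails on the $P_{\mathcal{G}}$ side.

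The paper avoids this obstruction by a completely different, \emph{constructive}, route. Rather than comparing the specific Kemeny law at each $n$ to the Euclidean law, it fixes an arbitrary $X$ with finite expectation and manufactures a bounded (hence sub-Gaussian) $Y$ within $\varepsilon$ of $X$ in total variation: truncate $X$ at a level $N$ large enough that the discarded tail mass is at most $\varepsilon/2$, then form an $\varepsilon/2$-mixture with a single atom placed so as to shift the mean to any prescribed target $\gamma^{2}_{M}$. Sub-additivity gives $\|X-Y\|_{TV}\le \varepsilon$, and since $\varepsilon$ is arbitrary the infimum over sub-Gaussian $Y$ is $0$. The key point your argument misses is that the infimum is taken over the \emph{class} of sub-Gaussian approximants, not along the particular sequence of discrete Kemeny laws indexed by $n$.
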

\begin{proof}

Let it be accepted that the sum of random variables from both sub-Gaussian and Gaussian fields are stable, allow \(Z\) to be a random set of realised variates of length \(n\), and let \(X\) be a random variable with distribution function \(F\) and for which \(N\) is any finite number defining the compact and totally bounded support of \(F\). It then follows that the tail probability of \(X\) beyond \(N\) is the chance that \(X\) exceeds \(N\). If \(F\) is the Gaussian distribution then, the probability that \(N \ni F\), that \(N\) is not measurable upon \(F\), is 0: \[e_{G} = \Pr(|X| > N) = \Pr(X \notin [-N,N]) = F(N) - \lim_{\varepsilon \to 0^{+}} F(-N-\varepsilon).\]

However, consider now the field of \(\mathcal{G}\) which contains \(G\), and consists of the Beta-Binomial distribution for which is also measured random variable \(X\). The total variation distance between \(Y = G(Z)\) and \(X = F(Z)\) is the total variation distance between their probability distributions, where \(\mathcal{F}\) is the sigma-algebra of Borel sets upon \(F\) and \(\mathcal{G}\) is the corresponding set upon \(G\): \[\|X-Y\|_{TV} = \sup_{A\subset \mathcal{F}}\left| \Pr(X\in A) - \Pr(Y \in A) \right|.\] The Markov and Tchebyshev inequalities are immediately seen to hold for any sub-Gaussian function due to possessing finite expectations, because the expectation of \(|X|\) is bounded above by the integral, which is finite (Definition~\ref{def:stict_sg}). 

Suppose \(N\) is large enough to make \(\Pr(X \in [-N, N])\) non-zero. Truncating \(X\) at \(N\) thereby removes all chance that it exceeds \(N\), for which the value of the new distribution function at \(G(x)\) is \(0\) for \(x < -N\), \(1\) for \(x \ge N\) as defined upon \(F(x)\), and otherwise equals \[F_{[N]}(x) = \frac{F(x) - \lim_{\varepsilon \to 0^{+}} F(-N-\varepsilon)}{1-e_F(N)}.\] 

If \(F(x)\) is instead a biased function then allow \(0 < p < 1\) and \(\gamma^{2}_{M}\) be any positive number, representing the amount by which we wish to shift the mean of \(X\) quantifying the model parametrised shift in the expected asymptotic population. Should  \(X\) be any random variable with finite expectation, then allow \(\varepsilon > 0\). Pick an \(N\) for which \(e_{F}(N) \le \tfrac{\varepsilon}{2}\) and truncate \(X\) at \(N\). It then follows that the mean of \(E(X_{[N]}) = 0 + \gamma^{2}_{M}\) expressed as a \(\frac{\varepsilon}{2}\)-mixture, which changes the total variation distance by at most \(\tfrac{\varepsilon}{2}\). By sub-additivity then follows
\begin{equation}
\|X - Y\|_{TV} \le \|X - X_{[N]}\|_{TV} + \|X_{[N]} - Y\|_{TV} \le \frac{\varepsilon}{2} + \frac{\varepsilon}{2} = \varepsilon.
\end{equation}

This therefore proves that for any \(X\) has finite expectation, and therefore and independently distributed random variable within finite expectation upon the Kemeny metric, there is always a way to truncate \(X\) and shift its mean to \(\gamma^{2}_{M}\), no matter what value \(\gamma^{2}_{M}\) might have, without moving by more than \(\varepsilon\) in the total variation distance. These constructions put an upper bound on \(\|Y\|\): it is no greater than the larger of $N$ or the absolute value of the position of the atom located at \(2\tfrac{(\gamma^{2}_{M} - E(X))}{\varepsilon}\). Consequently the tails of $Y$ are zero, making them sub-Gaussian. As $\varepsilon$ may be arbitrarily small, the only possible lower bound on the distance is zero.
\end{proof}

\begin{theorem}
\label{thm:upper_equality}
The total variational distance between any Euclidean linear function and an unbiased Kemeny linear function is upper-bounded by a finite function of \(\Pr_{\kappa}(X)\), corresponding to a finite distance of \(|\tfrac{n^{2}-n}{2}|\ge{0}.\)
\end{theorem}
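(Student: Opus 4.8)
The plan is to establish this bound as the dual counterpart to Theorem~\ref{thm:lower_equality}, reusing the truncation construction developed there but now instantiated at the specific cutoff \(N = \tfrac{n^{2}-n}{2}\), which is the supremum of the compact support of the Kemeny metric (Lemma~\ref{lem:kem_bounded}). Since the unbiased Kemeny linear function \(Y = G(Z)\) is distributed as the Beta-Binomial pmf \(\Pr_{\kappa}\) of equation~\ref{eq:kem_pdf}, with total mass concentrated on \([-\tfrac{n^{2}-n}{2},\tfrac{n^{2}-n}{2}]\) and expectation \(0\) (Lemma~\ref{lem:unbiased}), the whole argument reduces to controlling how much of a Euclidean (Gaussian) linear function \(X = F(Z)\) can fail to be matched by \(\Pr_{\kappa}\): this is precisely the tail mass that \(F\) places outside the Kemeny support, plus the residual overlap deficit inside it.

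First I would write the total variation distance in its overlap form, \(\|X-Y\|_{TV} = 1 - \int \min(f_{X},f_{Y})\), where \(f_{Y}\) is the Beta-Binomial density \(\Pr_{\kappa}\). Because \(f_{Y}\) vanishes off \([-N,N]\) with \(N = \tfrac{n^{2}-n}{2}\), the subadditive decomposition of Theorem~\ref{thm:lower_equality} gives \(\|X-Y\|_{TV} \le \|X - X_{[N]}\|_{TV} + \|X_{[N]} - Y\|_{TV} = e_{F}(N) + \|X_{[N]}-Y\|_{TV}\), where \(e_{F}(N) = \Pr(|X|>\tfrac{n^{2}-n}{2})\). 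The second step is to observe that \(e_{F}(N)\) is finite (at most one, and exponentially small under the Chernoff bound already invoked for the sub-Gaussian field), and that the interior term \(\|X_{[N]}-Y\|_{TV}\) is itself a finite function of \(\Pr_{\kappa}\): both \(X_{[N]}\) and \(Y\) are strictly sub-Gaussian measures (Definition~\ref{def:stict_sg}) supported on the same compact interval and centred at \(0\), so their densities overlap on a set of positive measure, whence \(\int_{[-N,N]} \min(f_{X_{[N]}}, f_{Y}) > 0\) and \(\|X_{[N]} - Y\|_{TV} < 1\). Combining, \(\|X-Y\|_{TV} \le e_{F}(N) + 1 - \int_{[-N,N]}\min(f_{X_{[N]}}, f_{Y}) =: h(\Pr_{\kappa}(X))\), a finite function of the Kemeny measure. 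Finally, I would interpret this bound metrically: in the coupling representation of the total variation distance, any mass transported between \(X_{[N]}\) and \(Y\) is moved within \([-N,N]\), so no atom travels farther than \(\sup\{|z| : z \in \operatorname{supp}(\Pr_{\kappa})\} = \tfrac{n^{2}-n}{2}\); this is the claimed finite distance \(|\tfrac{n^{2}-n}{2}| \ge 0\), collapsing to \(0\) only in the degenerate constant case excluded earlier, matching the lower bound of Theorem~\ref{thm:lower_equality}.

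The main obstacle is not boundedness itself — the total variation distance never exceeds one — but making the bound a genuinely informative function of \(\Pr_{\kappa}(X)\) and tying it to the diameter \(\tfrac{n^{2}-n}{2}\). The work lies in showing that the interior overlap \(\int_{[-N,N]}\min(f_{X_{[N]}}, f_{Y})\) admits a lower bound expressible through the Beta-Binomial parameters \((\alpha,n)\) of equation~\ref{eq:kem_pdf} (equivalently through the Kemeny variance of equation~\ref{eq:trunc_kemeny_var}), which is where the strict sub-Gaussian majorisation of \(\Pr_{\kappa}\) by the normal is essential: it guarantees that the leptokurtic Beta-Binomial mass sits \emph{inside} the bulk of the truncated Gaussian rather than being pushed to the endpoints, so the minimum of the two densities is bounded away from zero uniformly in the admissible mean shift \(\gamma^{2}_{M}\) of Theorem~\ref{thm:lower_equality}. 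A secondary technical point is verifying that truncating at exactly \(N = \tfrac{n^{2}-n}{2}\) — rather than at some larger \(N\) chosen only to make \(e_{F}(N)\) small — is legitimate; this follows because Lemma~\ref{lem:kem_bounded} already fixes this value as the exact radius of the Kemeny support, so no further slack is needed, and the unbiasedness from Lemma~\ref{lem:unbiased} keeps the truncation symmetric.
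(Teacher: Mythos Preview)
Your proposal takes a genuinely different route from the paper. The paper's argument is essentially a two-line compactness observation: it considers the degenerate scenario in which \(X\) fails to have finite expectation on the Euclidean side (at least one non-finite realisation), so that the Euclidean measure is unavailable, and then notes that the Kemeny side remains well-defined because its support is totally bounded with radius \(\tfrac{n^{2}-n}{2}\) (Lemma~\ref{lem:kem_bounded}); the maximum attainable distance is therefore the diameter \(n^{2}-n\), realised with probability one on the Beta-Binomial support. There is no truncation, no overlap integral, and no reuse of the machinery from Theorem~\ref{thm:lower_equality}.

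Your approach instead instantiates the truncation construction at the exact cutoff \(N=\tfrac{n^{2}-n}{2}\), decomposes the total variation via \(e_{F}(N)\) plus an interior overlap term, and then recovers the diameter through a coupling interpretation. This is more elaborate and arguably more informative---it attempts to make the bound depend explicitly on \(\Pr_{\kappa}\) rather than just on the trivial fact that total variation is at most one---but it also assumes both \(X\) and \(Y\) are well-behaved sub-Gaussian measures with densities, which sidesteps the paper's motivating case where the Euclidean side is actually degenerate. Your final coupling step (no mass moves farther than the support radius) is the one piece that coincides with the paper's reasoning; everything before it is additional scaffolding that the paper does not erect. The paper buys brevity at the cost of leaving the ``finite function of \(\Pr_{\kappa}(X)\)'' clause largely implicit; your version buys an explicit functional form at the cost of needing the interior-overlap lower bound, which you correctly flag as the real work and which the paper never attempts.
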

\begin{proof}
While the lower bound is always finite and may be treated as 0 w.l.g., for an arbitrary collection of points, the upper bound for the performance of a system \(\rho(\hat{Y},Y)\), for the arbitrary metric \(\rho\) is often indeterminate. For the Euclidean metric space, this issue is identified by the use of `approximately correct systems' which bound the measure of the space to be a finite value. With the extended real line \(\overline{\mathbb{R}}\) of performance, for which in conjunction with the Kemeny metric \(\rho_{\kappa}\) we obtain finite moments for arbitrary measure spaces, we show that the total variational distance is almost surely upper-bounded for any homogeneous function space (i.e., a function space for a common population). 

Allow \(X\) to note contain finite expectations (and therefore at least one non-finite realisation), and therefore to be a random variable. Then \(X\) is unmeasurable upon the Euclidean metric space, whereas the probability bounds are almost surely only guaranteed upon the Kemeny metric space. For any non-constant vector \(X\) then, the maximum distance between \(X\) and \(Y\), the approximation function, is \(\tfrac{n^{2}-n}{2},\) which when treated as measured and is therefore the distance error of approximation, is at most \(n^{2}-n.\) However, as \(\rho_{\kappa}\) is symmetric, the maximum distance obtainable given finite \(n\) is \(2(\tfrac{n^{2}-n}{2})\), which occurs with probability 1, as confirmed directly by the definition of the support upon the totally bounded and compact Kemeny metric space, with a Beta-Binomial distribution. 
\end{proof}

\subsubsection{Duality}

The concept of duality for projective geometry is a natural avenue of investigation. It holds that for the planar projective geometry of the Euclidean space there exists a dual permutation geometry, one which has been observed here between the Kemeny and Euclidean metric spaces. In a formal sense, by the finite nature of the Kemeny metric space, we may consider it to be a Galois field, and further a dual vector space, satisfying the three necessary properties of a dual cone. Further, the standard construction of the existence of a duality, the ability to distinguish between identical elements upon a given field with a second, is clearly self-evident upon the dual metric space characterisation. Ignoring the limiting case of the linear permutation field upon a population of linear scores (i.e., the standard asymptotic parametric learning problem per \cite{le1986}), for which a perfect parametric score fit implies an equivalent perfect ordering, denoting a bijective relationship between the ranks and the scores through the cumulative distribution function (CDF), we demonstrate that, especially in the problem of Tikhinov regularised, ill-posed or biased, learning upon linear functional map, the duality of the two metric spaces grants a just-identified unique solution.

This solution is an explicit consequence of the parametric case, in which the permutation ring of the target Kemeny distance is minimised to 0 (i.e., the correct ordering of the projective dual, the predictions, mirrors the target exactly) at the same time that the error in scores are also exactly correct. As there are a valid set of permutations which are equally distant from the target but which have different scoring error, we obtain a dual measurement which is uniquely minimised when both the Euclidean and Kemeny distances are conjointly minimised.

\subsection{Examining the Power of Rank based estimators}

As demonstrated in this manuscript, the Kemeny metric based estimators provide highly desirable properties, including unbiasedness and minimum variance with a given Beta-Binomial probability distribution for all uniformly sampled observations. These consequently have allowed us to construct standard form Neyman-Pearson Null Hypothesis Significance Testing framework procedures, with observed empirical characteristics, without either the assumptions of normality or even continuity of measures. This in turn raises the question of the power of the estimators: can we uniquely characterise the power of the rank based estimators while removing the unnecessary restrictive assumptions (namely continuity and the absence of the ties).

From \textcite{lehmann1953} we examine the null hypothesis framework, without loss of generality assuming \(H_{0}\) to be the null hypothesis of independence, and \(H_{1}\) the alternative hypothesis. Let \(F_{i} (i = 1,\ldots,n)\) be a set of continuous non-decreasing functions over the interval \([0,1]\) such that \(F_{i}(0)= 0\) and \(F_{i}(1) =1\) for which \(X_{j} = x_{1,j},x_{2,j},\ldots,x_{n,j}\), denoting the \(i^{th}\) element of the \(j^{th}\) random variable generated by the cumulative distribution function. Further, let us denote upon \(X\) the order statistics constructed upon the Kemeny metric, such that \(X_{(j)}\) denotes the order statistics of the \(j^{th}\) random variable, and element \(x_{(i,j)}\) likewise denotes the \(i^{th}\) ordered value upon the \(j^{th}\) random variable. 

Continuity for any such statistics measured upon the Kemeny metric is proven by Lemma~\ref{lem:cauchy-schwarz}, and is therefore valid upon all distributions in \(F\). Further, note that the classes \(\mathcal{F}(f_{1},\ldots,n)\) denote a partition of the family of all n-tuple described, and further, without loss of generality, may be measured upon the Kemeny metric to fall uniquely upon the discrete uniform distribution, and all bivariate pairs of random variables being continuous upon the beta-Binomial distribution. Therefore, all necessary requirements of \citeauthor{lehmann1953} are met, and therefore immediately follows a modified Lemma 3.1, Lemma 3.2 and Theorem 3.1. 

This allows for confirmation and extension of the independently derived discrete uniform distribution for each univariate distribution (albeit with a free variance parameter, due to the introduction of ties) in Lemma 3.1, and for the unique probability distribution representation, which is equivalent to that of the Riesz representation theorem applied upon the Kemeny metric (Theorem~\ref{lem:hilbert}), and immediately extends to the Neyman-Pearson fundamental Lemma \textcite{neyman1933}. We proceed to show these apply to the work of this manuscript as well.

Lemma 3.1 from \textcite[p.~25]{lehmann1953} is restated 
\begin{lemma}
\label{lem:lehman1953}
If \(F\) is a continuous cdf and if the \text{cdf} of \(Z\) is given by \(\Pr(Z \le z) = f(F(z))\) where \(f\) is non-decreasing on \([0,1]\) with \(f(0) = 0, f(1) = 1,\) then the cdf of \(F(Z) = f\).  
\end{lemma}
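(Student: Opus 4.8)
The plan is to identify the law of $W := F(Z)$ by computing its cumulative distribution function directly and checking that it coincides with $f$ on $[0,1]$. Since $F$ is valued in $[0,1]$, so is $W$, and it therefore suffices to evaluate $\Pr(W \le w)$ for $w \in [0,1]$; the cases $w < 0$ and $w \ge 1$ are immediate, giving $0$ and $1 = f(1)$ respectively. The substantive range is $w \in (0,1)$, and the idea is to use the continuity of $F$ to pin down the exact level set of $F$ at height $w$.

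For $w \in (0,1)$, I would set $z_w := \sup\{z : F(z) \le w\}$. Because $F$ is a cdf with limits $0 < w$ at $-\infty$ and $1 > w$ at $+\infty$, this supremum is finite; because $F$ is non-decreasing the set $\{z : F(z) \le w\}$ is a half-line; and taking a sequence $z_n \uparrow z_w$ with $F(z_n) \le w$ and invoking continuity shows $F(z_w) = w$ and $\{z : F(z) \le w\} = (-\infty, z_w]$ (if $F(z_w) < w$ then continuity keeps $F$ below $w$ slightly to the right of $z_w$, contradicting the supremum). Substituting,
\[\Pr(W \le w) = \Pr(F(Z) \le w) = \Pr(Z \le z_w) = f(F(z_w)) = f(w),\]
where the third equality is exactly the hypothesis $\Pr(Z \le z) = f(F(z))$ evaluated at $z_w$ and the last uses $F(z_w) = w$. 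For the remaining endpoint $w = 0$, note $\{W \le 0\} = \{Z : F(Z) = 0\}$ is either empty, hence of probability $0$, or a half-line $(-\infty, z_0]$ with $F(z_0) = 0$ by the same continuity argument, in which case $\Pr(W \le 0) = \Pr(Z \le z_0) = f(F(z_0)) = f(0) = 0$; either way $\Pr(W \le 0) = f(0)$. Collecting the three ranges yields $\Pr(F(Z) \le w) = f(w)$ for all $w$, i.e.\ the cdf of $F(Z)$ is $f$.

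The only genuine obstacle is that a non-decreasing $F$ need not be injective, so one cannot naively invert it; it is precisely the continuity hypothesis that forces $F(z_w) = w$ — a jump discontinuity could straddle the level $w$ and break the identity $\Pr(Z \le z_w) = f(w)$ — and that collapses the flat spots of $F$ harmlessly, since on a flat spot $Z$ contributes mass only through $\{Z \le z_w\}$. Within the present framework the required continuity input is the one already established for statistics measured on the Kemeny metric (Lemma~\ref{lem:cauchy-schwarz}, together with the Riesz representation of Theorem~\ref{lem:hilbert}), so the lemma transfers verbatim to the order-statistic distributions $X_{(j)}$ considered here and feeds the subsequent modified Lemma~3.2 and Theorem~3.1 of \textcite{lehmann1953}.
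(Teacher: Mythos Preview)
Your proof is correct and is the standard direct argument for this classical result: using continuity of $F$ to produce, for each level $w\in(0,1)$, a point $z_w$ with $F(z_w)=w$, and then reading off $\Pr(F(Z)\le w)=\Pr(Z\le z_w)=f(F(z_w))=f(w)$. The endpoint cases and the handling of flat spots are also fine.

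The paper's own proof takes a quite different route. It does not compute the distribution of $F(Z)$ at all; instead it argues that Lehmann's Lemma~3.1 ``holds upon the ranks as a direct consequence of the bijection of $S_n$,'' and then invokes Corollary~\ref{cor:density} (comparing the densities of $\mathcal{M}$ and $\mathcal{M}'$) together with the Glivenko--Cantelli result (Theorem~\ref{thm:gc}) to assert that any proof on the rank permutation space transfers as an inequality to the Kemeny metric space, characterised there by the Beta-Binomial distribution. In other words, the paper treats the lemma as an imported fact and focuses on why it remains valid in the Kemeny framework, rather than re-deriving the probabilistic identity. Your approach buys a self-contained verification of the actual statement (and makes transparent exactly where continuity is used), while the paper's approach is a contextual transfer argument that ties the lemma to the surrounding machinery (Corollary~\ref{cor:density}, Theorem~\ref{thm:gc}); your closing paragraph, invoking Lemma~\ref{lem:cauchy-schwarz} and Theorem~\ref{lem:hilbert}, effectively supplies that same bridge back to the paper's setting.
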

\begin{proof}
Lemma 3.1 holds upon the ranks as a direct consequence of the bijection of \(S_{n}\). Namely, the constancy of the distribution for each family \(\mathcal{F}\) implies the existence of \(F\) and the inherent sorting of \(X\) upon \(X_{(i,j)}\). By Corollary~\ref{cor:density} in conjunction with the application of Theorem~\ref{thm:gc}, any proofs upon the rank permutation space much uniformly hold as an inequality upon the Kemeny metric space, and thus may be treated as a worst case upper upon the Kemeny metric which may be characterised by the Beta-Binomail distribution. 
\end{proof}
\begin{corollary}
\label{cor:free}
However, the assumption of continuity to impose the lack of ties upon \(S_{n}\) is precluded, and thus is relaxed here. Therefore, the asymptotic normality (Lemma~\ref{lem:kem_asym_normal}) is still valid, yet is no longer a consequence of the continuity upon the random variables \(Z\), and therefore is removed and the constancy of the distribution is removed as well, allowing for equation~\ref{eq:kem_variance} to be a sufficient statistic upon the rank distribution, and therefore the uniqueness of the distribution of Hilbert norm distances is still constant. 
\end{corollary}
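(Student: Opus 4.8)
The plan is to obtain Corollary~\ref{cor:free} as the consequence of stripping the continuity hypothesis out of Lemma~\ref{lem:lehman1953} while retaining the conclusions established independently in the earlier sections. First I would isolate precisely where continuity is used in \textcite{lehmann1953}: it appears only to force the sampled $n$-tuple into $\mathrm{S}_{n}$ almost surely, so that the rank vector is a genuine permutation and the null law of a rank statistic is the uniform measure on $\mathrm{S}_{n}$. Removing continuity replaces $\mathrm{S}_{n}$ by the full tie-admitting space $\mathcal{M}$ of cardinality $n^{n}-n$, with $\mathrm{S}_{n}=\mathcal{M}'\subset\mathcal{M}$ by Corollary~\ref{cor:density}; this is the step that both licenses ties and breaks the classical distribution-free (``constancy'') property, since $\sigma^{2}_{\kappa}(X)$ of equation~\ref{eq:kem_variance} is no longer pinned to $\tfrac{n^{2}-n}{2}$ but varies with the tie pattern.

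Next I would verify that the three asserted conclusions survive. For asymptotic normality, Lemma~\ref{lem:kem_asym_normal} was derived from the monotone convergence of the excess kurtosis of the Beta-Binomial representation to zero, a computation that never invoked continuity of the generating cdf $Z$; hence it holds verbatim, now resting on strict sub-Gaussianity and the Glivenko--Cantelli property (Theorem~\ref{thm:gc}) rather than on continuity. For the loss of constancy, I would simply record that the ``free variance parameter'' noted after Lemma~\ref{lem:lehman1953} is exactly $\sigma^{2}_{\kappa}(X)$, so no fixed null distribution can be asserted uniformly over $\mathcal{M}$. For the restoration of uniqueness, the claim is conditional: $\sigma^{2}_{\kappa}(X)$ (resp.\ $\sigma^{2}_{\kappa}(Y)$) must be shown to be a sufficient statistic for the tie structure, so that conditionally the law of the Hilbert-norm distances is again the unique Beta-Binomial of equation~\ref{eq:kem_pdf} with parameters determined by $(\sigma^{2}_{\kappa}(X),\sigma^{2}_{\kappa}(Y),n)$.

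For that sufficiency step I would argue through the inner-product structure: by the Riesz--Fr\'{e}chet representation for the Kemeny Hilbert space (Theorem~\ref{lem:hilbert}), every scalar statistic built from $\kappa(X),\kappa(Y)$ factors through the two norms $\|\kappa(X)\|_{\kappa}^{2}=\sigma^{2}_{\kappa}(X)$, $\|\kappa(Y)\|_{\kappa}^{2}=\sigma^{2}_{\kappa}(Y)$ and the angle $\tau_{\kappa}(X,Y)$; under uniform sampling on $\mathcal{M}$ the conditional law of $\tau_{\kappa}$ given the two norms is exchangeable across the fibres of the norm map, and the maximum-entropy derivation of Subsection~\ref{subsection:function_analytic} pins that conditional law down as the Beta-Binomial indexed by $(\sigma^{2}_{\kappa}(X),\sigma^{2}_{\kappa}(Y),n)$ alone. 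A Fisher--Neyman factorisation of the joint pmf on $\mathcal{M}\times\mathcal{M}$ then exhibits $\sigma^{2}_{\kappa}(X)$ as sufficient, and the compactness (hence completeness) of the parameter space already noted in the consistency argument upgrades this to a Lehmann--Scheff\'{e} sufficient statistic, so equation~\ref{eq:kem_variance} carries all the distributional information that continuity previously supplied for free.

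The hard part will be the sufficiency claim itself --- specifically, verifying that the conditional law of $\tau_{\kappa}$ given the marginal Kemeny variances depends on the two tie partitions \emph{only} through those variances, and not through finer combinatorial invariants of the partitions. The maximum-entropy/Beta-Binomial identification is the natural lever, but making the fibre-wise exchangeability argument rigorous --- rather than merely asserting it --- is where the genuine work lies; everything else (the removal of continuity, the persistence of asymptotic normality, the emergence of the free variance parameter, and the conditional uniqueness once sufficiency is in hand) reduces to bookkeeping against results already established.
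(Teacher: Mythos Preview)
The paper provides no proof for this corollary at all: it is stated immediately after Lemma~\ref{lem:lehman1953} and the text simply moves on to the next theorem. There is nothing to compare your argument against beyond the bare assertion.

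Your proposal is therefore far more developed than the paper's own treatment. You correctly identify the only place continuity enters Lehmann's argument (forcing the rank vector into $\mathrm{S}_{n}$), correctly observe that asymptotic normality in Lemma~\ref{lem:kem_asym_normal} was proved via the kurtosis of the Beta-Binomial and never used continuity of $Z$, and correctly note that the ``constancy'' of the null distribution is lost precisely because $\sigma^{2}_{\kappa}(X)$ becomes a free parameter on $\mathcal{M}$. All of this is sound bookkeeping against results the paper has already stated.

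You are also right to flag the sufficiency claim as the genuine difficulty, and here your honesty exceeds the paper's: the assertion that $\sigma^{2}_{\kappa}(X)$ is a sufficient statistic for the tie structure is simply stated in the corollary without argument, and your proposed route through Riesz--Fr\'{e}chet plus fibre-wise exchangeability plus the maximum-entropy identification is a reasonable plan of attack, but it is not obviously correct. Two tie patterns can share the same value of $\sigma^{2}_{\kappa}$ while having different multiset structures (e.g., for $n=4$, the pattern with one pair of ties and two singletons versus other configurations), and nothing in the paper establishes that the conditional law of $\tau_{\kappa}$ given the marginal variances is actually invariant across such fibres. Your Fisher--Neyman factorisation sketch presumes what needs to be shown. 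So your plan is more rigorous than the paper's non-proof, but the step you correctly label ``the hard part'' is a genuine gap that neither you nor the paper closes.
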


By Lemma~\ref{lem:lehman1953} and Corollary~\ref{cor:free} then held the following Theorem:
\begin{theorem}
Given any function \(f_{1}^{\circ},\ldots,f_{n}^{\circ}\) and any rank test of the hypothesis \(H: F_{1},\cdots,F_{n} \in \mathcal{F}(f_{1}^{\circ},\ldots,f_{n}^{\circ})\), the power of this test depends only on \(F_{1}:\cdots:F_{n}\). That i , if \(F_{1}:\cdots,F_{n} = F_{1}^{\prime}:\cdots,F_{n}^{\prime}\) so that \(F_{1},\ldots,F_{n}\) and \(F_{1}^{\prime},\ldots,F_{n}^{\prime}\) belong to the same class \(\mathcal{F}(f_{1},\cdots,f_{n})\) the test has the same power against these two alternatives. Furthermore, given any class of alternatives \(K: (F_{1},\cdots,F_{n}) \in \mathcal{F}(f_{1}^{\prime},\cdots,f_{n}^{\prime})\) there exists a uniformly most powerful rank testing for \(H\) against \(K\).  
\end{theorem}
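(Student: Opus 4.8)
The plan is to reduce the entire statement to a question about the distribution of the rank pattern $\kappa(X)$ on the finite tied-permutation space $\mathcal{M}$, and then to invoke the Neyman--Pearson fundamental lemma on that finite space, mirroring \textcite{lehmann1953} but with $\mathrm{S}_{n}$ replaced throughout by $\mathcal{M}$ of order $n^{n}-n$.

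First I would record what a rank test is: a (possibly randomised) test function $\varphi$ measurable with respect to the $\sigma$-algebra generated by $\kappa(X)$ alone. Its power against an alternative $(F_{1},\ldots,F_{n})$ is then $E_{(F_{1},\ldots,F_{n})}\varphi = \sum_{m\in\mathcal{M}}\varphi(m)\,\Pr_{(F_{1},\ldots,F_{n})}(\kappa(X)=m)$, a finite sum since $\mathcal{M}$ is finite and countable (Corollary~\ref{cor:density}). Hence the power depends on the alternative only through the probability mass function $m\mapsto\Pr_{(F_{1},\ldots,F_{n})}(\kappa(X)=m)$ on $\mathcal{M}$. The second step is to show this pmf is constant on each class $\mathcal{F}(f_{1},\ldots,f_{n})$: by the probability-integral transform any $(F_{1},\ldots,F_{n})$ in the class can be written $F_{i}=f_{i}\circ F$ for a common continuous cdf $F$, and by Lemma~\ref{lem:lehman1953} together with Corollary~\ref{cor:free} (which removes the no-ties restriction and makes equation~\ref{eq:kem_variance} a sufficient statistic on the rank distribution) the law of $F(Z_{i})$ is $f_{i}$, so the joint law of $(F(Z_{1}),\ldots,F(Z_{n}))$ — and therefore the induced pattern $\kappa(X)\in\mathcal{M}$ — depends only on $(f_{1},\ldots,f_{n})$, i.e.\ only on the ratio $F_{1}:\cdots:F_{n}$. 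This already yields the first assertion: if $F_{1}:\cdots:F_{n}=F_{1}^{\prime}:\cdots:F_{n}^{\prime}$ the two alternatives induce the same pmf on $\mathcal{M}$, so every rank test has identical power against both.

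For the uniformly-most-powerful claim I would treat $H$ and $K$ as simple hypotheses on $\mathcal{M}$. Under $H$ (independence, all $F_{i}$ equal, $f_{i}$ the identity class) the rank pmf is the discrete uniform on $\mathcal{M}$ — marginally the discrete uniform Kemeny distance, bivariately the beta-Binomial of Lemma~\ref{lem:kem_asym_normal} — call it $p_{0}$; under any member of $K$ the pmf is a single fixed $p_{1}$ on $\mathcal{M}$ by the second step. The Neyman--Pearson fundamental lemma \textcite{neyman1933} then produces the most powerful level-$\alpha$ test of $p_{0}$ against $p_{1}$: reject when $p_{1}(m)/p_{0}(m)$ exceeds a critical value, randomising on the boundary. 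Since $p_{1}$, and hence the likelihood-ratio ordering of the finitely many points of $\mathcal{M}$, is the same for every alternative in $K$, this one test is most powerful against every alternative in $K$ simultaneously, which is exactly a uniformly most powerful rank test; its existence, and the fact that the ordering is the Riesz--Fréchet representation of the Kemeny inner product (Theorem~\ref{lem:hilbert}), close the argument.

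I expect the main obstacle to be the careful handling of ties in the second step: one must verify that the probability-integral-transform argument still partitions the alternatives into the classes $\mathcal{F}(f_{1},\ldots,f_{n})$ when $F$ is permitted to have flats — so that $F(Z_{i})$ is genuinely tie-valued on a positive-probability set — and that the resulting rank pattern still lands in $\mathcal{M}$ rather than only in $\mathrm{S}_{n}$. This is precisely the role of Corollary~\ref{cor:free} and the beta-Binomial sufficiency of equation~\ref{eq:kem_variance}, and the entire argument must be run on $\mathcal{M}$ from the outset rather than derived on $\mathrm{S}_{n}$ and corrected afterwards. Once the pmf on $\mathcal{M}$ is established as a class invariant, the Neyman--Pearson step is routine because $\mathcal{M}$ is finite.
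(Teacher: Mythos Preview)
Your proposal is correct and follows essentially the same route as the paper: the paper does not supply a formal proof environment for this theorem but simply asserts that it ``held'' from Lemma~\ref{lem:lehman1953} and Corollary~\ref{cor:free}, having already flagged the Riesz representation (Theorem~\ref{lem:hilbert}) and the Neyman--Pearson fundamental lemma \textcite{neyman1933} as the immediate extensions. You have unpacked exactly these four ingredients---the probability-integral-transform reduction, the tie-relaxation corollary, the finite-space Neyman--Pearson step, and the Hilbert-space representation---into an explicit argument on $\mathcal{M}$, which is more detailed than the paper's own treatment but identical in substance.
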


In a two-sample problem, let \(X = (x_{1},x_{2},\ldots,x_{n})\) and \(Y = (y_{1},\ldots,y_{m})\), for \(m,n \in \mathbb{N}^{+}\), be independently distributed with respective cumulative distribution functions \(F_{X}\) and \(F_{Y}\), and for which is proposed the null hypothesis \(H_{0}: F = G.\) Note that upon the Kemeny metric, the bivariate distribution of the distance between \(F,G\) follows a beta-binomial distribution. As otherwise would be expected for a null hypothesis test, under the null hypothesis of equality, the expectation of the median distance over the permutation space \(\mathcal{M} \times \mathcal{M} \to [0,n^{2}-n]\) is equivalent to the point \(\tfrac{n^{2}-n}{2} \implies F(\rho_{\kappa}\mid H_{0}) = 0.5\), which is trivially confirmed. As the probability distribution of the bivariate rank distance can be now expressed in terms of the beta-Binomial distribution, we consider first the one and two-sided alternative hypotheses. 

The one-sided alternative hypothesis, \(G(x) \le F(x)\) assumes that the directed distance as given in equation~\ref{eq:kem_dist} may be measured upon the centred Kemeny distances, and thus possesses a negative distance between the two distributions. Likewise, the two-sided test is equivalent to the squared Kemeny correlation (as an affine linear transformation of the distance, the construction is bijectively acceptable), and clearly applies to the one-sided test as well. This allows us to realise the properties of the Kemeny metric in terms of the \textcite{hoeffding1948} \(D\) and obviates the \textcite{blum1961} correction. 

The generalised Hoeffding's \(D\) upon permutations with ties is constructed as follows: Let \(X_{(i)} \equiv \vec{X}\) and \(Y_{(i)} \equiv \vec{Y}\) be the order statistics constructed upon the Kemeny metric, with expectation 0 if independence is observed. The original \textcite{hoeffding1948} statistic is constructed from \(\vec{X},\vec{Y}\) using \(\phi(X_{i}) \to \{0,1\}\) as a mapping equivalent to \(\kappa^{2}(X_{i}\). The Hoeffding test of independence is explicity constructed for such a problem, using the \(\ell_{2}\)-norm upon the Euclidean probability measure space \(\diff{F_{1,2}}\), thereby requiring continuity to ensure the non-existence of ties (and therefore non-measurable events upon the symmetric group \(S_{n}\)). With the Beta-Binomial distribution for the bivariate \(F_{1,2}\) along with the marginal discrete uniform distributions for both \(F_{1}\) and \(F_{2}\), we obtain a complete measure space by which Hoeffding's independence test may be generalised to address all homogeneous populations measurable upon a common cumulative distribution function. We would note this resolution also removes the necessity of the \textcite{blum1961} coefficient, as the discontinuity does not exist. 

Note that 
\[H=\int (F_{1,2}-F_{1}F_{2})^{2}\,\diff{F_{12}},\]
for which \(\{F_{1,2},F_{1},F_{2}\}\) are respectively point-wise identified for all samples, under the assumption of a common homogeneous population. Then, \(F_{1,2} = \rho_{\kappa}(F_{1},F_{2})\sim f(F_{1},F_{2}\mid n,\alpha)\) distributed as a beta-binomial distribution, and the marginal distributions \(F_{1},F_{2}\mid \sigma_{\kappa}(F_{1}),\sigma_{\kappa}(F_{2}),\alpha =  1\) are both discrete uniform distributions, with sufficient statistics in the variances for each. Allowing for both the difference to be calculated, we obtain the Hoeffding test for independence 


\section{Estimating the Rank Covariance of a multivariate distribution}
A foundational problem in multivariate analysis is that of obtaining the maximum likelihood estimators of the first and second moments of a \(p\)-variate distribution based on \(X^{n \times p}, \{i = 1,\ldots,n\}, \{j = 1,\ldots,p\}\), which must therefore be assumed  to be normally distributed. Let a matrix \(A\) with \(n\) rows and \(p\) columns be denoted by \(\mathbf{A}: n \times p\), vectors \(\vec{a} = \{a_{1},\ldots,a_{n}\}\) denote row vectors, with their transposes \(\vec{a}^{\intercal}\) denoting   column vectors. \(\mathbf{D}_{a} \equiv \mathbf{D}(a_{1},\ldots,a_{n})\) is chosen to denote a diagonal matrix with diagonal elements \(a_{1},\ldots,a_{n}\). The determinant and trace are denoted \(|\mathbf{A}|\) and \(\text{tr}(\mathbf{A})\), respectively. Positive and non-negative definiteness are denoted respectively as \(\mathbf{A} > 0\) and \(\mathbf{A} \ge 0\), for which the characteristic roots \(\lambda_{1},\ldots,\lambda_{n} \) of a symmetric matrix are ordered \(\lambda_{1} \ge \cdots \ge \lambda_{n}\). Greek letters denote parameters, with upper case denoting collections thereof, typically in the form of a vector or matrix. Latin letters refer to sample values, and estimates of parameters thereof is denoted \(\theta \approx \hat{\theta}\). It also should be noted that upon the Kemeny metric then, is established a smooth quadratic \(p=2\)-norm for which the extrema is almost surely the median, thereby presenting a differentiable solution for the unknown expectation.

The positive definiteness of said matrix \(\Xi\) is trivially proven by the finite compactness of our metric space \((\mathcal{M},\rho_{\kappa})\), more generally identified as Mercer's theorem (or condition). The veracity of this allows us to characterise equation~\ref{eq:kem_cor} as a reproducing kernel Hilbert space, a necessary requirement for any Gaussian process or Support Vector Machine model. 
\begin{lemma}
The positive definiteness of the Kemeny \(\Xi\) covariance matrix holds by the almost sure positive variance of the Kemeny metric space for any finite \(n\).
\end{lemma}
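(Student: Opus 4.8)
The plan is to exhibit \(\Xi\) as a Gram matrix in the Frobenius--Hilbert space of skew-symmetric \(n\times n\) matrices and then to split the claim into the automatic positive semi-definiteness and the almost-sure strict positive definiteness. First I would record that, with \(\Xi_{jk}=\langle X_{j},X_{k}\rangle_{\kappa}\) the (suitably scaled) Kemeny cross-product of equation~\ref{eq:kem_cor}, for every \(a=(a_{1},\ldots,a_{p})^{\intercal}\in\mathbb{R}^{p}\) one has
\[
a^{\intercal}\Xi a = \Big\langle \textstyle\sum_{j} a_{j}\kappa(X_{j}),\ \sum_{k} a_{k}\kappa(X_{k})\Big\rangle_{\kappa} = \Big\| \textstyle\sum_{j} a_{j}\kappa(X_{j})\Big\|_{\kappa}^{2}\ge 0,
\]
which uses only that equation~\ref{eq:kem_cor} is the inner product of a genuine Hilbert space (Theorem~\ref{lem:hilbert}). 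This gives \(\Xi\ge 0\) --- equivalently, that the Kemeny kernel satisfies Mercer's condition --- for every finite \(n\) and without any sampling assumption.

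Second I would pin down the diagonal: by Lemma~\ref{lem:kem_bounded} each \(\Xi_{jj}=\sigma^{2}_{\kappa}(X_{j})\) lies in \((0,\tfrac{n^{2}-n}{2}]\) as soon as \(X_{j}\) is non-constant, and the degenerate (constant) variates are excluded by hypothesis. Hence \(\Xi\) has a strictly positive diagonal; this is exactly the ``almost sure positive variance'' of the statement, and it already forbids the trivial degeneracies.

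Third, and this is the substantive step, I would upgrade \(\Xi\ge 0\) to \(\Xi>0\) by ruling out a nontrivial null vector. From the Gram identity above, \(a^{\intercal}\Xi a=0\) with \(a\ne 0\) forces \(\sum_{j} a_{j}\kappa(X_{j})=0\) inside the \(\binom{n}{2}\)-dimensional space of skew-symmetric matrices, i.e.\ an exact linear dependence among the permutation representations of the \(p\) columns. Since \(n\gg p\) (so \(p\le\binom{n}{2}\), with room to spare) and the columns are drawn independently from a common non-degenerate population on \(\overline{\mathbb{R}}\) --- precisely the randomness and non-collinearity conditions of Definition~\ref{def:gauss_markov} already assumed by Theorem~\ref{thm:gauss-markov} --- the locus on which \(\det\Xi\) vanishes is a proper algebraic subset of the sample space and therefore carries probability zero. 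Equivalently, the Kemeny kernel is strictly positive definite on non-degenerate inputs (\cite{schoenberg1938}), so the Gram matrix of the almost surely linearly independent vectors \(\kappa(X_{1}),\ldots,\kappa(X_{p})\) is positive definite. Thus \(\Xi>0\) almost surely for every finite \(n\).

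The hard part is this last step, not the first two. Over a discrete score space ties are possible, so each \(\kappa(X_{j})\) ranges over a finite set and one must be sure the sampling measure places no \emph{atom} on a collinear configuration; this is what the non-atomic common-population hypothesis and \(n\gg p\) are there to guarantee, forcing any relation \(\sum_{j} a_{j}\kappa(X_{j})=0\) to confine the underlying continuous sample to a lower-dimensional slice of \(\overline{\mathbb{R}}^{\,n\times p}\). Everything else --- the Gram identity, the diagonal bound, the reduction to linear independence --- is routine.
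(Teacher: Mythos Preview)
Your argument is correct and considerably more thorough than the paper's own proof, which takes a different and much terser route. The paper simply writes the covariance matrix as \(\mathbf{D}_{\sigma_{\kappa}}\,\Xi\,\mathbf{D}_{\sigma_{\kappa}}\), with \(\mathbf{D}_{\sigma_{\kappa}}\) the diagonal matrix of Kemeny standard deviations and \(\Xi\) the correlation matrix, notes that the diagonal entries are almost surely positive (the ``positive variance'' of the statement), and then concludes positive definiteness \emph{conditionally} on the correlation matrix having full rank. No argument is given for that rank condition; the paper effectively defers it to the Mercer-condition remark in the surrounding text. Your Gram-matrix identity does the work the paper leaves implicit: it delivers \(\Xi\ge 0\) outright and reduces strict definiteness to linear independence of the \(\kappa(X_{j})\), which you then dispatch with a generic-position / measure-zero argument under \(n\gg p\) and independent sampling.

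The one place to be careful is your third step in the genuinely discrete regime. When the score space has few levels (the ordinal setting the paper emphasises), the map \(X_{j}\mapsto\kappa(X_{j})\) takes only finitely many values, so the vanishing locus \(\{\det\Xi=0\}\) need not have measure zero under the sampling law; your appeal to a ``non-atomic common-population hypothesis'' is exactly the extra assumption needed, and it is not one the paper states. That said, the paper's own proof does not address this either---its ``as long as the rank of \(\Xi\) is not deficient'' clause is precisely the same gap, left unfilled. So your proposal is at least as complete as the paper's, and substantially more explicit about where the actual content lies.
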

\begin{proof}
Verification of this conjecture is trivially found as the matrix product of a positive diagonal matrix of the vector of standard deviations \(\mathbf{D}_{\sigma_{\kappa}}\) with correlation matrix \(\Xi\): \(\mathbf{D}\Xi\mathbf{D}\). As long as the rank of the symmetric matrix \(\Xi\) is not deficient, said matrix is almost surely positive definite.
\end{proof}

Let us assume suitably large \(n \ge 30\) such that the distribution of the Kemeny distances may be validly treated as approximately normally distributed, with even moments estimated using equations~\ref{eq:kem_var_approx} and \ref{eq:kem_kurt_approx}. For any centred distribution \(X_{j}\) which is independently distributed, the first moment is trivially both unbiased and a maximum likelihood estimator of the expectation of said vector of variates \(E(\vec{X}_{j}) = \vec{\mu}_{j}\). Trivial verification of the summations will immediately result in the verification that the expectation upon the Kemeny metric is the median score value, not the mean score value, as well.

The logarithm of the concentrated likelihood upon the Euclidean score space is defined:
\begin{subequations}
\begin{equation}
-\frac{1}{2}p n \log(2\pi) + \frac{1}{2} n\mathbf{K}
\end{equation}
\begin{equation}
\mathbf{K} = f(\Sigma;S) = - \log|\Sigma| - \text{tr}\Sigma^{-1}\mathbf{S}
\end{equation}
\begin{equation}
\mathbf{S} = \frac{1}{n} \sum_{i=1}^{n} (x_{i} - \bar{x})^{\intercal}(x_{i} - \bar{x}),
\end{equation}
\begin{equation}
\bar{x} = \frac{1}{n}\sum_{i=1}^{n} x_{i} 
\end{equation}
\end{subequations}
wherein the kernel \(K\) is defined using the population variance-covariance matrix \(\Sigma\) and sample variance-covariance matrix \(\mathbf{S}\). The problem is to maximise \(f(\Sigma;S)\) w.r.t. positive definite (p.d.) matrices \(\Sigma\), for which \(\Psi = \Sigma^{-1}\) is a one-to-one transformation of \(\Sigma\). By replacing the standard reliance upon the Euclidean \(\ell_{2}\) metric space with the Kemeny metric, we directly obtain a linear function space for the construction of the first and second moments as linear functions of the sample space, proven to be Gauss-Markov estimators (Theorem~\ref{thm:gauss-markov}). 

The explicit finite sample properties of the distribution are examined in Subsection~\ref{subsection:function_analytic}. However, if we continue with the approximation normality of the probability distribution, the resolution of maximum likelihood characterisation is immediately self-evident for suitably large sample sizes (\(n > 30\)). Under these conditions then, standard matrix differentiation methods apply upon the image of the Kemeny correlations, such that we may define the following two equivalences:
\begin{subequations}
\begin{equation}
\diff{\Xi^{-1}} \equiv -\Xi^{-1}(\diff{\Xi})\Xi^{-1}
\end{equation}
\begin{equation}
\diff{|\Xi|}_{ij} = (2 - \delta_{ij}) \Xi_{ij}(\diff{\sigma_{ij}}),
\end{equation}
\end{subequations}
for which \(\Xi_{ij}\) is accepted to be the cofactor of \(\sigma_{ij}\), and \(\delta_{ij}\) to be Kronecker's delta, following the derivations as put forth in \textcite{anderson1985}. Equivalent bounded derivations upon the function \(f(\Xi;V)\), which are monotonically convergent upon the Kemeny metric space as the function is always compact and totally bounded by its strict sub-Gaussianity, guarantees the existence of an extrema in the set of all positive definite matrices, presenting a well-posed (and thereby unique) solution for large \(n\):
\begin{subequations}
\begin{equation}
\diff[{f(\Xi;V)}]_{ij} = (2-\delta_{ij}) \bigg\{-\frac{\Xi_{ij}}{|\Xi|} \diff{\sigma_{ij}} + (\text{tr }\Xi^{-1}E_{ij}\Xi^{-1}V) \diff{\sigma_{ij}}\bigg\} = 0, i \le j,
\end{equation}
\begin{equation}
\diff[{f(\Xi;V)}]_{ij} = \Psi^{-1} - V = 0.
\end{equation}
\end{subequations}
Here \(\Xi_{ij}\) is the cofactor of \(\sigma_{ij}\) and  \(E_{ij}\) remains an identity matrix with entry \(1\) in the \((i,j)^{th}\) position and 0 in all other positions. The simplification also trivially follows with a unique solution found at \(\hat{\Psi} = V^{-1}\), which by the positive definiteness of any Gram matrix, is expressible the proportionate logarithmic convexity of the correlation matrix \(V = \mathbf{D}^{-1}\hat{\Xi}\mathbf{D}^{-1}\). The maximisation of the correlation matrix is therefore proportionate to the maximisation of the variance-covariance matrix, for any square non-singular matrix \(\mathbf{D}\), which is found for any collection of \(p\) positive standard deviations. 

Then follows the standard square matrix decomposition of the characteristic roots (i.e., the eigenvalues) which are validly applied to any positive definite Banach norm-space (which includes the Kemeny metric variance-covariance matrix). For the Rayleigh quotient then, a positive definite matrix with unit diagonal is sufficient and proportionally equivalent to the suitably scaled original variance-covariance matrix \parencite{schmidt1907}. For the bivariate normal distribution then, the correlation matrix is identical for the rank and score spaces, and almost surely positive definite for the Kemeny metric space, even in the non-linear score distribution space. This allows us to substitute the score normalisation (i.e., the inverse variance-covariance matrix) of the Kemeny correlation matrix for the Pearson variance-covariance matrix, without loss of generality, as the unique extrema exists equivalently for either metric space. 

The transformations of the diagonal scaling matrix of standard deviations (which is self-evidently the outer matrix product) thereby allows for the rank standard deviation and the standard score standard deviations to be substituted into the solution for Hadamard's inequality, such that the solution for the spread of scores is a linearly uniquely solved solution which is consistent with the Kemeny rank correlation matrix. This hypothesis is trivially proven true, as any parametric distribution must posses a uniquely defined CDF upon its score distribution, and therefore a unique linear ranking upon its scores. As the linear ranking presupposes the existence of a linear score function, its existence holds even for non-linear but monotonic score distributions. 

Consequently, the rank correlation matrix is defined for any common population of scores, but not vice-versa (and many score vector may retain the same rank ordering). As an example case, consider the estimation of the variance-covariance matrix of a bivariate pair of Cauchy distributions. In such a distribution, the expectation upon the Euclidean distance and Pearson variance-covariance matrices are undefined; however the median expectation and the bivariate variance-covariance defined upon any finite sample representation is both unbiased and convergent, and exceedingly useful occurrence as the comparative distance about the median location. This also verifies hypothesis 2.2c of \textcite{olkin1994}, in that there is indeed an exponential multivariate distribution for the order statistics, which was constructed here. 
\small
\printbibliography

 \begin{multicols}{2}
 \appendix
 \section{Definitions}
\begin{definition}[Metric space]
\label{def:metric_space}
A metric space is an ordered pair constructed from a non-empty set,
and a metric function on the set, which characterises a distance from the mapping of any
two elements of the set (domain) onto the range. A metric function satisfies three principal properties upon
the entire domain:
\begin{enumerate}
  \item{\(\rho(x,y)=0\iff x=y:\)     identity of indiscernibles,}
  \item{\(\rho(x,y)=\rho(y,x):\)     symmetry, and}
  \item{\(\rho(x,z)\leq \rho(x,y)+\rho(y,z):\)     subadditivity.}
\end{enumerate}
\end{definition}


\begin{definition}[Complete space]
\label{def:complete_space}
A metric space \((X,\rho)\) is complete if the expansion constant of the metric space is \(\le 2.\)
\end{definition}
\begin{definition}[Expansion Constant]~\label{def:expansion_constant}
    The expansion constant of a metric space is the infimum of all constants $\mu$ such that whenever the family $\{{\overline {B}}(x_{\alpha },\,r_{\alpha })\}$ intersects pairwise, the intersection $\bigcap _{\alpha }{\overline {B}}(x_{\alpha },\mu r_{\alpha })$ is non-empty.
\end{definition}
\begin{definition}[Compact Metric Space]~\label{def:compact}
A metric space is compact if and only if it is complete and totally bounded.
\end{definition}

\begin{definition}[Cauchy-Schwarz inequality]~\label{def:cauchy-schwarz}
The Cauchy-Schwarz inequality states that for every pair of vectors $u$ and $v$ of an inner product space $M = n^{n}$ of fixed length $n$, holds the following equality: \[\left|\langle \mathbf {u} ,\mathbf {v} \rangle \right|^{2}\leq \langle \mathbf {u} ,\mathbf {u} \rangle \cdot \langle \mathbf {v} ,\mathbf {v} \rangle ,\] for which $\langle \cdot ,\cdot \rangle$ is the inner product. As every inner product, and hence every Hilbert space (Definition~\ref{def:hilbert}) produces the canonical norm, upon both univariate and bivariate relational pairs of vectors, $|\{\binom{p}{1},\binom{p}{2}\}| = p + \frac{p^{2}-p}{2}$, where the norm of a vector $\mathbf{u}$ is defined, $\|\mathbf {u} \|:={\sqrt {\langle \mathbf {u} ,\mathbf {u} \rangle}}$, so that this norm and the inner product are related by the defining condition $\|\mathbf {u} \|^{2}=\langle \mathbf {u} ,\mathbf {u} \rangle$, upon which $\langle \mathbf{u},\mathbf{u}\rangle$ is always a non-negative real number. It trivially follows then that taking the square root of both sides of the above inequality may be expressed as follows: \[|\langle \mathbf {u} ,\mathbf {v} \rangle |\leq \|\mathbf {u} \|\|\mathbf {v}\|.\]
\end{definition}

\begin{definition}
\label{def:hilbert}
A Hilbert space is a complete metric space (Banach space) which possesses the further condition that it is positive conjugate homogeneous upon its distance function \(f\): \[f(cx) = |c|\cdot f(x),\ |c| = (c > 0).\] 
\end{definition}
  \end{multicols}
\section{Kemeny metric space as a topologically complete Hilbert space}

Let $\rho_{\kappa}$ denote the \textbf{Kemeny metric distance function}, defined and introduced by \textcite{kemeny1959} in equation~\ref{eq:kemeny_dist1} and equation~\ref{eq:kemeny_score1}, which we axiomatically accept to denote a metric space. Allow  all capital letters to denote real vectors of length \(n\). We map any vector of length \(n\) onto a permutation matrix of order \(n\) by function \(\kappa^{*}(X_{A}): \mathbb{R}^{n} \to M^{n \times n}\), for which \(a_{ij}, i,j = 1,\ldots,n\) thereby denotes any element in said matrix in row \(i\) and column \(j\), \(a_{ij}^{*} \in M^{n \times n}\):   
\begin{subequations}
\begin{equation}
    \label{eq:kemeny_dist1}
\rho_{\kappa^{*}}(\kappa^{*}(X_{A}),\kappa^{*}(X_{B})) = \frac{1}{2} \sum_{i} \sum_{j} \text{sign}(a_{ij}^{*} - b^{*}_{ij})
\end{equation}
  \begin{equation}
    \label{eq:kemeny_score1}
    a^{*}_{ij} = {
\begin{dcases}
\:1 & \text{ if } x_{i} > x_{j}\\
-1 & \text{ if } x_{i} < x_{j}\\
\:0 & \text{ if } x_{i} = x_{i},\\
\end{dcases}
}\, i,j = 1,\ldots,n.
\end{equation}
\end{subequations}

The \(\kappa^{*}\) function represents the mapping of an extended real \(n\) length vector, such as \(X_{A}\) or \(X_{B}\), onto the respective skew-symmetric matrices \(a\) or \(b\) of order \(n \times n\). Off diagonal elements are therefore one three finite conditions defined in equation~\ref{eq:kemeny_score1}. The sign of the subtraction of corresponding entries of such two square matrices is also always finite, ensuring the summation of at most a distance \(n^{2}-n\) and a minimum of \(0\), which occur when either the data vectors are symmetrically reversed, or the data item ordering of vector \(A\) is equivalent to that of \(B\), respectively. 

\begin{lemma}~\label{lem:lower_bound}
The Kemeny metric space is non-negative under the condition that \(0 < a < \infty^{+}\).
When the domain of a univariate random variable in \(\mathbb{R}^{n \times 1}\) is extended to a bivariate pair \((X_{A},X_{B})\), the function \(\kappa(X)\) is convex, for the arbitrary arbitrary real constant \(\alpha\).
\end{lemma}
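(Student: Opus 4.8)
The plan is to dispatch the two assertions separately, since the first is an order-combinatorial bound and the second a statement about the geometry of the permutation representation.

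For non-negativity I would argue directly from the closed form $\rho_{\kappa}(X,Y)=\tfrac{n^{2}-n}{2}+\sum_{k,l=1}^{n}\kappa(X)_{kl}\odot\kappa(Y)^{\intercal}_{kl}$ of equations~\ref{eq:kem_dist} and \ref{eq:kemeny_dist1}. Skew-symmetry gives $\kappa(Y)^{\intercal}=-\kappa(Y)$, so $\rho_{\kappa}(X,Y)=\tfrac{n^{2}-n}{2}-\sum_{k,l}\kappa(X)_{kl}\kappa(Y)_{kl}$. With the canonical scale $a=\sqrt{0.5}$ each entry of $\kappa$ lies in $\{-a,0,a\}$, so every product lies in $\{-\tfrac12,0,\tfrac12\}$; the $n$ diagonal entries vanish, leaving exactly $n^{2}-n$ summands, whence $\sum_{k,l}\kappa(X)_{kl}\kappa(Y)_{kl}\in[-\tfrac{n^{2}-n}{2},\tfrac{n^{2}-n}{2}]$ and therefore $\rho_{\kappa}(X,Y)\in[0,\,n^{2}-n]$, the floor $0$ being attained precisely at maximal concordance, i.e. when $\kappa(X)=\kappa(Y)$ with no ties. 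For a general scale $a$ the offset is the population expected distance $a^{2}(n^{2}-n)$ and the sum ranges in $[-a^{2}(n^{2}-n),a^{2}(n^{2}-n)]$, so $\rho_{\kappa}\in[0,2a^{2}(n^{2}-n)]$ by the identical computation; this is exactly why the hypothesis is $0<a<\infty^{+}$, since $a=0$ collapses every $\kappa$ matrix to zero and makes the distance the uninformative constant $0$, while $a=\infty^{+}$ destroys finiteness of the sum. The one point needing care is ties: a zero entry only shrinks the magnitude of its summand, so the two-sided bound — hence non-negativity — survives, which is also why the offset cannot be taken larger. (The element-wise argument is the right route here: on the full tied domain $\rho_{\kappa}(X,X)=\tfrac{n^{2}-n}{2}-\sigma_{\kappa}^{2}(X)$ need not be $0$, so non-negativity is not merely inherited from the metric axioms of Definition~\ref{def:metric_space}.)

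For the convexity claim I would first fix a reading that is both true and useful: interpreting "$\kappa(X)$ is convex for the constant $\alpha$" as the betweenness property of the representation along segments, i.e. for every $\alpha\in[0,1]$ and $X_{\alpha}:=\alpha X_{A}+(1-\alpha)X_{B}$ each entry $\kappa(X_{\alpha})_{kl}$ lies in the closed interval spanned by $\kappa(X_{A})_{kl}$ and $\kappa(X_{B})_{kl}$, so that $\kappa$ carries the segment $[X_{A},X_{B}]$ into the box hull of its endpoints' images — equivalently into a face of the convex hull of all $\kappa$-images. The proof is coordinate-wise: with $d_{A}=x_{A,k}-x_{A,l}$ and $d_{B}=x_{B,k}-x_{B,l}$ one has $\kappa(X_{\alpha})_{kl}=a\operatorname{sgn}\!\big(\alpha d_{A}+(1-\alpha)d_{B}\big)$ under the convention $\operatorname{sgn}(0)=0$; if $\operatorname{sgn}(d_{A})=\operatorname{sgn}(d_{B})$ the entry is constant on $[0,1]$, and otherwise $\alpha\mapsto\alpha d_{A}+(1-\alpha)d_{B}$ is affine and monotone, so its sign changes at most once and takes values only in $\{\operatorname{sgn}(d_{A}),0,\operatorname{sgn}(d_{B})\}$, each lying between $\kappa(X_{A})_{kl}$ and $\kappa(X_{B})_{kl}$; collecting over $(k,l)$ finishes it. I expect the real obstacle to be twofold. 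First, the hypothesis genuinely requires $\alpha\in[0,1]$ — an affine combination with a negative weight can flip a comparison on which both endpoints agree — so "arbitrary real $\alpha$" must be stated as "arbitrary convex weight." Second, the stronger reading that $X_{A}\mapsto\rho_{\kappa}(X_{A},X_{B})$ is itself a convex function is false verbatim on $\overline{\mathbb{R}}^{n}$, since that map is piecewise constant in $X_{A}$ and hence convex only if constant; the correct true statement lives on the Birkhoff-type relaxation, where $\rho_{\kappa}$ becomes an $\ell_{1}$-type distance to a fixed point and is convex for the standard reason. I would therefore present the box-hull version as the lemma and note that it is precisely the piece of geometry needed to embed the Kemeny representation isometrically, linking back to the inner-product structure of equation~\ref{eq:kem_cor}.
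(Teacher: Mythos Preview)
Your non-negativity argument is correct but takes a genuinely different route from the paper. The paper's proof is the standard abstract derivation from the metric axioms: triangle inequality plus symmetry plus identity of indiscernibles give $2\rho_{\kappa^{*}}(a,b)\ge \rho_{\kappa^{*}}(a,a)=0$, hence $\rho_{\kappa^{*}}\ge 0$. You instead bound the inner-product term in equation~\ref{eq:kem_dist} entrywise, using that each product $\kappa(X)_{kl}\kappa(Y)_{kl}\in\{-a^{2},0,a^{2}\}$ over $n^{2}-n$ off-diagonal cells, and read off $\rho_{\kappa}\in[0,2a^{2}(n^{2}-n)]$. Your computation is more elementary and has the advantage you explicitly flag: it survives on the full tied domain $\mathcal{M}$, where $\rho_{\kappa}(X,X)=\tfrac{n^{2}-n}{2}-\sigma_{\kappa}^{2}(X)$ can be strictly positive and the identity of indiscernibles used by the paper is not available verbatim. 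The paper's argument buys brevity and portability to any metric, but at the cost of tacitly restricting to the tie-free subspace where equation~\ref{eq:kemeny_dist1} is a genuine metric.

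On the second clause: the paper's proof of this lemma does not touch convexity at all; that is deferred to the separate Lemma~\ref{thm:convexity}, which argues in the norm sense that $\rho_{\kappa^{*}}(\lambda\kappa(X_{A})+(1-\lambda)\kappa(X_{B}))\le \lambda\rho_{\kappa^{*}}(\kappa(X_{A}))+(1-\lambda)\rho_{\kappa^{*}}(\kappa(X_{B}))$ via sub-additivity and positive homogeneity of a norm, i.e.\ convexity of $\rho_{\kappa^{*}}$ as a function on the \emph{matrix} space. Your box-hull/betweenness reading---that $\kappa(X_{\alpha})_{kl}$ lies between $\kappa(X_{A})_{kl}$ and $\kappa(X_{B})_{kl}$ for $\alpha\in[0,1]$---is a different (and true) statement, proved by your sign-of-affine argument. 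You are also right that the stronger reading, convexity of $X_{A}\mapsto\rho_{\kappa}(X_{A},X_{B})$ on $\overline{\mathbb{R}}^{n}$, fails because that map is piecewise constant; the paper sidesteps this by working on the relaxed $\kappa$-matrix domain rather than on score vectors.
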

\begin{proof}
By sequential application of sub-additivity, symmetry, and the identity of indiscernibles follows non-negativity, but only under a specific definition of $\kappa(\cdot)$, upon which results the requirement $0 < a < \infty^{+}$:
\begin{equation}
\begin{aligned}
\rho_{\kappa^{*}}(a,b) + \rho_{\kappa^{*}}(b,a) & \ge \rho_{\kappa^{*}}(a,a)\\
\rho_{\kappa^{*}}(a,b) + \rho_{\kappa^{*}}(a,b) & \geq \rho_{\kappa^{*}}(a,b)  \\
2\cdot \rho_{\kappa^{*}}(a,b) & \equiv 2\cdot \rho_{\kappa^{*}}(\kappa^{*}(X_{A},\kappa^{*}(X_{B})) \ge 0\\
\rho_{\kappa^{*}}(\kappa^{*}(X_{A},\kappa^{*}(X_{B})) \ge 0\\
\end{aligned}
\end{equation}
\end{proof}

\begin{lemma}~\label{lem:kem_expansion}
The Kemeny distance function has an expansion constant \(\mu\) less than 2.
\end{lemma}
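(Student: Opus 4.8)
The plan is to reduce the bound to the expansion constant (Definition~\ref{def:expansion_constant}) of a Hilbert space, which for the inner-product structure of equation~\ref{eq:kem_cor} is the Jung constant $\sqrt{2D/(D+1)}$ in dimension $D=\binom{n}{2}$, hence at most $\sqrt{2}$ and so strictly below $2$. The input from earlier is Theorem~\ref{lem:hilbert}: after the affine recentring by $\tfrac{n^{2}-n}{2}$ and the scaling by $\sqrt{0.5}$, the Kemeny norm of equation~\ref{eq:kem_cor} is (a scalar multiple of) the Frobenius norm on the finite-dimensional space of $n\times n$ skew-symmetric matrices, so it obeys the parallelogram law and the induced metric space is complete and uniformly convex; in particular every bounded subset admits a unique Chebyshev centre lying in the space, which is what makes the Jung machinery available.

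The substance of the argument is a Jung-type covering estimate. Let $\{\overline{B}(x_{\alpha},r_{\alpha})\}_{\alpha}$ be any family of closed Kemeny balls meeting pairwise, i.e.\ $\|x_{\alpha}-x_{\beta}\|_{\kappa}\le r_{\alpha}+r_{\beta}$ for all $\alpha,\beta$. In the equal-radius case $r_{\alpha}\equiv r$ this says exactly that $A=\{x_{\alpha}\}$ has diameter at most $2r$, so Jung's inequality in a Hilbert space bounds the Chebyshev radius of $A$ by $\tfrac{1}{\sqrt{2}}\operatorname{diam}(A)\le\sqrt{2}\,r$; the Chebyshev centre $c$ then lies in $\bigcap_{\alpha}\overline{B}(x_{\alpha},\sqrt{2}\,r)$, so $\mu=\sqrt{2}$ already works. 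For unequal radii I would minimise the convex, coercive, lower-semicontinuous functional $p\mapsto\sup_{\alpha}\|p-x_{\alpha}\|_{\kappa}/r_{\alpha}$ — the minimiser exists by uniform convexity — and, after normalising the radii, control its value by the equal-radius estimate; this is precisely Gr\"unbaum's computation of the expansion constant of an inner-product space, which never exceeds $\sqrt{2}$. Checking that non-uniform radii cannot push the constant past $\sqrt{2}$, the extremal configurations being regular simplices (which already arise with equal radii), is the step I expect to demand the most care.

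Putting the pieces together, for every finite $n$ the Kemeny metric space is a Hilbert space of dimension $\binom{n}{2}$, so its expansion constant $\mu_{\kappa}$ satisfies $\mu_{\kappa}\le\sqrt{2}<2$, which is the assertion; via Definition~\ref{def:complete_space} this re-establishes completeness. One point to handle carefully in the write-up is that the estimate must be carried out in the ambient inner-product space of equation~\ref{eq:kem_cor}, where the Chebyshev centres genuinely lie, rather than on the discrete permutation image $\kappa(\mathcal{M})$ alone, since the expansion constant is not in general inherited by isometric subspaces; with that understood the bound $\mu_{\kappa}\le\sqrt{2}$ is uniform in $n$ and the lemma follows.
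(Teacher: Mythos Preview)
Your route is genuinely different from the paper's. The paper argues directly on the image of $\kappa^{*}$: it takes two pairwise-ordered families of real numbers, observes that under the scoring of equation~\ref{eq:kemeny_score1} the resulting interval-intersection problem has supremum bounded by $1\cdot a$, and concludes $\mu\le 1<2$ without ever invoking an ambient inner-product structure. You instead lift to the Hilbert space of skew-symmetric matrices furnished by Theorem~\ref{lem:hilbert} and equation~\ref{eq:kem_cor}, and then apply Jung's theorem (together with Gr\"unbaum's extension to unequal radii) to obtain $\mu\le\sqrt{2}$. Your argument is more systematic and ties the expansion constant to a classical geometric invariant; the paper's computation is elementary and self-contained, if terse.

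The caveat you flag at the end is not a side remark but the main gap. Definition~\ref{def:expansion_constant} makes the expansion constant a property of the metric space itself, and the lemma as stated concerns the Kemeny distance on the permutation image $\kappa(\mathcal{M})$ (equivalently on $\overline{\mathbb{R}}^{n}$ via $\kappa$), not on the ambient linear span. Jung's theorem hands you a Chebyshev centre in the ambient Hilbert space, but that centre need not be of the form $\kappa(X)$ for any $X$; hence $\bigcap_{\alpha}\overline{B}(x_{\alpha},\sqrt{2}\,r_{\alpha})$ can be non-empty in the ambient space yet empty in the subspace, and the expansion constant is not in general inherited downward. Saying that the estimate ``must be carried out in the ambient inner-product space'' quietly changes the object whose expansion constant is being bounded. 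To close the argument you need either to argue that the paper's Kemeny metric space \emph{is} the full inner-product space (which would have to be justified from the text), or to supply a separate approximation step producing a point of the discrete image inside the inflated intersection; neither is provided.
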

\begin{proof}
Let there exist two families $\{A_{i}\}_{i \in I}$ and $\{B_{j}\}_{j \in J\,, i,j = 1,\ldots,n}$ of $n$ real numbers such that $a_{i} \le b_{j}$ and therefore \begin{equation}
\begin{aligned}
\rho_{\kappa^{*}}(\kappa^{*}(A_{i}),\kappa^{*}(B_{j}))\, \forall\, i,j \implies & \sup_{i\in I}(\kappa^{*}(A_{i}))  \equiv \mu \\
& = \sup\{-1\cdot{a},0\cdot{a}\} \le \inf_{j\in J}(\kappa^{*}(B_{j})) \equiv \inf\{0\cdot{a},1\cdot{a}\}\\
& = \sup\{\mu\} \le 1.
\end{aligned}
\end{equation}
It therefore follows that for any family of intervals $A_{i},B_{j}$ whose intersection of the two finite sets is never empty, $\cap[\sup{A_{i}},\inf{B_{j}}] \ne \varnothing$, possesses an expansion constant $\mu = \sup\{0,1a\} \le 2a\; \forall a>0$. Consequently, Definition~\ref{def:complete_space} is satisfied for the Kemeny distance function, in that the isometric expansion constant is always less than 2; as \(X_{A} = A_{i},X_{B}=B_{j}\), this proof applies to all score vectors within \(\mathcal{X}\), the space of all vectors of extended reals of finite length \(n\).
\end{proof}
\begin{lemma}~\label{thm:convexity}
When the domain of a univariate random variable in \(\overline{\mathbb{R}}^{n \times 1}\) is extended to a bivariate pair \((X_{A},X_{B})\), the function \(\kappa^{*}(X)\) is convex.
\end{lemma}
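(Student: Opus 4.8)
The plan is to read ``convexity of $\kappa^{*}$'' in the sense suggested by Lemma~\ref{lem:lower_bound}: once a single extended-real vector is replaced by a bivariate pair $(X_{A},X_{B})$, the object actually manipulated is the Frobenius pairing $\langle\kappa^{*}(X_{A}),\kappa^{*}(X_{B})\rangle$ of equation~\ref{eq:kem_dist}, and the assertion to be proved is that this pairing --- equivalently the Kemeny distance $\rho_{\kappa^{*}}$ --- is a convex function on each segment $\{\alpha\,\kappa^{*}(X_{A})+(1-\alpha)\,\kappa^{*}(X_{A}'):\alpha\in[0,1]\}$ for every competing vector $X_{A}'$ and fixed $X_{B}$, and that the convex hull of the range of $\kappa^{*}$ is itself a compact convex body (the linear-ordering polytope). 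I would therefore split the argument into a \textbf{linear-algebraic} part (the pairing is affine, hence convex, in each matrix argument) and a \textbf{geometric} part (the admissible matrices form a convex compact set), and then combine them.

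First I would observe that by equation~\ref{eq:kem_dist} the map $M\mapsto\tfrac{n^{2}-n}{2}+\langle M,\kappa^{*}(X_{B})\rangle$ is affine in $M$, hence convex; since the constant $\tfrac{n^{2}-n}{2}$ is fixed and the weights of $\alpha$ and $(1-\alpha)$ sum to one, this already gives $\rho_{\kappa^{*}}\big(\alpha M_{1}+(1-\alpha)M_{2},\ \kappa^{*}(X_{B})\big)=\alpha\,\rho_{\kappa^{*}}(M_{1},\kappa^{*}(X_{B}))+(1-\alpha)\,\rho_{\kappa^{*}}(M_{2},\kappa^{*}(X_{B}))$ for matrix arguments, while the Cauchy--Schwarz inequality of Definition~\ref{def:cauchy-schwarz} bounds each pairing by $\|M\|_{\kappa}\,\|\kappa^{*}(X_{B})\|_{\kappa}$, so the function is finite and continuous on the relevant domain. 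By symmetry of the pairing the same holds in the second argument, and since $\rho_{\kappa^{*}}$ is a sum over the $n^{2}-n$ off-diagonal cells of terms that are affine in one argument when the other is fixed, the joint map on the product is convex in each block.

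Next I would identify the admissible set. The entries of $\kappa^{*}(X)$ produced by equation~\ref{eq:kemeny_score1} are not free: transitivity of the underlying order forces the realizable sign-patterns to be exactly the vertices of the linear-ordering polytope together with its tied faces, and that polytope is by construction a compact convex subset of the skew-symmetric matrices with entries in $[-1,1]$. Total boundedness from Lemma~\ref{lem:kem_bounded} and the Hilbert structure of Theorem~\ref{lem:hilbert} then give that $\operatorname{conv}\{\kappa^{*}(X):X\in\overline{\mathbb{R}}^{n\times1}\}$ is a closed, bounded, convex body on which $\rho_{\kappa^{*}}$ restricts to the convex function just described; restricting a convex function to a convex set preserves convexity, and the inclusion of the tied configurations (entry $0$, arising when $x_{i}=x_{j}$) supplies precisely the midpoints of the $\pm\sqrt{.5}$ crossings, so no point of a segment escapes the body. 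This makes Lemma~\ref{lem:lower_bound}'s second clause explicit and is exactly what the later duality and reproducing-kernel arguments require.

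The main obstacle, and the step needing the most care, is the discontinuity of the sign map in equation~\ref{eq:kemeny_score1}: for a cell $(i,j)$ on which $X_{A}$ and $X_{A}'$ \emph{disagree}, the naive interpolant $\alpha\kappa^{*}_{ij}(X_{A})+(1-\alpha)\kappa^{*}_{ij}(X_{A}')$ need not equal $\kappa^{*}_{ij}$ evaluated at the combined vector, so one cannot simply assert $\kappa^{*}(\alpha X_{A}+(1-\alpha)X_{A}')=\alpha\kappa^{*}(X_{A})+(1-\alpha)\kappa^{*}(X_{A}')$. The resolution I would push through is that the jump across a disagreeing cell is symmetric about $0$ and of fixed magnitude $\sqrt{2}$ (equivalently $1$ in the $\{-1,0,1\}$ scaling of equation~\ref{eq:kemeny_score1}), so the crossing value $0$ is the exact average of $\pm\sqrt{.5}$; this is why the scoring constants in equation~\ref{eq:kem_score} were chosen symmetric, and it guarantees the interpolated matrix still lies inside the linear-ordering polytope rather than outside it, which is all the convexity argument needs. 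I would treat this symmetric-crossing fact as the formal heart of the proof and regard the affineness of the Frobenius pairing, the Cauchy--Schwarz continuity bound, and the product-space bookkeeping as immediate consequences of Definition~\ref{def:cauchy-schwarz}, Theorem~\ref{lem:hilbert}, and Lemma~\ref{lem:kem_bounded}.
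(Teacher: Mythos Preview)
Your core observation---that the map $M\mapsto\tfrac{n^{2}-n}{2}+\langle M,\kappa^{*}(X_{B})\rangle$ is affine in $M$ and hence convex---is correct and in fact yields a stronger conclusion (equality on segments) than the paper asks for. However, the paper's own proof takes a different and much shorter route: it treats $\rho_{\kappa^{*}}(\cdot,\kappa^{*}(X_{B}))$ as a norm-like functional and invokes only sub-additivity together with positive homogeneity to obtain the standard ``every norm is convex'' inequality
\[
\rho_{\kappa^{*}}\bigl(\lambda\,\kappa(X_{A})+(1-\lambda)\,\kappa(X_{B})\bigr)\le\lambda\,\rho_{\kappa^{*}}(\kappa(X_{A}),\kappa(X_{B}))+(1-\lambda)\,\rho_{\kappa^{*}}(\kappa(X_{A}),\kappa(X_{B})),
\]
with $X_{B}$ taken w.l.g.\ to be the identity permutation. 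No inner-product structure, no Cauchy--Schwarz bound, and no polytope geometry appear.

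Your linear-ordering polytope discussion and the careful treatment of the sign-map discontinuity address a genuinely harder question---whether convexity survives when one interpolates the \emph{underlying vectors} $X_{A},X_{A}'$ rather than the matrices $\kappa^{*}(X_{A}),\kappa^{*}(X_{A}')$---but the paper never poses that question: its convex combination is taken directly in the skew-symmetric matrix argument, so the jump issue simply does not arise. What you gain from your route is a sharper statement (affineness, not mere convexity) and an honest confrontation with the domain on which the function is defined; what the paper's route buys is brevity, since the triangle inequality and homogeneity were already established in Lemmas~\ref{lem:lower_bound} and~\ref{lem:complete_space} and Theorem~\ref{lem:hilbert}.
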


\begin{proof}
Recall that for any norm there exists a function \(f=\rho_{\kappa^{*}}(\kappa^{*}(X_{A}),\kappa^{*}(X_{B})) \equiv \rho_{\kappa^{*}}\). If we allow without loss of generality \(X_{B} = I_{n}\), the Identity permutation, such that is satisfied all following properties: 
\begin{gather}
\begin{aligned}
\rho_{\kappa^{*}}(\kappa^{*}(X_{A}),  \kappa^{*}(X_{B})) \le \rho_{\kappa^{*}}(\kappa^{*}(X_{A}),\kappa^{*}(X_{B})) + \rho_{\kappa^{*}}(\kappa^{*}(X_{B}),\kappa^{*}(X_{B})),\\
\rho_{\kappa^{*}}(\kappa^{*}(X_{A}),\kappa^{*}(X_{B})) \ge 0, \forall X, \rho_{\kappa^{*}}(\kappa^{*}(X_{A}),\kappa^{*}(X_{B})) = 0 \implies X_{A} = 0.
\end{aligned}
\end{gather}
Then for any $\lambda \in [0,1]$ holds the following conjunction of sub-additivity and the axiom of indiscernibles:
\begin{equation}
\begin{aligned}
\label{eq:convexity}
\rho_{\kappa^{*}}(\lambda \kappa(X_{A}) + (1 - \lambda)\kappa(X_{B})) & \le \rho_{\kappa} (\lambda \kappa(X_{A}),\kappa(I_{n})) + \rho_{\kappa} ((1 - \lambda)\kappa(X_{B}),X_{B})\\
                                                                  & = \lambda \rho_{\kappa} (\kappa(X_{A}),\kappa(X_{B})) + (1 -\lambda)\rho_{\kappa} (\kappa(X_{A}),\kappa(X_{B})), \forall \lambda \in [0,1].
\end{aligned}
\end{equation}
\end{proof}

\begin{lemma}~\label{lem:cauchy}
There exists a unique finite upper-bound upon the Kemeny metric space for any collection of \(n\) element length vectors in the extended reals, whose lower bound is 0.
\end{lemma}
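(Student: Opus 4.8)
The plan is to identify $n^{2}-n$ as the unique finite upper bound of $\rho_{\kappa^{*}}$ over all pairs of length-$n$ extended-real vectors, to show it is attained, and to recall from Lemma~\ref{lem:lower_bound} that $0$ is the (attained) lower bound. In other words, the lemma is really the assertion that the diameter of the Kemeny metric space is the determinate finite number $n^{2}-n$, and that both extremes of the distance are realised.

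First I would observe that for every $X\in\overline{\mathbb{R}}^{n\times 1}$ the matrix $\kappa^{*}(X)$ has entries in $\{-1,0,1\}$ with a zero diagonal, and that this is well defined even when $X$ has infinite coordinates, since $\kappa^{*}$ records only the sign of a pairwise comparison. Consequently, for any two vectors $X_{A},X_{B}$ the difference $\kappa^{*}(X_{A})-\kappa^{*}(X_{B})$ is skew-symmetric with zero diagonal and $n^{2}-n$ off-diagonal entries lying in $\{-2,-1,0,1,2\}$, so each off-diagonal cell contributes a term of magnitude at most $2$ to the defining sum in equation~\ref{eq:kemeny_dist1}. Summing over the $n^{2}-n$ off-diagonal cells and applying the $\tfrac12$ normalisation yields
\[
0 \;\le\; \rho_{\kappa^{*}}\!\bigl(\kappa^{*}(X_{A}),\kappa^{*}(X_{B})\bigr) \;\le\; n^{2}-n ,
\]
which is finite for every finite $n$; the left-hand inequality is exactly Lemma~\ref{lem:lower_bound}.

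Next I would verify that both endpoints are realised, which shows the bound is attained and hence is the least upper bound. The value $0$ is achieved whenever $X_{A}$ and $X_{B}$ induce the same linear preordering, so that $\kappa^{*}(X_{A})=\kappa^{*}(X_{B})$ and every summand vanishes, in accordance with the identity of indiscernibles. The value $n^{2}-n$ is achieved by letting $X_{A}$ induce any tie-free total order and $X_{B}$ its reversal: then each off-diagonal entry of $\kappa^{*}(X_{A})-\kappa^{*}(X_{B})$ is $\pm2$, the sum is extremal, and the normalised distance equals $n^{2}-n$ — the configuration already flagged in the discussion following equation~\ref{eq:kemeny_score1}. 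By the Cauchy--Schwarz inequality (Definition~\ref{def:cauchy-schwarz}) applied to the unit-scaled vectors no pair can exceed this magnitude, and since $n^{2}-n$ is determined by $n$ alone and not by the particular vectors, it is the unique finite upper bound; in particular $(\mathcal{X},\rho_{\kappa^{*}})$ is totally bounded in the sense required by Definition~\ref{def:compact}.

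The one point demanding care — and the step I would regard as the main obstacle — is the bookkeeping around ties and infinite coordinates: one must confirm that an entry equal to $0$ in either $\kappa^{*}(X_{A})$ or $\kappa^{*}(X_{B})$ cannot drive any summand outside $\{-2,\dots,2\}$ (it cannot, $0$ being the centre of $\{-1,0,1\}$), and that the extremal pair realising $n^{2}-n$ may be taken tie-free, so that the supremum is genuinely attained rather than only approached. Everything else is a finite count over the $n^{2}-n$ off-diagonal cells.
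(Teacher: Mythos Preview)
Your proposal is correct and follows essentially the same route as the paper's proof: bound each off-diagonal entry of $\kappa^{*}(X_{A})-\kappa^{*}(X_{B})$ by $2$ in magnitude, count the $n^{2}-n$ off-diagonal cells, apply the $\tfrac12$ normalisation to obtain the upper bound $n^{2}-n$ (the paper carries the scaling constant $a$ throughout, getting $a(n^{2}-n)$), exhibit the identity permutation versus its reversal as the extremal pair, and cite Lemma~\ref{lem:lower_bound} for the lower bound. Your treatment of ties and infinite coordinates is more explicit than the paper's, which is fine; the one superfluous step is the appeal to Cauchy--Schwarz, which is both unnecessary (the cellwise bound already gives the supremum) and mildly circular in the paper's ordering, since Theorem~\ref{lem:cauchy-schwarz} is proven after this lemma.
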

\begin{proof}
For any $\kappa^{*}(X_{j})\; \forall X_{j}, j = 1,\ldots,p$ results a mapping vector of maximum $\frac{(n^{2}-n)}{2}$, to which are assigned one of three distinct values: $a_{ij} \in \{1,-1,0\}$. If all $X_{i},\, i = 1,\ldots,n$ are in a monotonically ascending sequence, then there is \(M = n^{n}, \forall \infty^{-} < n < \infty^{+}\), representing all vectors of extended real scores, $n^{n} = \{n_{1}, n_{2}, \cdots, n_{n}\}$. Each vector, under operation $\kappa^{*}$ results a mapping of dimension $\frac{(n^{2}-n)}{2}$; the reduced set of mappings is granted by the symmetry of the metric space. It therefore follows that for fixed $n$ upon monotonic sequence \(X_{j}\) exists $ I_{n} = \{1,2,3,\ldots\} \in X_{n}$, form which results $\sum_{i=1}^{n}\kappa^{*}(X_{j}) = \sum \{1a,1a,\ldots,1a\}$. By equation~\ref{eq:kemeny_dist1} it then follows that upon such a field, the maximum distance attainable is found by the sequence of the absolute value of the difference for a similarly monotonically ordered sequence, for which is substituted the value  $1a \to -1a$, with exception only upon \(a_{i=j} = 0\): \[
\sum_{i=1}^{n}\sum_{j=1}^{n} |a^{*}_{ij} - (-1\cdot a^{*}_{ij})| = |\sum_{i=1}^{n} \{2_{i}a,\ldots,2_{n}a\}| = \frac{n^{2}-n}{2} \cdot 2a = a(n^{2}-n).\] Uniqueness proceeds as with the preceding Lemma, for which the summation over all mapping \(\kappa^{*}\) with one or more $a$ are by definition of equation~\ref{eq:kemeny_dist1} of distance greater than the observed, and therefore is uniquely extremised for any fixed finite positive $a$ and finite positive integer \(n\).
\end{proof}

\begin{lemma}
\label{lem:complete_space}
The Kemeny metric space is a complete metric space.
\end{lemma}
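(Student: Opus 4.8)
The plan is to read the statement through the paper's own Definition~\ref{def:complete_space}, which characterises completeness of a metric space entirely through its expansion constant: a metric space is complete precisely when its expansion constant does not exceed $2$. Since Lemma~\ref{lem:kem_expansion} has already established that the Kemeny distance function has expansion constant $\mu < 2$, the conclusion follows essentially immediately. First I would restate the intersecting closed-ball criterion of Definition~\ref{def:expansion_constant} applied to $\rho_{\kappa^*}$, observing that for any pairwise-intersecting family $\{\overline{B}(x_\alpha, r_\alpha)\}$ the dilated family $\{\overline{B}(x_\alpha, \mu r_\alpha)\}$ has non-empty common intersection already for $\mu$ arbitrarily close to $1$ — indeed the supremum/infimum comparison carried out in Lemma~\ref{lem:kem_expansion} bounds the relevant constant by $1$. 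Hence $\mu \le 2$, and Definition~\ref{def:complete_space} is satisfied. Non-negativity, symmetry and the identity of indiscernibles are already in hand (Lemma~\ref{lem:lower_bound} together with the axiomatic acceptance of $\rho_{\kappa^*}$ as a metric), so "metric space" in the statement is justified and "complete" is the only new content to verify.

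Second, I would add the complementary, fully classical justification so that no weight rests solely on the expansion-constant reformulation: the underlying point set of the Kemeny metric space is finite, of cardinality $n^n$ (equivalently $n^n - n$ non-constant permutations with ties) for each fixed $n$, and by Lemma~\ref{lem:cauchy} the metric takes values in the finite interval $[0,\tfrac{n^2-n}{2}]$ (respectively $[0,n^2-n]$ for the uncentred form of equation~\ref{eq:kemeny_dist1}). A finite metric space is trivially complete: the distances between distinct points are bounded below by a strictly positive minimum — the minimal non-zero distance, realised by a single tie or a single adjacent transposition — so every Cauchy sequence is eventually constant and therefore convergent. This establishes completeness directly, and since a finite set is a fortiori totally bounded, it also yields, via Definition~\ref{def:compact}, that the Kemeny metric space is in fact compact; I would state this as a short corollary.

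The formal write-up should cite Lemma~\ref{lem:kem_expansion} and Definition~\ref{def:complete_space} as the proof proper, with the finiteness argument appended as corroborating commentary. The only point requiring genuine care is to confirm that the bound $\mu \le 1 < 2$ derived in Lemma~\ref{lem:kem_expansion} covers an arbitrary pairwise-intersecting ball family on all of $\mathcal{X}$ (the space of all finite-length extended-real score vectors), not merely the two-family instance written out there; I would add one sentence extending that computation to an arbitrary index set by the identical $\sup_{\alpha}\{-1a, 0\} \le \inf_{\alpha}\{0, 1a\}$ comparison. Under the paper's own definition there is effectively no obstacle — the lemma is a one-line corollary of Lemma~\ref{lem:kem_expansion} — and under the classical Cauchy-sequence definition the only (trivial) subtlety is spelling out what convergence means on a discrete finite carrier set, which the positive lower bound on inter-point distances resolves at once.
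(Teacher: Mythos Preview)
Your proposal is correct and in fact logically tighter than the paper's own argument, but the route differs. The paper's proof of Lemma~\ref{lem:complete_space} does \emph{not} invoke Lemma~\ref{lem:kem_expansion} at all; instead it cites Lemma~\ref{lem:lower_bound} and Lemma~\ref{lem:cauchy} to establish that every Kemeny distance lies in the bounded interval $[0,\,a(n^{2}-n)]$, and from this boundedness concludes ``Cauchy convergence \ldots\ per Definition~\ref{def:complete_space}.'' Your approach instead reads Definition~\ref{def:complete_space} literally (completeness $\iff$ expansion constant $\le 2$) and appeals directly to Lemma~\ref{lem:kem_expansion}, which is the lemma that actually bounds the expansion constant; this is the more natural pairing, and the paper's detour through boundedness is, strictly speaking, neither necessary nor by itself sufficient for the expansion-constant criterion.

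Your second, classical argument---that the carrier set has finite cardinality $n^{n}$, whence every Cauchy sequence is eventually constant---is entirely absent from the paper but is the only fully rigorous proof of completeness in the standard sense on offer here. It also pre-empts Lemma~\ref{lem:kem_bounded} (compactness), which the paper proves separately. What the paper's version buys is a self-contained chain $\text{Lemma~\ref{lem:lower_bound}} \to \text{Lemma~\ref{lem:cauchy}} \to \text{Lemma~\ref{lem:complete_space}}$ that does not depend on Lemma~\ref{lem:kem_expansion}; what your version buys is a one-line deduction that matches Definition~\ref{def:complete_space} exactly, together with an independent sanity check that would survive even if the reader rejected the expansion-constant reformulation of completeness.
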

\begin{proof}
By the validity of both Lemma~\ref{lem:lower_bound} and Lemma~\ref{lem:cauchy} for the Kemeny metric, the maximum possible distance for said distance function, and any finite sequence $X$ of length $n$ is $a(n^{2}-n)$, as resulting from the additive sequence of $\frac{n^{2}-n}{2}$ elements, each valued $2a$, thereby establishing the definition off a positive and finite $r$ for all finite \(n\). Cauchy convergence therefore follows from the constructive mapping that for every $X_{A},X_{B} \in M \setminus \{I_{n},I_{N}^{\prime}\}$, there exists a finite distance $0 \le |\rho_{\kappa}(\kappa^{*}(X_{A}) - \kappa^{*}(X_{B}))| < a\cdot{r} < a(n^{2}-n)$ and it is shown that the Kemeny metric space is Cauchy convergent, per Definition~\ref{def:complete_space}.  An immediate corollary follows such that the upper and lower bounds of the Kemeny distance are determined by the conjunctive choice of finite $n$ and $a$, s.t., \[0\cdot{a} \le \rho_{\kappa}(a\kappa^{*}(X_{A}),\kappa^{*}(X_{B})) \le a(n^{2}-n),\] and the lemma is concluded, as the Kemeny metric space is shown to be complete for \(X_{n}\in \mathcal{X}\).
\end{proof}


\begin{theorem}~\label{lem:cauchy-schwarz}
The Kemeny complete metric space satisfies the Cauchy-Schwarz inequality and hence is a continuous function.
\end{theorem}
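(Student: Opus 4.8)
The plan is to exploit the inner-product structure already placed on the Kemeny space in equation~\ref{eq:kem_cor} and then run the standard Cauchy--Schwarz argument for pre-Hilbert spaces. First I would record that the pairing
\[
\langle X,Y\rangle_{\kappa} \;=\; -\frac{2}{n^{2}-n}\sum_{k=1}^{n}\sum_{l=1}^{n}\kappa(X)_{kl}\odot\kappa^{\intercal}(Y)_{kl}
\]
is symmetric and bilinear in the $\kappa$-images, being a fixed scalar multiple of the Frobenius product of two $n\times n$ skew-symmetric matrices, and that its associated quadratic form is positive semidefinite: $\langle X,X\rangle_{\kappa}=\tfrac{2}{n^{2}-n}\sigma^{2}_{\kappa}(X)\ge 0$ by equation~\ref{eq:kem_variance}, with equality exactly on the degenerate constant vectors (Lemma~\ref{lem:kem_bounded}). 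Quotienting out that degenerate subspace — equivalently, restricting to non-constant vectors as the paper does throughout — we therefore have a genuine inner product underlying the complete metric of Lemma~\ref{lem:complete_space}.

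Second, the Cauchy--Schwarz inequality itself follows by the textbook discriminant trick. For any $t\in\mathbb{R}$, non-negativity of the form gives
\[
0\;\le\;\langle X-tY,\;X-tY\rangle_{\kappa}\;=\;\langle X,X\rangle_{\kappa}-2t\langle X,Y\rangle_{\kappa}+t^{2}\langle Y,Y\rangle_{\kappa}.
\]
If $Y$ is non-degenerate this is a non-negative quadratic in $t$, so its discriminant is non-positive, i.e. $\langle X,Y\rangle_{\kappa}^{2}\le\langle X,X\rangle_{\kappa}\langle Y,Y\rangle_{\kappa}$, which is precisely the statement of Definition~\ref{def:cauchy-schwarz}; the degenerate case $\langle Y,Y\rangle_{\kappa}=0$ forces $\langle X,Y\rangle_{\kappa}=0$ (letting $|t|\to\infty$), so the inequality is trivial there. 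Taking square roots yields $|\langle X,Y\rangle_{\kappa}|\le\|X\|_{\kappa}\,\|Y\|_{\kappa}$ with the canonical norm $\|X\|_{\kappa}=\sqrt{\langle X,X\rangle_{\kappa}}$, the norm of the complete Kemeny space established in Lemmas~\ref{lem:complete_space} and~\ref{lem:kem_expansion}.

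Third, continuity: Cauchy--Schwarz makes the pairing Lipschitz in each argument, $|\langle X,Y\rangle_{\kappa}-\langle X',Y\rangle_{\kappa}|=|\langle X-X',Y\rangle_{\kappa}|\le\|X-X'\|_{\kappa}\,\|Y\|_{\kappa}$, and by sub-additivity (Definition~\ref{def:metric_space}) $\bigl|\,\|X\|_{\kappa}-\|X'\|_{\kappa}\,\bigr|\le\|X-X'\|_{\kappa}$, so both the norm and the metric $\rho_{\kappa}$ are Lipschitz-continuous on the Kemeny space — the asserted continuity. The hard part is not the inequality, which is routine once bilinearity and positivity are in hand, but being careful about which domain ``continuous function'' refers to: the raw map $X\mapsto\kappa(X)$ jumps at tie boundaries in $\overline{\mathbb{R}}^{n}$, so the theorem must be read as a statement about the finite-dimensional metric/inner-product space $(\mathcal{M},\rho_{\kappa})$ of $\kappa$-images, on which the quadratic form is bounded and no completeness or compactness issue obstructs the argument. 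I would therefore open the proof by fixing that domain explicitly and verifying positive semidefiniteness of $\langle\cdot,\cdot\rangle_{\kappa}$ on it before the discriminant computation.
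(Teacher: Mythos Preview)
Your proposal is correct and follows essentially the same route as the paper: both arguments expand the non-negative quadratic $\langle cU+V,\,cU+V\rangle\ge 0$ in the inner product induced by the Frobenius pairing of the $\kappa$-matrices and read off Cauchy--Schwarz from it (the paper picks the optimal $c=-\langle U,V\rangle/\|U\|^{2}$ directly, you use the equivalent discriminant phrasing). Your treatment is in fact more careful than the paper's in two places --- you handle the degenerate $\langle Y,Y\rangle_{\kappa}=0$ case explicitly, and you actually supply the Lipschitz argument for the ``hence continuous'' clause, which the paper's proof leaves implicit --- so nothing needs to be changed.
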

\begin{proof}
We first discuss the Cauchy-Schwarz inequality, and the existence of its lower-boundedness at 0. Let $\langle\mathbf{u},\mathbf{v}\rangle \in M;n$ be vectors resulting from the mapping function $\kappa$ whose length is $n$. By property (1) of Definition~\ref{def:metric_space} and Lemma~\ref{lem:lower_bound}, it follows that \[0 \le \langle c{\mathbf{u}}+\mathbf{v},c\mathbf{u}+\mathbf{v}\rangle = |c|^{2}\|\mathbf{u}\|^{2} + \|\mathbf{v}\|^{2} + c\langle \mathbf{v},\mathbf{u}\rangle + \bar{c}\langle \mathbf{u},\mathbf{v}\rangle,\]
for arbitrary $c \in \mathbb{C}$. If $\mathbf{u} \ne \vec{b}, \forall\ b \in \overline{\mathbb{R}}$, the affine linear transformation $c = -\langle{\mathbf{u}},\mathbf{v}\rangle/\|\mathbf{u}\|^{2}$ holds, from which immediately follows \[0 \le \|\mathbf{v}\|^{2} - \frac{\|\langle \mathbf{u},\mathbf{v}\rangle\|^{2}}{\|\mathbf{u}\|^{2}},\] which implies the Cauchy-Schwarz inequality; as equality is only found when $c\mathbf{u} + \mathbf{v} =0$, thereby defining a linear dependency. Under conditions $\rho_{\kappa}(x,y) = y = cx$ or $cy$, the Cauchy-Schwarz inequality is easily seen to hold. We next proceed to prove the norm $\|\mathbf{u}\|$ as defined in Definition~\ref{def:cauchy-schwarz}. Notice that by the identity of indiscernibles of a metric function space follows $\rho_{\kappa}(\kappa(X_{A}),\kappa(X_{B})) = 0 \therefore X_{A} = X_{B}$, and therefore by $\kappa^{\mathbf{u}}$ it follows that the minimum of 0 is obtained solely and uniquely upon the $n$ points for which all elements in the vector are equal and tied. Moreover, let it be seen that $\|c\mathbf{u}\| = \sqrt{\langle c\mathbf{u},c\mathbf{u}\rangle} = \sqrt{c^{2}\langle \mathbf{u},\mathbf{u}\rangle} = c\|\mathbf{u}\| = c(\kappa(\mathbf{u})\kappa(\mathbf{u})) = c\kappa^{2}(\mathbf{u})$, and then from which follows sub-additivity:
\begin{align*}
\rho_{\kappa}^{2}(\mathbf{u},\mathbf{v}) & = \langle \mathbf{u} + \mathbf{v}, \mathbf{u} + \mathbf{v}\rangle  = \kappa^{2}(\mathbf{u}) + \kappa^{2}(\mathbf{v}) + 2 \rho_{\kappa}(\mathbf{u},\mathbf{v})\\
                                         &  \qquad \le \kappa^{2}(\mathbf{u}) + \kappa^{2}(\mathbf{v}) + 2\kappa(\mathbf{u})\kappa(\mathbf{v}) = (\kappa(\mathbf{u}) \cdot \kappa(\mathbf{v}))^{2}.
\end{align*}
\end{proof}

\begin{lemma}
\label{lem:kem_bounded}
The Kemeny metric space is compact and totally bounded.
\end{lemma}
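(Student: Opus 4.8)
The plan is to obtain compactness directly from Definition~\ref{def:compact}, which states that a metric space is compact if and only if it is complete and totally bounded. Completeness has already been established in Lemma~\ref{lem:complete_space}, so the only outstanding obligation is total boundedness, and I would dispatch it by exploiting finiteness. For each fixed finite $n$ the carrier on which $\rho_{\kappa}$ genuinely lives is not $\overline{\mathbb{R}}^{n\times 1}$ itself but its image $\mathcal{M}$ under $\kappa$: the skew-symmetric matrices of equation~\ref{eq:kem_score}. Since each of the $\binom{n}{2}$ free off-diagonal entries is drawn from the three-element set $\{\sqrt{.5},-\sqrt{.5},0\}$, the set $\mathcal{M}$ is finite (of cardinality at most $3^{\binom{n}{2}}$, and of the order $n^{n}-n$ once degenerate rows are accounted for as discussed above), and by Lemma~\ref{lem:cauchy} together with Lemma~\ref{lem:lower_bound} every pairwise distance lies in the bounded interval $[0,\tfrac{n^{2}-n}{2}]$.

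First I would argue total boundedness: given any $\varepsilon>0$, the family $\{\overline{B}(m,\varepsilon):m\in\mathcal{M}\}$ is a finite cover of $\mathcal{M}$, since each point trivially lies in the $\varepsilon$-ball centred at itself; hence $(\mathcal{M},\rho_{\kappa})$ is totally bounded. Combining this with completeness (Lemma~\ref{lem:complete_space}) and invoking Definition~\ref{def:compact} yields that $(\mathcal{M},\rho_{\kappa})$ is compact, and since the diameter is the finite quantity $\tfrac{n^{2}-n}{2}$ the space is bounded as well, completing the claim. As an alternative route that avoids the cardinality count, one could instead observe that the diameter is at most $a(n^{2}-n)$ by Lemma~\ref{lem:cauchy} while distinct atoms of the metric are separated by at least $a$, so only finitely many pairwise-disjoint $\tfrac{a}{2}$-balls can be packed into the space; but the finite-set argument is the cleaner one and I would present that.

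The main obstacle here is not mathematical depth but bookkeeping: one must be scrupulous that the statement is read as a claim about the quotient structure induced by $\kappa$ for each fixed $n$, on which Lemmas~\ref{lem:lower_bound}, \ref{lem:cauchy}, and \ref{lem:complete_space} already perform all the substantive work, rather than about the raw extended-real domain. Once that identification is made explicit, the proof is a two-line appeal to Definition~\ref{def:compact} plus the finiteness and boundedness just recalled.
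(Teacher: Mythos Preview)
Your argument is correct and, in fact, more rigorous than the paper's own proof. Both routes invoke Definition~\ref{def:compact} (compact $\iff$ complete and totally bounded) and both import completeness from the earlier lemma, but they diverge on how total boundedness is secured. The paper simply asserts total boundedness on the grounds that every distance lies in the interval $[0,a(n^{2}-n)]$, i.e.\ it conflates \emph{boundedness} of the metric with \emph{total} boundedness. That inference is not valid in an arbitrary metric space; it happens to be harmless here only because the underlying carrier is finite, a fact the paper leaves implicit. You make that finiteness explicit: you pass to the image $\mathcal{M}$ under $\kappa$, observe that each of the $\binom{n}{2}$ free entries takes values in a three-element set, and conclude that a finite set is trivially totally bounded since the balls $\{\overline{B}(m,\varepsilon):m\in\mathcal{M}\}$ form a finite cover for every $\varepsilon>0$. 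Your careful remark that the statement must be read on the quotient structure induced by $\kappa$, rather than on $\overline{\mathbb{R}}^{n}$ itself, is exactly the bookkeeping the paper glosses over, and it is what makes the argument airtight.
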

\begin{proof}
The Kemeny metric is complete by Lemma~\ref{lem:kem_expansion}, and totally bounded, in that it possesses no point of finite $n$ reals which is outside the bounds of $0 \le \rho_{\kappa}(a\kappa^{*}(X_{A}),a\kappa^{*}(X_{B})) \le a(n^{2}-n), \forall\ 0 < a < \infty^{+}$ by Lemma~\ref{lem:complete_space}. Therefore the Kemeny metric space is shown to be both compact and complete and consequently, separably dense as well.
\end{proof}

\begin{theorem}
\label{lem:hilbert}
The Kemeny norm space is a pre-Hilbert space with the property that for finite positive \(a\), holds the following equivalence:
\[\rho_{\kappa}(a\cdot\kappa^{*}(X_{A}),\kappa^{*}(X_{B})) = a\rho_{\kappa}(\kappa^{*}(X_{A}),\kappa^{*}(X_{B})).\] 
\end{theorem}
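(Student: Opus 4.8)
The plan is to assemble the pre-Hilbert claim from the inner-product structure already extracted in Theorem~\ref{lem:cauchy-schwarz}, and then obtain positive homogeneity as essentially a one-line consequence of bilinearity of the Frobenius cross product together with $\sqrt{c^{2}}=c$ for $c>0$. First I would record that $\kappa^{*}$ composed with the Hadamard/Frobenius cross product supplies a symmetric bilinear form $\langle M,N\rangle=\sum_{k,l}M_{kl}N^{\intercal}_{kl}$ on the image of $\kappa^{*}$, and that this form is strictly positive definite on the non-degenerate part of the domain: $\langle\kappa^{*}(X),\kappa^{*}(X)\rangle=\sum_{k,l}(a^{*}_{kl})^{2}$ lies in $(0,\tfrac{n^{2}-n}{2}]$ whenever $X$ is non-constant (the diagonal is forced to $0$ and at least one off-diagonal pair carries $\pm 1$), with value $0$ only on the constant vectors already excluded as giving a degenerate distribution. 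This is exactly an inner product, so $\|\cdot\|:=\sqrt{\langle\cdot,\cdot\rangle}$ is an inner-product (pre-Hilbert) norm; together with the completeness of Lemma~\ref{lem:complete_space} and the compactness and total boundedness of Lemma~\ref{lem:kem_bounded} it is in fact a full Hilbert norm, which is more than the statement asks. If an explicit criterion is preferred, the parallelogram identity $\|M+N\|^{2}+\|M-N\|^{2}=2\|M\|^{2}+2\|N\|^{2}$ follows verbatim by expanding the bilinear form, so the Jordan--von Neumann characterisation applies directly.

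Next I would observe that for $a>0$ the linear ordering of $aX$ coincides with that of $X$, hence $\kappa^{*}(aX)=\kappa^{*}(X)$; the scaling that appears in the statement therefore genuinely acts on the skew-symmetric representative living inside the inner-product space, not on the underlying extended-real vector (where positive rescaling is vacuous). Writing $M=\kappa^{*}(X_{A})$, bilinearity gives $\langle aM,aM\rangle=a^{2}\langle M,M\rangle$, whence $\|aM\|=\sqrt{a^{2}}\,\|M\|=a\|M\|$ for $a>0$ — this is precisely the computation $\|c\mathbf{u}\|=c\|\mathbf{u}\|$ already carried out at the end of the proof of Theorem~\ref{lem:cauchy-schwarz}. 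Since $\rho_{\kappa}$ is, by the construction following equation~\ref{eq:kem_dist}, the norm-induced distance taken from an arbitrary point of origin, I would fix that origin at the all-ties (zero) matrix and transport the identity $\|aM\|=a\|M\|$ through the resulting isometric embedding, obtaining $\rho_{\kappa}(a\,\kappa^{*}(X_{A}),\kappa^{*}(X_{B}))=a\,\rho_{\kappa}(\kappa^{*}(X_{A}),\kappa^{*}(X_{B}))$, i.e. positive conjugate homogeneity in the sense of Definition~\ref{def:hilbert}.

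The main obstacle I expect is bookkeeping rather than mathematics: $\rho_{\kappa}$ is used in the paper simultaneously as a two-argument distance and, when one slot is the origin, as a norm, so the argument must be phrased so that ``scaling by $a$'' is unambiguously scaling of the matrix representative inside the inner-product space, where positive homogeneity is genuine. A secondary point requiring explicit mention is strict positive-definiteness of the form — that the only vectors it annihilates are the constant ones already set aside — so that we have an honest inner product and not merely a semi-inner product; this is immediate from the $\pm 1$ off-diagonal scoring of equation~\ref{eq:kemeny_score1} but should be stated rather than left implicit.
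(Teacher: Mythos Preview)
Your route is genuinely different from the paper's. The paper does not construct an inner product at all; its proof is a pure bounds argument. It invokes Lemma~\ref{lem:lower_bound} to get the lower endpoint $0\cdot a=0$ and Lemma~\ref{lem:cauchy} to get the upper endpoint $a(n^{2}-n)$, and then simply observes that the entire range of attainable Kemeny distances is the sequence $[0,1,\ldots,n^{2}-n]$ rescaled by the scoring parameter $a$. From this it reads off positive homogeneity and concludes ``pre-Hilbert'' via the paper's own (nonstandard) Definition~\ref{def:hilbert}, which asks only for a complete metric space with positive conjugate homogeneity; the proof even says explicitly that the space is ``positive homogeneous while lacking an inner-product construction.'' By contrast, you build the Frobenius bilinear form on the $\kappa^{*}$ image, check strict positive-definiteness off the constant vectors, and obtain $\|aM\|=a\|M\|$ from bilinearity --- the standard meaning of pre-Hilbert as \emph{inner-product space}. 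Your argument is more conventional and more informative, since it actually exhibits the inner product the paper only promises in the corollary following the theorem; the paper's argument is shorter because it recycles the scaling already threaded through Lemmas~\ref{lem:lower_bound}--\ref{lem:complete_space}.

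One point to tighten: the step where you ``transport $\|aM\|=a\|M\|$ through the isometric embedding'' to the two-argument identity $\rho_{\kappa}(aM,N)=a\,\rho_{\kappa}(M,N)$ does not follow for a genuine metric with $N$ arbitrary --- norm homogeneity only gives the $N=0$ case. The paper's proof has the same lacuna, and the resolution in the paper's framework is that $a$ is not really a one-sided scalar on $\kappa^{*}(X_{A})$ but the global scoring parameter carried through \emph{both} $\kappa^{*}$ matrices in all the appendix lemmas (note the $a$ appearing symmetrically in Lemmas~\ref{lem:cauchy} and~\ref{lem:complete_space}). Read that way, the displayed identity is just the statement that all distances scale with $a$, which your bilinearity argument delivers immediately once both slots carry the factor. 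You already anticipated this as a bookkeeping issue; making that reading explicit closes the gap.
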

\begin{proof}
By scalar multiplication, holds the lower bound of $0\cdot{a} =0$, from Lemma~\ref{lem:lower_bound}, and the upper bound (Lemma~\ref{lem:cauchy}) exists with a real number for all finite $n$. The canonical form of the Kemeny distance function is therefore scaled by $a^{-1}a = 1$, and therefore the linear distances are always equivalent to the closed interval $[0,1,\cdots,(n^{2}-n)], \forall\, n\ge 0$ if \(a=1\), and which is otherwise proportional to this sequence by positive finite scalar \(a \in \{(0,\infty^{+})\}\setminus 1\). Thus it is shown that the Kemeny Banach norm-space is also positive homogeneous while lacking an inner-product construction, and is therefore a pre-Hilbert space. 
\end{proof}
\begin{corollary}
From Theorem~\ref{lem:hilbert} we have justified the valid existence of an inner-product formulation for the Kemeny pre-Hilbert metric space, by definition, which is provided in equation~\ref{eq:kem_dist}. 


\end{corollary}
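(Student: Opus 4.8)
The plan is to read the Corollary as the assertion that the bilinear expression carried by equation~\ref{eq:kem_dist} --- in normalised form, equation~\ref{eq:kem_cor} --- is a genuine inner product on the image of \(\kappa\) whose induced norm is the pre-Hilbert norm established in Theorem~\ref{lem:hilbert}, so that that space is in fact a Hilbert space. First I would invoke the Jordan--von Neumann characterisation: a complete, positively homogeneous norm obeying the parallelogram law is induced by a unique inner product, recoverable through the real polarisation identity \(\langle u,v\rangle=\tfrac{1}{4}(\|u+v\|^{2}-\|u-v\|^{2})\). Theorem~\ref{lem:hilbert} (positive homogeneity), Lemma~\ref{thm:convexity} (convexity and subadditivity), and Lemma~\ref{lem:complete_space} (completeness) jointly supply the normed-space scaffolding to which this applies, so it suffices to produce the compatible inner product explicitly and check that it coincides with equation~\ref{eq:kem_cor}.

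Second, I would exhibit that form concretely. The map \(\kappa\) carries an extended-real \(n\)-vector into the \(\tfrac{n^{2}-n}{2}\)-dimensional real vector space \(\mathrm{Skew}_{n}\) of skew-symmetric \(n\times n\) matrices, on which the Frobenius pairing \(\langle A,B\rangle_{F}=\sum_{k,l}A_{kl}B_{kl}\) is a bona fide inner product; its restriction to any subset remains a symmetric bilinear form. Since \(\kappa(Y)^{\intercal}=-\kappa(Y)\), the Hadamard-sum term of equation~\ref{eq:kem_dist} is exactly \(-\langle\kappa(X),\kappa(Y)\rangle_{F}\), so the \(\langle X,Y\rangle_{\kappa}\) of equation~\ref{eq:kem_cor} equals the positive scalar multiple \(\tfrac{2}{n^{2}-n}\langle\kappa(X),\kappa(Y)\rangle_{F}\). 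Symmetry and bilinearity are inherited verbatim; non-degeneracy over the admissible domain is Lemma~\ref{lem:kem_bounded} together with the almost-sure positivity of \(\sigma^{2}_{\kappa}\) from equation~\ref{eq:kem_variance}, the form vanishing only on constant vectors, which are excluded as degenerate distributions. Hence \(\langle\cdot,\cdot\rangle_{\kappa}\) is an inner product whose induced norm satisfies \(\|X\|_{\kappa}^{2}=\tfrac{2}{n^{2}-n}\|\kappa(X)\|_{F}^{2}\), equal to \(1\) on the tie-free locus and otherwise totally bounded by Lemma~\ref{lem:kem_bounded}.

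Third, I would connect this to Theorem~\ref{lem:hilbert} by polarisation: expanding \(\|\kappa(X)-\kappa(Y)\|_{F}^{2}=\|\kappa(X)\|_{F}^{2}+\|\kappa(Y)\|_{F}^{2}-2\langle\kappa(X),\kappa(Y)\rangle_{F}\) and comparing with equation~\ref{eq:kem_dist}, on the tie-free locus \(\|\kappa(X)\|_{F}^{2}=\|\kappa(Y)\|_{F}^{2}=\tfrac{n^{2}-n}{2}\), so that \(\rho_{\kappa}(X,Y)=\tfrac{1}{2}\|\kappa(X)-\kappa(Y)\|_{F}^{2}=\tfrac{n^{2}-n}{4}\|X-Y\|_{\kappa}^{2}\), exhibiting the Kemeny distance as a fixed constant times a squared inner-product-norm distance, exactly as a Hilbert structure demands. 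Uniqueness of the compatible inner product then identifies it with equation~\ref{eq:kem_cor} and closes the Corollary.

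The main obstacle, and the point I expect to need the most care, is that this identification is clean only on \(\mathrm{S}_{n}\): once ties are admitted \(\|\kappa(X)\|_{F}^{2}<\tfrac{n^{2}-n}{2}\), so the affine constant \(\tfrac{n^{2}-n}{2}\) in equation~\ref{eq:kem_dist} no longer equals \(\tfrac{1}{2}(\|\kappa(X)\|_{F}^{2}+\|\kappa(Y)\|_{F}^{2})\) and \(\rho_{\kappa}\) deviates from \(\tfrac{1}{2}\|\kappa(X)-\kappa(Y)\|_{F}^{2}\) by a tie-dependent amount. The clean resolution is to retain the Frobenius structure on \(\mathrm{Skew}_{n}\) as the Hilbert structure, so that the genuine squared norm-distance is \(\sum_{k,l}(\kappa(X)_{kl}-\kappa(Y)_{kl})^{2}\) with equation~\ref{eq:kem_cor} its polarisation, and to read equation~\ref{eq:kem_dist} as the associated affine Bregman divergence rather than the metric itself, consistent with the divergence observation accompanying Theorem~\ref{lem:decreasing_bounded_below}. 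Making that distinction precise is the one substantive step; the rest is formal, flowing from Theorem~\ref{lem:hilbert} and the finite-dimensionality of \(\mathrm{Skew}_{n}\).
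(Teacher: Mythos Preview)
Your proposal is substantially more detailed than the paper's own treatment, which supplies no proof at all: the Corollary is stated as following ``by definition'' from Theorem~\ref{lem:hilbert} and simply points back to equation~\ref{eq:kem_dist}. The paper's Definition~\ref{def:hilbert} is non-standard (it characterises a Hilbert space as a complete Banach space with positive homogeneity, rather than a complete inner-product space), so the intended logic is roughly: ``we have shown the norm is positively homogeneous and complete, hence an inner product exists by definition,'' with equation~\ref{eq:kem_dist} exhibited as that inner product without further verification.

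What you do differently is to supply the missing content. You embed the image of \(\kappa\) into the finite-dimensional space \(\mathrm{Skew}_{n}\) with its Frobenius pairing, observe that equation~\ref{eq:kem_cor} is a fixed positive scalar multiple of \(\langle\kappa(X),\kappa(Y)\rangle_{F}\), and inherit bilinearity, symmetry, and positivity from that ambient structure rather than appealing to a Jordan--von~Neumann argument that the paper never sets up (the parallelogram law is nowhere verified). This is the honest route: the inner product is not being \emph{deduced} from the pre-Hilbert claim, it is being \emph{exhibited} as a restriction of a known one, and Theorem~\ref{lem:hilbert} then becomes a consequence rather than a premise. Your observation about the tie locus --- that \(\rho_{\kappa}\) in equation~\ref{eq:kem_dist} agrees with \(\tfrac{1}{2}\|\kappa(X)-\kappa(Y)\|_{F}^{2}\) only when \(\|\kappa(X)\|_{F}^{2}=\|\kappa(Y)\|_{F}^{2}=\tfrac{n^{2}-n}{2}\), and otherwise differs by a tie-dependent affine shift --- is correct and is a genuine subtlety the paper glosses over; your reading of \(\rho_{\kappa}\) as the associated Bregman-type divergence rather than the squared norm distance is a sensible resolution and is consistent with the paper's own remark following Theorem~\ref{lem:decreasing_bounded_below}.

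In short: the paper treats the Corollary as a tautology; you give an actual construction, and in doing so surface a distinction (metric versus divergence on the tied locus) that the paper does not make explicit. Your argument is sound; the only caution is that invoking Jordan--von~Neumann in your first paragraph is unnecessary scaffolding, since you never verify the parallelogram law and your Frobenius embedding makes it redundant anyway.
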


\end{document}